\newcommand{\BASE}{\mathsf{BASE}}
\newcommand{\DG}{\mathsf{DG}}
\newcommand{\redk}{\mathrm{red}_{\mathrm{k}}}
\newcommand{\SN}{\mathsf{SN}}
\newcommand{\SAT}{\mathsf{SAT}}
\newcommand{\T}{\mathcal{T}}
\newcommand{\interp}[2][\sigma]{\llbracket #2 \rrbracket_{#1}}
\newcommand{\interpm}[2]{\llbracket #2 \rrbracket_{#1}}
\newcommand{\subst}[2][\rho]{[ #2 ]_{#1}}
\newcommand{\substm}[2]{[ #2 ]_{#1}}
\newcommand{\ptsCtxI}[1]{\llbracket #1 \rrbracket}
\newcommand{\ptsTypI}[3][\gamma]{\llbracket #2  \vdash_{t}  #3 \rrbracket_{(#1)}}
\newcommand{\ptsTrmI}[3][\gamma]{[ #2  \vdash_{t}  #3 ]_{(#1)}}
\newcommand{\fomTrmI}[2][\Gamma]{[ #2 ]_{#1}}
\newcommand{\fomTypI}[2][\Gamma]{\llbracket #2 \rrbracket_{#1}}
\newcommand{\fomCtxI}[1]{\llbracket #1 \rrbracket}
\newcommand{\tvZ}{0}
\newtheoremstyle{better}{\parsep}{\topsep}%
     {}
     {}
     {\bfseries}
     {.}
     {0.5em}
     {} 
\theoremstyle{better}
\newtheorem{lemma}{Lemma}[section]
\newtheorem{theorem}[lemma]{Theorem}
\newtheorem{definition}[lemma]{Definition}
\newtheorem*{lemma*}{Lemma}
\newtheorem*{theorem*}{Theorem}
\newcommand{\D}{\mathcal{D}}
\newcommand{\lowered}[1]{\raisebox{-0.63\baselineskip}{\ensuremath{#1}}}
\renewcommand{\infer}[3][]{\inferrule*[right=#1]{#2}{#3}}
\newcommand{\rbracket}{]} 
\newcommand{\ext}[3]{#1[#2\mapsto #3\rbracket}
\newcommand{\sorts}{\mathcal{S}}
\newcommand{\axioms}{\mathcal{A}}
\newcommand{\rules}{\mathcal{R}}
\newcommand{\lin}{\sqsubset}
\newcommand{\EE}{\mathfrak{E}}
\newcommand{\EA}{\mathfrak{A}}
\newcommand{\EAup}[1]{\EA^{\Uparrow}(#1)}
\newcommand{\EAdn}[1]{\EA_{\Downarrow}(#1)}
\newcommand{\drop}[1]{\Downarrow_{#1}}
\newcommand{\lift}[1]{\Uparrow_{#1}}
\newcommand{\eqbig}[4]{#3 \cong^{#1}_{#2} #4}
\newcommand{\eqsmall}[4]{#3 \equiv^{#1}_{#2} #4}
\newcommand{\sarr}[2]{\Pi(#1,\,#2)}
\newcommand{\sarrdn}[2]{\Pi_{\downarrow}(#1,\,#2)}
\newcommand{\isodn}[2]{\downarrow_{\sarr{#1}{#2}}}
\newcommand{\isoup}[2]{\uparrow_{\sarr{#1}{#2}}}
\newcommand{\lev}{\mathsf{lev}}
\newcommand{\sortstar}{\ast}
\newcommand{\sortbox}{\Box}
\newif\ifcomments\commentsfalse
\newcommand{\cjc}[1]{\textcolor{blue}{\textbf{[#1 ---CJC]}}}
\newcommand{\cjc}[1]{}
\begin{document}
\newcommand{\ottdrule}[4][]{{\displaystyle\frac{\begin{array}{l}#2\end{array}}{#3}\quad\ottdrulename{#4}}}
\newcommand{\ottusedrule}[1]{\[#1\]}
\newcommand{\ottpremise}[1]{ #1 \\}
\newenvironment{ottdefnblock}[3][]{ \framebox{\mbox{#2}} \quad #3 \\[0pt]}{}
\newenvironment{ottfundefnblock}[3][]{ \framebox{\mbox{#2}} \quad #3 \\[0pt]\begin{displaymath}\begin{array}{l}}{\end{array}\end{displaymath}}
\newcommand{\ottfunclause}[2]{ #1 \equiv #2 \\}
\newcommand{\ottnt}[1]{\mathit{#1}}
\newcommand{\ottmv}[1]{\mathit{#1}}
\newcommand{\ottkw}[1]{\mathbf{#1}}
\newcommand{\ottsym}[1]{#1}
\newcommand{\ottcom}[1]{\text{#1}}
\newcommand{\ottdrulename}[1]{\textsc{#1}}
\newcommand{\ottcomplu}[5]{\overline{#1}^{\,#2\in #3 #4 #5}}
\newcommand{\ottcompu}[3]{\overline{#1}^{\,#2<#3}}
\newcommand{\ottcomp}[2]{\overline{#1}^{\,#2}}
\newcommand{\ottgrammartabular}[1]{\begin{supertabular}{llcllllll}#1\end{supertabular}}
\newcommand{\ottmetavartabular}[1]{\begin{supertabular}{ll}#1\end{supertabular}}
\newcommand{\ottrulehead}[3]{$#1$ & & $#2$ & & & \multicolumn{2}{l}{#3}}
\newcommand{\ottprodline}[6]{& & $#1$ & $#2$ & $#3 #4$ & $#5$ & $#6$}
\newcommand{\ottfirstprodline}[6]{\ottprodline{#1}{#2}{#3}{#4}{#5}{#6}}
\newcommand{\ottlongprodline}[2]{& & $#1$ & \multicolumn{4}{l}{$#2$}}
\newcommand{\ottfirstlongprodline}[2]{\ottlongprodline{#1}{#2}}
\newcommand{\ottbindspecprodline}[6]{\ottprodline{#1}{#2}{#3}{#4}{#5}{#6}}
\newcommand{\ottprodnewline}{\\}
\newcommand{\ottinterrule}{\\[5.0mm]}
\newcommand{\ottafterlastrule}{\\}
\newcommand{\ottmetavars}{
\ottmetavartabular{
 $ \ottmv{var} ,\, \ottmv{x} ,\, \ottmv{y} ,\, \ottmv{z} ,\, \ottmv{w} $ &  \\
 $ \ottmv{n} $ &  \\
}}

\newcommand{\otts}{
\ottrulehead{\ottnt{s}}{::=}{}}

\newcommand{\ottexp}{
\ottrulehead{\ottnt{exp}  ,\ \ottnt{A}  ,\ \ottnt{B}  ,\ \ottnt{F}  ,\ \ottnt{a}  ,\ \ottnt{b}}{::=}{}\ottprodnewline
\ottfirstprodline{|}{\ottnt{s}}{}{}{}{}\ottprodnewline
\ottprodline{|}{\ottmv{x}}{}{}{}{}\ottprodnewline
\ottprodline{|}{\ottsym{(}  \ottmv{x}  \ottsym{:}  \ottnt{A}  \ottsym{)}  \to  \ottnt{B}}{}{\textsf{bind}\; \ottmv{x}\; \textsf{in}\; \ottnt{B}}{}{}\ottprodnewline
\ottprodline{|}{\lambda  \ottmv{x}  \ottsym{:}  \ottnt{A}  \ottsym{.}  \ottnt{b}}{}{\textsf{bind}\; \ottmv{x}\; \textsf{in}\; \ottnt{b}}{}{}\ottprodnewline
\ottprodline{|}{\ottnt{a} \, \ottnt{b}}{}{}{}{}}

\newcommand{\ottctx}{
\ottrulehead{\ottnt{ctx}  ,\ \Gamma}{::=}{}\ottprodnewline
\ottfirstprodline{|}{ \cdot }{}{}{}{}\ottprodnewline
\ottprodline{|}{ \Gamma , \ottmv{x} \!:\! \ottnt{A} }{}{}{}{}\ottprodnewline
\ottprodline{|}{ \Gamma ,w^{ \ottmv{x} }\!:\! \ottnt{A} }{}{}{}{}\ottprodnewline
\ottprodline{|}{ \fomCtxI{ \Gamma } }{}{}{}{}}

\newcommand{\otttexp}{
\ottrulehead{exp  ,\ A  ,\ B  ,\ a  ,\ b}{::=}{}\ottprodnewline
\ottfirstprodline{|}{\ottnt{s}}{}{}{}{}\ottprodnewline
\ottprodline{|}{\ottmv{x}}{}{}{}{}\ottprodnewline
\ottprodline{|}{\ottsym{(}  \ottmv{x}  \ottsym{:}  A  \ottsym{)}  \to  B}{}{\textsf{bind}\; \ottmv{x}\; \textsf{in}\; B}{}{}\ottprodnewline
\ottprodline{|}{ \lambda _{( \ottmv{x} : A ) \to B }. b }{}{\textsf{bind}\; \ottmv{x}\; \textsf{in}\; b}{}{}\ottprodnewline
\ottbindspecprodline{}{}{}{\textsf{bind}\; \ottmv{x}\; \textsf{in}\; B}{}{}\ottprodnewline
\ottprodline{|}{ \mathsf{app} _{( \ottmv{x} : A ) \to B } ( a , b ) }{}{}{}{}}

\newcommand{\otttctx}{
\ottrulehead{ctx  ,\ \Gamma}{::=}{}\ottprodnewline
\ottfirstprodline{|}{ \cdot }{}{}{}{}\ottprodnewline
\ottprodline{|}{ \Gamma , \ottmv{x} \!:\! A }{}{}{}{}}

\newcommand{\ottgrammar}{\ottgrammartabular{
\otts\ottinterrule
\ottexp\ottinterrule
\ottctx\ottinterrule
\otttexp\ottinterrule
\otttctx\ottafterlastrule
}}

\newcommand{\ottdruleSBeta}[1]{\ottdrule[#1]{%
}{
\ottsym{(}  \lambda  \ottmv{x}  \ottsym{:}  \ottnt{A}  \ottsym{.}  \ottnt{b}  \ottsym{)} \, \ottnt{a}  \leadsto  \ottsym{[}  \ottnt{a}  \ottsym{/}  \ottmv{x}  \ottsym{]}  \ottnt{b}}{%
{\ottdrulename{SBeta}}{}%
}}

\newcommand{\ottdruleSPiOne}[1]{\ottdrule[#1]{%
\ottpremise{\ottnt{A}  \leadsto  \ottnt{A'}}%
}{
\ottsym{(}  \ottmv{x}  \ottsym{:}  \ottnt{A}  \ottsym{)}  \to  \ottnt{B}  \leadsto  \ottsym{(}  \ottmv{x}  \ottsym{:}  \ottnt{A'}  \ottsym{)}  \to  \ottnt{B}}{%
{\ottdrulename{SPi1}}{}%
}}

\newcommand{\ottdruleSPiTwo}[1]{\ottdrule[#1]{%
\ottpremise{\ottnt{B}  \leadsto  \ottnt{B'}}%
}{
\ottsym{(}  \ottmv{x}  \ottsym{:}  \ottnt{A}  \ottsym{)}  \to  \ottnt{B}  \leadsto  \ottsym{(}  \ottmv{x}  \ottsym{:}  \ottnt{A}  \ottsym{)}  \to  \ottnt{B'}}{%
{\ottdrulename{SPi2}}{}%
}}

\newcommand{\ottdruleSLamOne}[1]{\ottdrule[#1]{%
\ottpremise{\ottnt{A}  \leadsto  \ottnt{A'}}%
}{
\lambda  \ottmv{x}  \ottsym{:}  \ottnt{A}  \ottsym{.}  \ottnt{b}  \leadsto  \lambda  \ottmv{x}  \ottsym{:}  \ottnt{A'}  \ottsym{.}  \ottnt{b}}{%
{\ottdrulename{SLam1}}{}%
}}

\newcommand{\ottdruleSLamTwo}[1]{\ottdrule[#1]{%
\ottpremise{\ottnt{b}  \leadsto  \ottnt{b'}}%
}{
\lambda  \ottmv{x}  \ottsym{:}  \ottnt{A}  \ottsym{.}  \ottnt{b}  \leadsto  \lambda  \ottmv{x}  \ottsym{:}  \ottnt{A}  \ottsym{.}  \ottnt{b'}}{%
{\ottdrulename{SLam2}}{}%
}}

\newcommand{\ottdruleSAppOne}[1]{\ottdrule[#1]{%
\ottpremise{\ottnt{a}  \leadsto  \ottnt{a'}}%
}{
\ottnt{a} \, \ottnt{b}  \leadsto  \ottnt{a'} \, \ottnt{b}}{%
{\ottdrulename{SApp1}}{}%
}}

\newcommand{\ottdruleSAppTwo}[1]{\ottdrule[#1]{%
\ottpremise{\ottnt{b}  \leadsto  \ottnt{b'}}%
}{
\ottnt{a} \, \ottnt{b}  \leadsto  \ottnt{a} \, \ottnt{b'}}{%
{\ottdrulename{SApp2}}{}%
}}

\newcommand{\ottdefnstep}[1]{\begin{ottdefnblock}[#1]{$\ottnt{a}  \leadsto  \ottnt{b}$}{}
\ottusedrule{\ottdruleSBeta{}}
\ottusedrule{\ottdruleSPiOne{}}
\ottusedrule{\ottdruleSPiTwo{}}
\ottusedrule{\ottdruleSLamOne{}}
\ottusedrule{\ottdruleSLamTwo{}}
\ottusedrule{\ottdruleSAppOne{}}
\ottusedrule{\ottdruleSAppTwo{}}
\end{ottdefnblock}}

\newcommand{\ottdefnsJstep}{
\ottdefnstep{}}

\newcommand{\ottdruleMSRefl}[1]{\ottdrule[#1]{%
}{
\ottnt{a}  \leadsto^{*}  \ottnt{a}}{%
{\ottdrulename{MSRefl}}{}%
}}

\newcommand{\ottdruleMSStep}[1]{\ottdrule[#1]{%
\ottpremise{\ottnt{a_{{\mathrm{1}}}}  \leadsto  \ottnt{a_{{\mathrm{2}}}}}%
\ottpremise{\ottnt{a_{{\mathrm{2}}}}  \leadsto^{*}  \ottnt{a_{{\mathrm{3}}}}}%
}{
\ottnt{a_{{\mathrm{1}}}}  \leadsto^{*}  \ottnt{a_{{\mathrm{3}}}}}{%
{\ottdrulename{MSStep}}{}%
}}

\newcommand{\ottdefnmstep}[1]{\begin{ottdefnblock}[#1]{$\ottnt{a}  \leadsto^{*}  \ottnt{b}$}{}
\ottusedrule{\ottdruleMSRefl{}}
\ottusedrule{\ottdruleMSStep{}}
\end{ottdefnblock}}

\newcommand{\ottdefnsJMstep}{
\ottdefnmstep{}}

\newcommand{\ottdruleCVRefl}[1]{\ottdrule[#1]{%
}{
\ottnt{a}  =_{\beta}  \ottnt{a}}{%
{\ottdrulename{CVRefl}}{}%
}}

\newcommand{\ottdruleCVExp}[1]{\ottdrule[#1]{%
\ottpremise{\ottnt{a_{{\mathrm{1}}}}  =_{\beta}  \ottnt{a_{{\mathrm{2}}}}}%
\ottpremise{\ottnt{a_{{\mathrm{3}}}}  \leadsto  \ottnt{a_{{\mathrm{2}}}}}%
}{
\ottnt{a_{{\mathrm{1}}}}  =_{\beta}  \ottnt{a_{{\mathrm{3}}}}}{%
{\ottdrulename{CVExp}}{}%
}}

\newcommand{\ottdruleCVRed}[1]{\ottdrule[#1]{%
\ottpremise{\ottnt{a_{{\mathrm{1}}}}  =_{\beta}  \ottnt{a_{{\mathrm{2}}}}}%
\ottpremise{\ottnt{a_{{\mathrm{2}}}}  \leadsto  \ottnt{a_{{\mathrm{3}}}}}%
}{
\ottnt{a_{{\mathrm{1}}}}  =_{\beta}  \ottnt{a_{{\mathrm{3}}}}}{%
{\ottdrulename{CVRed}}{}%
}}

\newcommand{\ottdefnconv}[1]{\begin{ottdefnblock}[#1]{$\ottnt{a}  =_{\beta}  \ottnt{b}$}{}
\ottusedrule{\ottdruleCVRefl{}}
\ottusedrule{\ottdruleCVExp{}}
\ottusedrule{\ottdruleCVRed{}}
\end{ottdefnblock}}

\newcommand{\ottdefnsJConv}{
\ottdefnconv{}}

\newcommand{\ottdruleTSort}[1]{\ottdrule[#1]{%
\ottpremise{\vdash  \Gamma}%
\ottpremise{ ( \ottnt{s_{{\mathrm{1}}}} , \ottnt{s_{{\mathrm{2}}}} ) \in \mathcal{A} }%
}{
\Gamma  \vdash  \ottnt{s_{{\mathrm{1}}}}  \ottsym{:}  \ottnt{s_{{\mathrm{2}}}}}{%
{\ottdrulename{TSort}}{}%
}}

\newcommand{\ottdruleTVar}[1]{\ottdrule[#1]{%
\ottpremise{\vdash  \Gamma}%
\ottpremise{ ( \ottmv{x} : \ottnt{A} ) \in  \Gamma }%
}{
\Gamma  \vdash  \ottmv{x}  \ottsym{:}  \ottnt{A}}{%
{\ottdrulename{TVar}}{}%
}}

\newcommand{\ottdruleTPi}[1]{\ottdrule[#1]{%
\ottpremise{\Gamma  \vdash  \ottnt{A}  \ottsym{:}  \ottnt{s_{{\mathrm{1}}}}}%
\ottpremise{ \Gamma , \ottmv{x} \!:\! \ottnt{A}   \vdash  \ottnt{B}  \ottsym{:}  \ottnt{s_{{\mathrm{2}}}}}%
\ottpremise{ ( \ottnt{s_{{\mathrm{1}}}} , \ottnt{s_{{\mathrm{2}}}} , \ottnt{s_{{\mathrm{3}}}} ) \in \mathcal{R} }%
}{
\Gamma  \vdash  \ottsym{(}  \ottmv{x}  \ottsym{:}  \ottnt{A}  \ottsym{)}  \to  \ottnt{B}  \ottsym{:}  \ottnt{s_{{\mathrm{3}}}}}{%
{\ottdrulename{TPi}}{}%
}}

\newcommand{\ottdruleTLam}[1]{\ottdrule[#1]{%
\ottpremise{\Gamma  \vdash  \ottsym{(}  \ottmv{x}  \ottsym{:}  \ottnt{A}  \ottsym{)}  \to  \ottnt{B}  \ottsym{:}  \ottnt{s}}%
\ottpremise{ \Gamma , \ottmv{x} \!:\! \ottnt{A}   \vdash  \ottnt{b}  \ottsym{:}  \ottnt{B}}%
}{
\Gamma  \vdash  \lambda  \ottmv{x}  \ottsym{:}  \ottnt{A}  \ottsym{.}  \ottnt{b}  \ottsym{:}  \ottsym{(}  \ottmv{x}  \ottsym{:}  \ottnt{A}  \ottsym{)}  \to  \ottnt{B}}{%
{\ottdrulename{TLam}}{}%
}}

\newcommand{\ottdruleTApp}[1]{\ottdrule[#1]{%
\ottpremise{\Gamma  \vdash  \ottnt{a}  \ottsym{:}  \ottsym{(}  \ottmv{x}  \ottsym{:}  \ottnt{A}  \ottsym{)}  \to  \ottnt{B}}%
\ottpremise{\Gamma  \vdash  \ottnt{b}  \ottsym{:}  \ottnt{A}}%
}{
\Gamma  \vdash  \ottnt{a} \, \ottnt{b}  \ottsym{:}  \ottsym{[}  \ottnt{b}  \ottsym{/}  \ottmv{x}  \ottsym{]}  \ottnt{B}}{%
{\ottdrulename{TApp}}{}%
}}

\newcommand{\ottdruleTConv}[1]{\ottdrule[#1]{%
\ottpremise{ \Gamma  \vdash  \ottnt{a}  \ottsym{:}  \ottnt{A}  \qquad  \Gamma  \vdash  \ottnt{B}  \ottsym{:}  \ottnt{s} }%
\ottpremise{\ottnt{A}  =_{\beta}  \ottnt{B}}%
}{
\Gamma  \vdash  \ottnt{a}  \ottsym{:}  \ottnt{B}}{%
{\ottdrulename{TConv}}{}%
}}

\newcommand{\ottdefntyp}[1]{\begin{ottdefnblock}[#1]{$\Gamma  \vdash  \ottnt{a}  \ottsym{:}  \ottnt{A}$}{}
\ottusedrule{\ottdruleTSort{}}
\ottusedrule{\ottdruleTVar{}}
\ottusedrule{\ottdruleTPi{}}
\ottusedrule{\ottdruleTLam{}}
\ottusedrule{\ottdruleTApp{}}
\ottusedrule{\ottdruleTConv{}}
\end{ottdefnblock}}

\newcommand{\ottdruleCNil}[1]{\ottdrule[#1]{%
}{
\vdash   \cdot }{%
{\ottdrulename{CNil}}{}%
}}

\newcommand{\ottdruleCCons}[1]{\ottdrule[#1]{%
\ottpremise{ \ottmv{x}  \notin \mathsf{dom}( \Gamma ) }%
\ottpremise{\Gamma  \vdash  \ottnt{A}  \ottsym{:}  \ottnt{s}}%
}{
\vdash   \Gamma , \ottmv{x} \!:\! \ottnt{A} }{%
{\ottdrulename{CCons}}{}%
}}

\newcommand{\ottdefnctyp}[1]{\begin{ottdefnblock}[#1]{$\vdash  \Gamma$}{}
\ottusedrule{\ottdruleCNil{}}
\ottusedrule{\ottdruleCCons{}}
\end{ottdefnblock}}

\newcommand{\ottdefnsJTyp}{
\ottdefntyp{}\ottdefnctyp{}}

\newcommand{\ottdruleTSBeta}[1]{\ottdrule[#1]{%
}{
 \mathsf{app} _{( \ottmv{x} : A ) \to B } ( \ottsym{(}   \lambda _{( \ottmv{x} : A ) \to B }. b   \ottsym{)} , a )   \leadsto_{t}  \ottsym{[}  a  \ottsym{/}  \ottmv{x}  \ottsym{]}  b}{%
{\ottdrulename{TSBeta}}{}%
}}

\newcommand{\ottdruleTSPiOne}[1]{\ottdrule[#1]{%
\ottpremise{A  \leadsto_{t}  A'}%
}{
\ottsym{(}  \ottmv{x}  \ottsym{:}  A  \ottsym{)}  \to  B  \leadsto_{t}  \ottsym{(}  \ottmv{x}  \ottsym{:}  A'  \ottsym{)}  \to  B}{%
{\ottdrulename{TSPi1}}{}%
}}

\newcommand{\ottdruleTSPiTwo}[1]{\ottdrule[#1]{%
\ottpremise{B  \leadsto_{t}  B'}%
}{
\ottsym{(}  \ottmv{x}  \ottsym{:}  A  \ottsym{)}  \to  B  \leadsto_{t}  \ottsym{(}  \ottmv{x}  \ottsym{:}  A  \ottsym{)}  \to  B'}{%
{\ottdrulename{TSPi2}}{}%
}}

\newcommand{\ottdruleTSLamOne}[1]{\ottdrule[#1]{%
\ottpremise{A  \leadsto_{t}  A'}%
}{
 \lambda _{( \ottmv{x} : A ) \to B }. b   \leadsto_{t}   \lambda _{( \ottmv{x} : A' ) \to B }. b }{%
{\ottdrulename{TSLam1}}{}%
}}

\newcommand{\ottdruleTSLamTwo}[1]{\ottdrule[#1]{%
\ottpremise{B  \leadsto_{t}  B'}%
}{
 \lambda _{( \ottmv{x} : A ) \to B }. b   \leadsto_{t}   \lambda _{( \ottmv{x} : A ) \to B }. b }{%
{\ottdrulename{TSLam2}}{}%
}}

\newcommand{\ottdruleTSLamThree}[1]{\ottdrule[#1]{%
\ottpremise{b  \leadsto_{t}  b'}%
}{
 \lambda _{( \ottmv{x} : A ) \to B }. b   \leadsto_{t}   \lambda _{( \ottmv{x} : A ) \to B }. b' }{%
{\ottdrulename{TSLam3}}{}%
}}

\newcommand{\ottdruleTSAppOne}[1]{\ottdrule[#1]{%
\ottpremise{A  \leadsto_{t}  A'}%
}{
 \mathsf{app} _{( \ottmv{x} : A ) \to B } ( a , b )   \leadsto_{t}   \mathsf{app} _{( \ottmv{x} : A' ) \to B } ( a , b ) }{%
{\ottdrulename{TSApp1}}{}%
}}

\newcommand{\ottdruleTSAppTwo}[1]{\ottdrule[#1]{%
\ottpremise{B  \leadsto_{t}  B'}%
}{
 \mathsf{app} _{( \ottmv{x} : A ) \to B } ( a , b )   \leadsto_{t}   \mathsf{app} _{( \ottmv{x} : A ) \to B' } ( a , b ) }{%
{\ottdrulename{TSApp2}}{}%
}}

\newcommand{\ottdruleTSAppThree}[1]{\ottdrule[#1]{%
\ottpremise{a  \leadsto_{t}  a'}%
}{
 \mathsf{app} _{( \ottmv{x} : A ) \to B } ( a , b )   \leadsto_{t}   \mathsf{app} _{( \ottmv{x} : A ) \to B } ( a' , b ) }{%
{\ottdrulename{TSApp3}}{}%
}}

\newcommand{\ottdruleTSAppFour}[1]{\ottdrule[#1]{%
\ottpremise{b  \leadsto_{t}  b'}%
}{
 \mathsf{app} _{( \ottmv{x} : A ) \to B } ( a , b )   \leadsto_{t}   \mathsf{app} _{( \ottmv{x} : A ) \to B } ( a , b' ) }{%
{\ottdrulename{TSApp4}}{}%
}}

\newcommand{\ottdefntstep}[1]{\begin{ottdefnblock}[#1]{$a  \leadsto_{t}  b$}{}
\ottusedrule{\ottdruleTSBeta{}}
\ottusedrule{\ottdruleTSPiOne{}}
\ottusedrule{\ottdruleTSPiTwo{}}
\ottusedrule{\ottdruleTSLamOne{}}
\ottusedrule{\ottdruleTSLamTwo{}}
\ottusedrule{\ottdruleTSLamThree{}}
\ottusedrule{\ottdruleTSAppOne{}}
\ottusedrule{\ottdruleTSAppTwo{}}
\ottusedrule{\ottdruleTSAppThree{}}
\ottusedrule{\ottdruleTSAppFour{}}
\end{ottdefnblock}}

\newcommand{\ottdefnsJTstep}{
\ottdefntstep{}}

\newcommand{\ottdruleMTSRefl}[1]{\ottdrule[#1]{%
}{
a  \leadsto^{*}_{t}  a}{%
{\ottdrulename{MTSRefl}}{}%
}}

\newcommand{\ottdruleMTSStep}[1]{\ottdrule[#1]{%
\ottpremise{a_{{\mathrm{1}}}  \leadsto_{t}  a_{{\mathrm{2}}}}%
\ottpremise{a_{{\mathrm{2}}}  \leadsto^{*}_{t}  a_{{\mathrm{3}}}}%
}{
a_{{\mathrm{1}}}  \leadsto^{*}_{t}  a_{{\mathrm{3}}}}{%
{\ottdrulename{MTSStep}}{}%
}}

\newcommand{\ottdefnmtstep}[1]{\begin{ottdefnblock}[#1]{$a  \leadsto^{*}_{t}  b$}{}
\ottusedrule{\ottdruleMTSRefl{}}
\ottusedrule{\ottdruleMTSStep{}}
\end{ottdefnblock}}

\newcommand{\ottdefnsJMTstep}{
\ottdefnmtstep{}}

\newcommand{\ottdruleTTSort}[1]{\ottdrule[#1]{%
\ottpremise{\vdash_{t}  \Gamma}%
\ottpremise{ ( \ottnt{s_{{\mathrm{1}}}} , \ottnt{s_{{\mathrm{2}}}} ) \in \mathcal{A} }%
}{
\Gamma  \vdash_{t}  \ottnt{s_{{\mathrm{1}}}}  \ottsym{:}  \ottnt{s_{{\mathrm{2}}}}}{%
{\ottdrulename{TTSort}}{}%
}}

\newcommand{\ottdruleTTVar}[1]{\ottdrule[#1]{%
\ottpremise{\vdash_{t}  \Gamma}%
\ottpremise{ ( \ottmv{x} : A ) \in  \Gamma }%
}{
\Gamma  \vdash_{t}  \ottmv{x}  \ottsym{:}  A}{%
{\ottdrulename{TTVar}}{}%
}}

\newcommand{\ottdruleTTPi}[1]{\ottdrule[#1]{%
\ottpremise{\Gamma  \vdash_{t}  A  \ottsym{:}  \ottnt{s_{{\mathrm{1}}}}}%
\ottpremise{ \Gamma , \ottmv{x} \!:\! A   \vdash_{t}  B  \ottsym{:}  \ottnt{s_{{\mathrm{2}}}}}%
\ottpremise{ ( \ottnt{s_{{\mathrm{1}}}} , \ottnt{s_{{\mathrm{2}}}} , \ottnt{s_{{\mathrm{3}}}} ) \in \mathcal{R} }%
}{
\Gamma  \vdash_{t}  \ottsym{(}  \ottmv{x}  \ottsym{:}  A  \ottsym{)}  \to  B  \ottsym{:}  \ottnt{s_{{\mathrm{3}}}}}{%
{\ottdrulename{TTPi}}{}%
}}

\newcommand{\ottdruleTTLam}[1]{\ottdrule[#1]{%
\ottpremise{\Gamma  \vdash_{t}  \ottsym{(}  \ottmv{x}  \ottsym{:}  A  \ottsym{)}  \to  B  \ottsym{:}  \ottnt{s}}%
\ottpremise{ \Gamma , \ottmv{x} \!:\! A   \vdash_{t}  b  \ottsym{:}  B}%
}{
\Gamma  \vdash_{t}   \lambda _{( \ottmv{x} : A ) \to B }. b   \ottsym{:}  \ottsym{(}  \ottmv{x}  \ottsym{:}  A  \ottsym{)}  \to  B}{%
{\ottdrulename{TTLam}}{}%
}}

\newcommand{\ottdruleTTApp}[1]{\ottdrule[#1]{%
\ottpremise{\Gamma  \vdash_{t}  a  \ottsym{:}  \ottsym{(}  \ottmv{x}  \ottsym{:}  A  \ottsym{)}  \to  B}%
\ottpremise{\Gamma  \vdash_{t}  b  \ottsym{:}  A}%
}{
\Gamma  \vdash_{t}   \mathsf{app} _{( \ottmv{x} : A ) \to B } ( a , b )   \ottsym{:}  \ottsym{[}  b  \ottsym{/}  \ottmv{x}  \ottsym{]}  B}{%
{\ottdrulename{TTApp}}{}%
}}

\newcommand{\ottdruleTTConv}[1]{\ottdrule[#1]{%
\ottpremise{ \Gamma  \vdash_{t}  a  \ottsym{:}  A  \qquad  \Gamma  \vdash_{t}  B  \ottsym{:}  \ottnt{s} }%
\ottpremise{ A  \leadsto^{*}_{t}  B \hspace{1.5em}\text{or}\hspace{1.5em} B  \leadsto^{*}_{t}  A }%
}{
\Gamma  \vdash_{t}  a  \ottsym{:}  B}{%
{\ottdrulename{TTConv}}{}%
}}

\newcommand{\ottdefnttyp}[1]{\begin{ottdefnblock}[#1]{$\Gamma  \vdash_{t}  a  \ottsym{:}  A$}{}
\ottusedrule{\ottdruleTTSort{}}
\ottusedrule{\ottdruleTTVar{}}
\ottusedrule{\ottdruleTTPi{}}
\ottusedrule{\ottdruleTTLam{}}
\ottusedrule{\ottdruleTTApp{}}
\ottusedrule{\ottdruleTTConv{}}
\end{ottdefnblock}}

\newcommand{\ottdruleTCNil}[1]{\ottdrule[#1]{%
}{
\vdash_{t}   \cdot }{%
{\ottdrulename{TCNil}}{}%
}}

\newcommand{\ottdruleTCCons}[1]{\ottdrule[#1]{%
\ottpremise{ \ottmv{x}  \notin \mathsf{dom}( \Gamma ) }%
\ottpremise{\Gamma  \vdash_{t}  A  \ottsym{:}  \ottnt{s}}%
}{
\vdash_{t}   \Gamma , \ottmv{x} \!:\! A }{%
{\ottdrulename{TCCons}}{}%
}}

\newcommand{\ottdefntctyp}[1]{\begin{ottdefnblock}[#1]{$\vdash_{t}  \Gamma$}{}
\ottusedrule{\ottdruleTCNil{}}
\ottusedrule{\ottdruleTCCons{}}
\end{ottdefnblock}}

\newcommand{\ottdefnsJTTyp}{
\ottdefnttyp{}\ottdefntctyp{}}

\newcommand{\ottdefnss}{
\ottdefnsJstep
\ottdefnsJMstep
\ottdefnsJConv
\ottdefnsJTyp
\ottdefnsJTstep
\ottdefnsJMTstep
\ottdefnsJTTyp
}

\newcommand{\ottall}{\ottmetavars\\[0pt]
\ottgrammar\\[5.0mm]
\ottdefnss}

\title{Strong Normalization for the Calculus of Constructions}
\author{Chris Casinghino}
\date{December 2010}
\maketitle

\section{Introduction}

The calculus of constructions (CC) is a core theory for dependently
typed programming and higher-order constructive logic.  Originally
introduced in Coquand's 1985 thesis~\cite{coquand:phd}, CC has
inspired 25 years of research in programming languages and type
theory.  Today, extensions of CC form the basis of languages like
Coq~\cite{coq-83} and Agda~\cite{norell:phd,agda}.

The popularity of CC can be attributed to the combination of its
expressiveness and its pleasant metatheoretic properties.  Among these
properties, one of the most important is {\em strong normalization},
which means that there are no infinite reduction sequences from
well-typed expressions.  This result has two important consequences.
First, it implies that CC is consistent as a logic.  This makes it an
attractive target for the formalization of mathematics.  Second, it
implies that there is an algorithm to check whether two expressions
are $\beta$-convertible.  Thus, type checking is decidable and CC
provides a practical basis for programming languages.

The strong normalization theorem has traditionally been considered
difficult to
prove~\cite{Coquand:kripkeproof,Altenkirch:realizability}.  Coquand's
original proof was found to have at least two errors, but a number of
later papers give different, correct
proofs~\cite{Coquand:kripkeproof}.  In subsequent years, many authors
considered how to extend this result for additional programming
constructs like inductive datatypes with recursion, a predicative
hierarchy of universes, and large
eliminations~\cite{werner:phd,goguen:phd,luo:book}.  Many of these
proofs are even more challenging, and several span entire theses.

This document reviews three proofs of strong normalization for CC.
Each paper we have chosen proves the theorem by constructing a model
of the system in a different domain, and each contributes something
novel to the theory of CC and its extensions.  The technical details
of the models are often complicated and intimidating.  Rather than
comprehensively verifying and reproducing the proofs, we have focused
on painting a clear picture of the beautiful and fascinating
mathematical structures that underpin them.

The first proof, originally presented by~\citet{geuversnederhof} and
subsequently popularized by~\citet{Barendregt92lambdacalculi}, models
CC in the simpler theory of F$_{\omega}$.  It demonstrates that the
strong normalization theorems for CC and F$_{\omega}$ are equivalent
by giving a reduction-preserving translation from the former to the
latter.  The second, by~\citet{geuvers:flexible}, models CC's types
with sets of expressions.  The paper demonstrates how the model may be
extended to cope with several popular language features, aiming for
flexibility.  The last proof, from~\citet{DBLP:conf/types/MelliesW96},
uses realizability semantics to consider a large class of type
theories, known as the pure type systems, which include CC.  The
authors' goal is to prove strong normalization for any pure type
system that enjoys a particular kind of realizability model.

Though each paper has a unique focus and models CC in a different
semantic system, the overall structures are very similar.  After
unifying the syntax, the correspondences between certain parts of the
proofs are quite striking.  Readers are encouraged, for example, to
compare the interpretation functions defined in
Sections~\ref{secSAT:kinds} and~\ref{secSAT:types} with those in
Section~\ref{secFOM:types}.  The similarities between the papers speak
to the fundamental underlying structure of CC, while their differences
illustrate how design choices can push the proof towards varying
goals.

The paper is structured as follows: In Sections~\ref{sec:PTS}
and~\ref{sec:PTSmetatheory} we review the definition and basic
metatheory of pure type systems and the calculus of constructions.  We
present the high-level structure of a strong normalization argument in
Section~\ref{sec:Structure}, then the proofs of
\citet{geuversnederhof}, \citet{geuvers:flexible} and
\citet{DBLP:conf/types/MelliesW96} in Sections~\ref{sec:FOM},
\ref{sec:flexible} and \ref{sec:realizability}, respectively.  We
compare the proofs and conclude in Section~\ref{sec:conclusion}.

\section{Pure Type Systems}
\label{sec:PTS}

The calculus of constructions is one example of a \textit{pure type
  system} (PTS).  This very general notion, introduced by Berardi and
popularized by Barendregt~\cite{Barendregt92lambdacalculi}, consists
of a parameterized lambda calculus which can be instantiated to a
variety of well-known type systems.  For example, the simply-typed
lambda calculus, System F, System F$_{\omega}$ and CC are all pure
type systems.  The PTS generalization is convenient because it allows
us to simultaneously study the properties of several systems.

A PTS is specified by three parameters.  First, the collection of {\it
  sorts} $s$ is given by a set $\sorts$.  The typing hierarchy among
these sorts is given by a collection of {\it axioms} $\axioms
\subseteq \sorts^2$.  Finally, the way product types may be formed is
specified by the set of {\it rules} $\rules \subseteq \sorts^3$.
Figure~\ref{fig:ptsdef} gives the complete definition of the system.

\begin{figure}
  \small
  \begin{tabular}{ll}
    $A,\,B,\,a,\,b$ & $::= 
    \ottnt{s} \;|\; 
    \ottmv{x} \;|\;
    \ottsym{(}  \ottmv{x}  \ottsym{:}  \ottnt{A}  \ottsym{)}  \to  \ottnt{B} \;|\;
    \lambda  \ottmv{x}  \ottsym{:}  \ottnt{A}  \ottsym{.}  \ottnt{b} \;|\;
    \ottnt{a} \, \ottnt{b} $\\
    $\Gamma$ & $::=  \cdot \;|\; \Gamma , \ottmv{x} \!:\! \ottnt{A} $
  \end{tabular}
  
  \bigskip

  \begin{ottdefnblock}{$\ottnt{a}  \leadsto  \ottnt{b}$}{}
  $$
  \ottdruleSBeta{}
  $$

  $$
  \ottdruleSPiOne{}\qquad \ottdruleSLamOne{}\qquad \ottdruleSAppOne{}
  $$

  $$
  \ottdruleSPiTwo{}\qquad \ottdruleSLamTwo{}\qquad \ottdruleSAppTwo{}
  $$
\end{ottdefnblock}

\begin{ottdefnblock}{$\ottnt{a}  \leadsto^{*}  \ottnt{b}$}{}
  $$
  \ottdruleMSRefl{}\qquad \ottdruleMSStep{}
  $$
\end{ottdefnblock}

\begin{ottdefnblock}{$\Gamma  \vdash  \ottnt{a}  \ottsym{:}  \ottnt{A}$}{}
  $$
  \ottdruleTSort{}\qquad \ottdruleTVar{}\qquad \ottdruleTConv{}
  $$
  
  $$
  \ottdruleTPi{}\qquad \ottdruleTLam{}\qquad \ottdruleTApp{}
  $$
\end{ottdefnblock}

\begin{ottdefnblock}{$\vdash  \Gamma$}{}
  $$
  \ottdruleCNil{}\qquad \ottdruleCCons{}
  $$
\end{ottdefnblock}

\caption{Definition of Pure Type Systems}
\label{fig:ptsdef}
\end{figure}

Choosing to explain CC as a PTS settles several questions of
presentation.  The terms, types and kinds are collapsed into
one grammar.  Some authors choose to separate these levels
syntactically for easier identification, but we find this version more
economical and it is more closely aligned with the three papers under
consideration.  For the same reasons, we have used $\beta$-conversion
in the \textsc{Conv} rule instead of using a separate judgemental
equality (as is done, for example,
in~\cite{Altenkirch:realizability}).  Here, $ =_{\beta} $ is the symmetric,
transitive, reflexive closure of $ \leadsto $.  We do not consider
$\eta$-conversion.

This context also permits a clean and compartmentalized explanation of
CC's features.  In most of the systems we consider, the sorts and
axioms are given by the sets:
$$
\sorts = \{\sortstar, \sortbox\}   \hspace{1.5em}
\axioms = \{(\sortstar,\sortbox)\}
$$
Intuitively, $\sortstar$ classifies types and $\sortbox$ classifies
kinds.  The lone axiom says that $\sortstar$ is itself a kind.  The
rule $(\sortstar,\sortstar,\sortstar)$ permits standard function
types, whose domain and codomain are both types.  The system with only
this rule is the simply-typed lambda calculus:
$$
\rules = \{  (\sortstar,\sortstar,\sortstar) \}
$$
The rule $(\sortbox,\sortstar,\sortstar)$ permits functions whose
domain is a kind.  For example, when the domain is $\sortstar$ these
are functions which takes types as arguments (i.e., polymorphism).
Thus, adding this rule yields System F:
$$
\rules = \{  (\sortstar,\sortstar,\sortstar),
           \,(\sortbox,\sortstar,\sortstar) \}
$$
The rule $(\sortbox,\sortbox,\sortbox)$ effectively duplicates STLC at
the type level.  It allows functions that take and return types.
Adding it yields System F$_{\omega}$, which has type-level
computation:
$$
\rules = \{  (\sortstar,\sortstar,\sortstar),
           \,(\sortbox,\sortstar,\sortstar),
           \,(\sortbox,\sortbox,\sortbox)\}
$$
CC adds {\it dependent types} to System F$_{\omega}$.  The rule
$(\sortstar,\sortbox,\sortbox)$ permits types to depend on terms by
allowing functions which take terms as arguments but return types.
Thus, the complete specification of CC is:
$$
\sorts = \{\sortstar, \sortbox\}   \hspace{1.5em}
\axioms = \{(\sortstar,\sortbox)\}  \hspace{1.5em}
\rules = \{  (\sortstar,\sortstar,\sortstar),
           \,(\sortbox,\sortstar,\sortstar),
           \,(\sortbox,\sortbox,\sortbox),
           \,(\sortstar,\sortbox,\sortbox)\}
$$

\section{Simple Metatheory}
\label{sec:PTSmetatheory}

For completeness, we review a few basic metatheoretic results.  We
will write $\Gamma  \vdash  \ottnt{a}  \ottsym{:}  \ottnt{A}$ for the typing judgement of an arbitrary
PTS or when it is clear what system we are discussing, and otherwise
will label the turnstile as in $\Gamma  \vdash_{CC}  \ottnt{a}  \ottsym{:}  \ottnt{A}$ for CC's typing
relation in particular.

The first result, confluence, can be proven using the standard
Tait--Martin-L\"{o}f
technique~\cite{martinlof:intuitionistic,luo:book}.

\begin{theorem}[Confluence] \label{thm:confluence}
  If $\ottnt{a}  \leadsto^{*}  \ottnt{a_{{\mathrm{1}}}}$ and $\ottnt{a}  \leadsto^{*}  \ottnt{a_{{\mathrm{2}}}}$ then there is
  a $\ottnt{b}$ such that $\ottnt{a_{{\mathrm{1}}}}  \leadsto^{*}  \ottnt{b}$ and $\ottnt{a_{{\mathrm{2}}}}  \leadsto^{*}  \ottnt{b}$.
\end{theorem}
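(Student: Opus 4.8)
The plan is to use the standard Tait--Martin-L\"{o}f technique, which factors the proof through an auxiliary \emph{parallel reduction} relation $\Rrightarrow$. Single-step reduction $\leadsto$ is poorly suited to a direct diamond-property argument because contracting one redex can duplicate or erase another (consider $(\lambda x{:}A.\,b)\,a$ when $a$ itself contains a redex); parallel reduction repairs this by allowing all redexes present in a term to be contracted in a single step. Concretely I would define $\Rrightarrow$ inductively with a reflexivity rule at variables and sorts ($x \Rrightarrow x$, $s \Rrightarrow s$) and a congruence rule for each former ($\Pi$, $\lambda$, application) that reduces the immediate subterms in parallel, together with the key beta rule: if $A \Rrightarrow A'$, $b \Rrightarrow b'$ and $a \Rrightarrow a'$, then $(\lambda x{:}A.\,b)\,a \Rrightarrow [a'/x]b'$.

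The first, routine, step is the sandwiching $\leadsto\ \subseteq\ \Rrightarrow\ \subseteq\ \leadsto^{*}$, each inclusion by induction on the relevant derivation. Taking reflexive--transitive closures yields $\Rrightarrow^{*}\ =\ \leadsto^{*}$, so it suffices to prove confluence of $\Rrightarrow^{*}$. The crux is that $\Rrightarrow$ enjoys the diamond property, and I would establish this via Takahashi's complete-development argument rather than a symmetric case analysis. I would define a function $a^{*}$ that contracts every redex of $a$ in one sweep (in particular $((\lambda x{:}A.\,b)\,a)^{*} = [a^{*}/x]\,b^{*}$, with $a^{*}$ homomorphic on the remaining formers), and prove the \emph{triangle property}: whenever $a \Rrightarrow b$, also $b \Rrightarrow a^{*}$. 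The diamond property is then immediate, since $a \Rrightarrow b_1$ and $a \Rrightarrow b_2$ give $b_1 \Rrightarrow a^{*}$ and $b_2 \Rrightarrow a^{*}$, exhibiting $a^{*}$ as a common reduct.

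Proving the triangle property rests on one genuine lemma, that parallel reduction commutes with substitution: if $a \Rrightarrow a'$ and $b \Rrightarrow b'$ then $[b/x]a \Rrightarrow [b'/x]a'$, shown by induction on the derivation of $a \Rrightarrow a'$ (the beta case uses that substitution composes, $[b'/x][a'/y]c' = [[b'/x]a'/y][b'/x]c'$ for $y \neq x$ and $y \notin \mathrm{fv}(b')$, which holds by the variable convention). With the substitution lemma in hand, the triangle property follows by induction on the structure of $a$, the only interesting case being $a = (\lambda y{:}C.\,c)\,d$: a reduction $a \Rrightarrow b$ either contracts the head redex, so $b = [d'/y]c'$ with $c \Rrightarrow c'$ and $d \Rrightarrow d'$, whence $b \Rrightarrow [d^{*}/y]c^{*} = a^{*}$ by the induction hypotheses and the substitution lemma, or it leaves the head redex in place, so $b = (\lambda y{:}C'.\,c')\,d'$ and a single use of the beta rule of $\Rrightarrow$ sends $b$ to $a^{*}$.

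Finally, the diamond property for $\Rrightarrow$ lifts to confluence of $\Rrightarrow^{*}$ by the usual tiling argument: a single diamond strips to show that $\Rrightarrow$ commutes with $\Rrightarrow^{*}$, and iterating fills in the full rectangle by induction on the lengths of $a \Rrightarrow^{*} a_1$ and $a \Rrightarrow^{*} a_2$. Rewriting $\Rrightarrow^{*}$ back as $\leadsto^{*}$ via the sandwiching yields exactly the statement: a common $b$ with $a_1 \leadsto^{*} b$ and $a_2 \leadsto^{*} b$. I expect the main obstacle to be making the parallel-reduction beta rule and the complete development $a^{*}$ interact cleanly, specifically stating and proving the substitution lemma with enough generality (including the freshness side conditions) that the head-redex case of the triangle property closes without circularity; the remaining inductions are mechanical.
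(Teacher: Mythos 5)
Your proposal is correct and is precisely the standard Tait--Martin-L\"{o}f argument that the paper invokes (by citation, without spelling out the details); the parallel-reduction relation, the sandwiching $\leadsto\ \subseteq\ \Rrightarrow\ \subseteq\ \leadsto^{*}$, the substitution lemma, and the lifting to the reflexive--transitive closure are all exactly as expected. Your use of Takahashi's complete development and the triangle property is a clean, well-known refinement of the diamond-property case analysis and changes nothing essential.
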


The second property, preservation, is proved by induction on typing
derivations, using a substitution lemma.

\begin{theorem}[Preservation]
  If $\Gamma  \vdash  \ottnt{a}  \ottsym{:}  \ottnt{A}$ and $\ottnt{a}  \leadsto  \ottnt{b}$ then $\Gamma  \vdash  \ottnt{b}  \ottsym{:}  \ottnt{A}$.
\end{theorem}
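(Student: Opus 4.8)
The plan is to proceed by induction on the derivation of $\Gamma \vdash a : A$, case-splitting on the final typing rule and, within each case, on the shape of the reduction $a \leadsto b$. The \textsc{TSort} and \textsc{TVar} cases are vacuous, since sorts and variables are normal forms, and the \textsc{TConv} case is immediate: apply the inductive hypothesis to the subderivation $\Gamma \vdash a : A'$ and re-apply \textsc{TConv} with the same $\beta$-equation $A' =_\beta A$. In the congruence cases the inductive hypothesis does most of the work. For instance, under \textsc{TApp} with premises $\Gamma \vdash a_1 : (x{:}A')\to B'$ and $\Gamma \vdash b_1 : A'$ (so $A = [b_1/x]B'$), a reduction $a_1 \leadsto a_1'$ of the function (rule \textsc{SApp1}) is handled by the hypothesis, which yields $\Gamma \vdash a_1' : (x{:}A')\to B'$, and \textsc{TApp} reassembles the derivation with the type $[b_1/x]B'$ unchanged.

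Several cases require a conversion step layered on top of the hypothesis, so the argument depends on a handful of auxiliary results that I would establish first by routine inductions: the substitution lemma, a \emph{correctness of types} lemma (every well-typed term has a type that is a sort or is itself typable by some sort $s$), and a \emph{context conversion} lemma (a subject typable under $\Gamma, x{:}A'$ remains typable under $\Gamma, x{:}A''$ whenever $A' =_\beta A''$ with $A''$ well-sorted). When the argument of an application reduces ($b_1 \leadsto b_1'$, rule \textsc{SApp2}), reassembling via \textsc{TApp} produces the type $[b_1'/x]B'$ rather than the desired $[b_1/x]B'$; since these are $\beta$-convertible, I would close the gap with \textsc{TConv}, which is exactly where correctness of types is needed to certify that $[b_1/x]B'$ is well-sorted. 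Likewise, the \textsc{SLam1} case, where the domain annotation reduces $A' \leadsto A''$, uses context conversion to retype the body under $\Gamma, x{:}A''$, followed by \textsc{TConv} to recover the originally stated $\Pi$-type.

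The heart of the proof, and the main obstacle, is the top-level $\beta$-redex case: under \textsc{TApp} the function is a $\lambda$-abstraction, $a = (\lambda x{:}A''.\,c)\,b_1 \leadsto [b_1/x]c$ via \textsc{SBeta}, and I must show $\Gamma \vdash [b_1/x]c : [b_1/x]B'$. This requires a \emph{generation} (inversion) lemma for $\lambda$, extracting a codomain $B''$ with $\Gamma, x{:}A'' \vdash c : B''$, a sort witnessing $\Gamma \vdash (x{:}A'')\to B'' : s$, and $(x{:}A'')\to B'' =_\beta (x{:}A')\to B'$. The genuinely delicate ingredient is \emph{injectivity of products}: from that last $\beta$-equality I must conclude $A'' =_\beta A'$ and $B'' =_\beta B'$ separately, and this step is where confluence (Theorem~\ref{thm:confluence}) is essential, since without Church--Rosser a $\beta$-equation between two $\Pi$-types carries no information about its components. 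Once injectivity is in hand the case closes mechanically: \textsc{TConv} retypes $b_1$ at $A''$, the substitution lemma gives $\Gamma \vdash [b_1/x]c : [b_1/x]B''$, and a final \textsc{TConv}, using $[b_1/x]B'' =_\beta [b_1/x]B'$ together with correctness of types, delivers the required type. I expect essentially all the real work to lie in assembling these prerequisites, especially deriving product injectivity from confluence and threading correctness of types through the conversion steps.
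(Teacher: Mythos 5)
Your proposal is correct and follows the same route the paper indicates: induction on the typing derivation using a substitution lemma, which is all the paper says about this standard result. Your additional detail --- the generation lemma for $\lambda$, product injectivity derived from confluence (Theorem~\ref{thm:confluence}), correctness of types, and context conversion --- correctly identifies the auxiliary machinery that any complete proof of subject reduction for a PTS must assemble, so you have simply made explicit what the paper leaves implicit.
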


The last property is not usually considered for less expressive lambda
calculi because they are presented with separate syntax for terms,
types and kinds.  The theorem says that CC expressions can still be
classified in this way with the typing judgement.  It is proved by a
straightforward induction on typing derivations.

\begin{theorem}[Classification]
  If $\Gamma  \vdash_{CC}  \ottnt{A}  \ottsym{:}  \ottnt{B}$, then exactly one of the following holds:
  \begin{itemize}
  \item $B$ is $\sortbox$.  In this case, we call $A$ a \textit{kind}.
  \item $\Gamma  \vdash_{CC}  \ottnt{B}  \ottsym{:}   \Box $.  In this case, we call $A$ a
    \textit{$\Gamma$-constructor}.
  \item $\Gamma  \vdash_{CC}  \ottnt{B}  \ottsym{:}   \ast $.  In this case, we call $A$ a 
    \textit{$\Gamma$-term}.
  \end{itemize}
\end{theorem}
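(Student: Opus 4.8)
The plan is to split the statement into two parts: \emph{existence} (at least one of the three cases holds) and \emph{exclusivity} (no two hold at once). To avoid clashing with the bound-variable names in \textsc{TPi} and \textsc{TApp}, I write the judgement as $\Gamma \vdash_{CC} M : T$, so the three cases concern whether $T = \Box$, $\Gamma \vdash_{CC} T : \Box$, or $\Gamma \vdash_{CC} T : \ast$.

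For existence I would proceed by induction on the derivation of $\Gamma \vdash_{CC} M : T$, as the paper suggests. Most cases are immediate once one notes that in CC the relevant sorts are pinned down. \textsc{TSort} produces $T = \Box$ (case~1). \textsc{TConv} and \textsc{TLam} hand us a premise $\Gamma \vdash_{CC} T : s$ with $s \in \{\ast, \Box\}$, placing us directly in case~2 or~3. \textsc{TPi} gives $T = s_3 \in \{\ast,\Box\}$, so either $s_3 = \Box$ (case~1) or, after re-deriving $\Gamma \vdash_{CC} \ast : \Box$ by \textsc{TSort} from the ambient $\vdash \Gamma$, $s_3 = \ast$ (case~2). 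The \textsc{TVar} case uses the standard weakening lemma: from $\vdash \Gamma$ and $(x : T) \in \Gamma$ one recovers $\Gamma \vdash_{CC} T : s$, and the value of $s$ selects case~2 or~3.

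The only case needing real work is \textsc{TApp}, where $M = a\,b$ and $T = [b/x]B$ with $\Gamma \vdash_{CC} a : (x : A) \to B$. Here I would first apply the induction hypothesis to the subderivation for $a$: since the Pi type is not $\Box$, it must be typed by a sort, so $\Gamma \vdash_{CC} (x : A) \to B : s$ with $s \in \{\ast, \Box\}$. A generation (inversion) lemma on this judgement then yields $\Gamma, x : A \vdash_{CC} B : s_2$ with $(s_1, s_2, s) \in \mathcal{R}$, the step $s_3 =_\beta s \Rightarrow s_3 = s$ being justified by confluence. Because in CC's rule set the third component determines the second ($s = \ast$ forces $s_2 = \ast$ and $s = \Box$ forces $s_2 = \Box$), we get $s_2 = s$. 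The substitution lemma finally gives $\Gamma \vdash_{CC} [b/x]B : s$ (using $[b/x]s_2 = s_2$), so $M$ lands in case~2 or~3 according to $s$. This is the crux of existence, and it is where generation and substitution are indispensable.

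For exclusivity I would appeal to two standard facts. First, the ``top sort'' lemma, that $\Box$ is not typeable in any context, immediately separates case~1 from cases~2 and~3. Second, uniqueness of types for the functional pure type system CC: if $\Gamma \vdash_{CC} T : T_1$ and $\Gamma \vdash_{CC} T : T_2$ then $T_1 =_\beta T_2$; applied to a hypothetical overlap of cases~2 and~3 this forces $\Box =_\beta \ast$, contradicting Theorem~\ref{thm:confluence} since the distinct sorts $\ast$ and $\Box$ are normal forms with no common reduct. The main obstacle is therefore not the induction but assembling the supporting metatheory---weakening, generation, substitution, the top-sort lemma, and type uniqueness---all routine for CC, yet each required before the ``straightforward induction'' actually goes through.
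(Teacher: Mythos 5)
Your proposal is correct and follows the same strategy the paper announces---induction on the typing derivation---the paper itself giving no further detail beyond that one sentence. Your elaboration of the \textsc{TApp} case (via generation, the observation that $s_2=s_3$ in every CC rule, and substitution) and of exclusivity (top-sort lemma plus uniqueness of types and confluence) correctly identifies exactly the supporting metatheory the paper leaves implicit.
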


When $B$ is $\sortstar$, we will call $A$ a \textit{$\Gamma$-type}.
This is a special case of the second bullet above.  In this document
we use the word ``expression'' to refer to any element of CC's grammar
and reserve the word ``term'' for the subclass of expressions
identified here.

Notice that we need a context to distinguish between constructors and
terms, but can identify kinds without one.  The ambiguity comes from
variables, and some authors avoid it by splitting them into two
syntactic classes (typically $x,y,z$ for term variables and
$\alpha,\beta$ for type variables).  Distinguishing the variables in
this way forces duplication or subtle inaccuracy when discussing
binders at different levels.  For that reason, we prefer to mix the
variables and use a context to identify the terms and constructors.

Finally, we define the central notion considered below:

\begin{definition}
  An expression is called {\it strongly normalizing} if there are no
  infinite $ \leadsto $ reduction sequences beginning at it.  We write
  $\SN$ for the collection of all such expressions.
\end{definition}

\section{Structure of the proofs}
\label{sec:Structure}

The three proofs we consider each model CC in a different domain, but
they share a similar overall structure.  In this section we describe
the technique at a high level.

\subsubsection*{Step 1: Define the interpretations}

Each proof begins by defining two interpretations.  A ``type''
interpretation, usually written $\interp[ ]{A}$, captures the static
meaning of types, kinds and sorts.  For example, in the second proof
we will model types as sets of expressions so that $\interp[ ]{A}$
contains all the terms of type $A$.  Then a ``term'' interpretation is
defined to capture the run-time behavior of terms, types and kinds.
This is usually written $[a]$.  In the example where types are
interpreted as sets of expressions, the term interpretation might pick
a canonical inhabitant with the right reduction behavior from the set.

\subsubsection*{Step 2: Relate the interpretations}

After defining the term and type interpretations, we prove a theorem
that relates them.  For example, in the second proof we will show that
if $\Gamma  \vdash  \ottnt{a}  \ottsym{:}  \ottnt{A}$, then $[a] \in \interp[ ]{A}$.  This theorem is
usually called ``soundness''.

\subsubsection*{Step 3: Declare success}

After proving the soundness theorem we observe that one of the
interpretations has some important property.  This property will mean
that strong normalization is a direct consequence of the soundness
theorem.  In the running example, $\interp[ ]{A}$ will turn out to
contain only strongly normalizing expressions.  Then, since $[a] \in
\interp[ ]{A}$ and $[a]$ models $a$'s run-time behavior, $a \in \SN$.

\subsubsection*{A clarification about the interpretations}

Though we have called $\interp[ ]{\cdot}$ the ``type'' interpretation
and $[\cdot]$ the ``term'' interpretation, we do not mean that the
former is only defined on types and the later on terms, in the sense
of the classification theorem.  Rather, $\interp[ ]{\cdot}$ is meant
to model the static meaning of any expression that can be used to
classify other expressions.  In each proof it will be defined on all
constructors, kinds and sorts of CC.  Correspondingly, $[\cdot]$ is
meant to model the dynamic behavior of any expression which can take
reduction steps.  It will be defined on the terms, constructors and
kinds of CC.

\section{Modeling CC in F$_{\omega}$}
\label{sec:FOM}

The first proof we consider translates CC expressions to System
F$_{\omega}$ in a way that preserves reduction.  System F$_{\omega}$
is known to be strongly normalizing (see~\cite{Gallier1990Girards} for
a detailed proof), so the correctness of this translation will imply
that CC is strongly normalizing as well.  The idea to prove strong
normalization of an expressive type theory by translation to a
better-understood system has been used in a variety of contexts.  For
example, \citet{Harper93framework} demonstrated that LF is strongly
normalizing by giving a reduction-preserving translation to the simply
typed lambda calculus.  This technique was originally applied to CC
by~\citet{geuversnederhof}, and their proof is reproduced
in~\citet{Barendregt92lambdacalculi}.

While this development does not have the same focus on extensibility
or generality as the later two, it has at least two advantages.
First, the proof is modular.  The other two proofs we will see are
monolithic in that they must explain the unique features of CC while
recapitulating and extending a complicated semantic argument.  Here we
may focus on the ways in which CC extends F$_{\omega}$ and can rely on
the somewhat simpler semantics of that system.  Second, the
translation itself is simple and can be verified in Peano arithmetic.
Thus, this technique demonstrates that the proof-theoretic complexity
of CC's strong normalization argument is no greater than that of
F$_{\omega}$.

\subsection{Intuition for the translation}

The calculus of constructions extends System F$_{\omega}$ with dependency in
the form of the rule $(\sortstar,\sortbox,\sortbox)$.  This rule
permits type-level abstractions which create types but take terms as
arguments.  The difficulty comes from modeling these functions in
System F$_{\omega}$ without erasing any possible reduction sequences.

To do this, we will translate expressions in two distinct ways.  The
``type'' translation $\fomTypI[]{\cdot}$ erases the dependencies to
create $F_{\omega}$ types from CC types.  The ``term'' translation
$\fomTrmI[]{\cdot}$ keeps the dependencies to avoid erasing any
possible reductions, but lowers type functions to the level of terms.
The soundness theorem of our translation will state
$$
\text{if}\;  \Gamma  \vdash_{CC}  \ottnt{a}  \ottsym{:}  \ottnt{A}  \;\text{ then }\;
    \fomCtxI{ \Gamma }   \vdash_{\text{F}_{\omega} }   \fomTrmI[ ]{ \ottnt{a} }   \ottsym{:}   \fomTypI[ ]{ \ottnt{A} } .
$$
We follow~\citet{geuversnederhof} in exhibiting how these translations
handle several examples before specifying them in full detail.
Consider a simple example of dependency where $F$ is a dependent type
function, $A$ a type and $a$ a term, so that:
$$
\infer
  {\Gamma  \vdash_{CC}  \ottnt{F}  \ottsym{:}  \ottnt{A}  \to   \ast  \\
   \Gamma  \vdash_{CC}  \ottnt{a}  \ottsym{:}  \ottnt{A}}
  {\Gamma  \vdash_{CC}  \ottnt{F} \, \ottnt{a}  \ottsym{:}   \ast }
$$
The subderivation which checks the type of $F$ will need to make use
of rule $(\sortstar,\sortbox,\sortbox)$.  We must somehow erase this
use of dependency so that, in F$_{\omega}$:
$$
\infer
  { \fomCtxI{ \Gamma }   \vdash_{\text{F}_{\omega} }   \fomTrmI[ ]{ \ottnt{F} }   \ottsym{:}   \fomTypI[ ]{ \ottnt{A}  \to   \ast  }  \\
    \fomCtxI{ \Gamma }   \vdash_{\text{F}_{\omega} }   \fomTrmI[ ]{ \ottnt{a} }   \ottsym{:}   \fomTypI[ ]{ \ottnt{A} } }
  { \fomCtxI{ \Gamma }   \vdash_{\text{F}_{\omega} }   \fomTrmI[ ]{ \ottnt{F} \, \ottnt{a} }   \ottsym{:}   \fomTypI[ ]{  \ast  } }
$$
To solve this, we take $ \fomTypI[ ]{ \ottnt{A}  \to   \ast  }  =  \fomTypI[ ]{ \ottnt{A} }   \to   \tvZ $ where
$ \tvZ  :  \ast $ is a fixed type variable that is added to the context
by $ \fomCtxI{ \Gamma } $.  We set $ \fomTypI[ ]{  \ast  }  = 0$ and $ \fomTrmI[ ]{ \ottnt{F} \, \ottnt{a} }  =  \fomTrmI[ ]{ \ottnt{F} }  \,  \fomTrmI[ ]{ \ottnt{a} } $.  Now when checking the translated $F$ we have a
term-level function rather than one which returns a type.

For our second example, suppose $A$ and $B$ are types and $a$ is a
term of type $A$.  When translating the application $\ottsym{(}  \lambda  \ottmv{x}  \ottsym{:}  \ottnt{A}  \ottsym{.}  \ottnt{B}  \ottsym{)} \, \ottnt{a}$,
we must erase $A$ to an F$_{\omega}$ type using the type translation
$ \fomTypI[ ]{ \ottnt{A} } $.  However, this admits the possibility that by erasing
dependency we will delete redexes.  This is solved by inserting an
extra redex which does nothing but provide a spot to hold $A$'s
translation as a term.  That is, for some fresh variable $y$
$$
 \fomTrmI[ ]{ \ottsym{(}  \lambda  \ottmv{x}  \ottsym{:}  \ottnt{A}  \ottsym{.}  \ottnt{B}  \ottsym{)} \, \ottnt{a} }  = \ottsym{(}  \lambda  \ottmv{y}  \ottsym{:}   \tvZ   \ottsym{.}  \lambda  \ottmv{x}  \ottsym{:}   \fomTypI[ ]{ \ottnt{A} }   \ottsym{.}   \fomTrmI[ ]{ \ottnt{B} }   \ottsym{)} \,  \fomTrmI[ ]{ \ottnt{A} }  \,  \fomTrmI[ ]{ \ottnt{a} } .
$$
The situation for polymorphism is similar.  Consider a constructor $F$
with kind $\ottsym{(}  \ottmv{x}  \ottsym{:}   \ast   \ottsym{)}  \to  \ottmv{x}  \to  \ottmv{x}$ (for example, the polymorphic
identity function) and a type $A : *$.  In translating the term $\ottnt{F} \, \ottnt{A}$, we must preserve $A$'s static meaning as a type without
erasing any possible reduction sequences.  The solution is to use
both translations, again:
\begin{center}
  \begin{tabular}{ll}
    $ \fomTypI[ ]{ \ottsym{(}  \ottmv{x}  \ottsym{:}   \ast   \ottsym{)}  \to  \ottmv{x}  \to  \ottmv{x} } $ &= $\ottsym{(}  \ottmv{x}  \ottsym{:}   \ast   \ottsym{)}  \to   \tvZ   \to  \ottmv{x}  \to  \ottmv{x}$\\
    $ \fomTrmI[ ]{ \ottnt{F} \, \ottnt{A} } $ &= $ \fomTrmI[ ]{ \ottnt{F} }  \,  \fomTypI[ ]{ \ottnt{A} }  \,  \fomTrmI[ ]{ \ottnt{A} } $
  \end{tabular}
\end{center}
The theme of these examples is that the two translations accomplish
different tasks.  The type translation $\fomTypI[ ]{\cdot}$ erases
dependencies to make F$_{\omega}$ types out of CC types.  The term
translation $\fomTrmI[ ]{\cdot}$ preserves reduction behavior but
lowers CC types to F$_{\omega}$ terms in order to accommodate the
weaker type system.  We translate parts of expressions twice so that
we can achieve both goals.

\subsection{The translation of types and contexts}
\label{secFOM:types}

Now we give the complete definition of the translation functions.  We
begin by owning up to a slight simplification in the last section.  To
distinguish term variables from type variables, the translations must
be indexed by contexts.  Thus, the translation for types becomes
$\fomTypI{\cdot}$, and the translation for terms becomes
$\fomTrmI{\cdot}$.  The translation for contexts, $\fomCtxI{\cdot}$,
remains unindexed.

In addition to these functions we define $V$ which translates CC sorts
and kinds to F$_{\omega}$ kinds:
\begin{align*}
  V( \Box )       &=  \ast \\
  V( \ast )          &=  \ast \\
  V(\ottsym{(}  \ottmv{x}  \ottsym{:}  \ottnt{A}  \ottsym{)}  \to  \ottnt{B}) &=
  \begin{cases}
    V(A) \to V(B) & \text{ if $A$ is a kind}\\
    V(B) & \text{ otherwise }
   \end{cases}
\end{align*}
This function is not indexed by a context because CC kinds may be
distinguished without one, by the classification theorem.  The reason
for the case split in the last clause is that we are erasing
dependency.

The translation of types from CC to F$_{\omega}$ follows the examples
from the previous section.  The domain of $\fomTypI{\cdot}$ is the
sorts, kinds, and $\Gamma$-constructors of CC.  We pick a unique type
variable $\tvZ$ and assume it is never used in an input to this
function.\
\begin{align*}
   \fomTypI{  \Box  }  &= \tvZ\\
   \fomTypI{  \ast  }  &= \tvZ\\
   \fomTypI{ \ottmv{x} }  &= \ottmv{x}\\
   \fomTypI{ \ottsym{(}  \ottmv{x}  \ottsym{:}  \ottnt{A}  \ottsym{)}  \to  \ottnt{B} }  &= 
  \begin{cases}
    \ottsym{(}  \ottmv{x}  \ottsym{:}   V( \ottnt{A} )   \ottsym{)}  \to   \fomTypI{ \ottnt{A} }   \to   \fomTypI[   \Gamma , \ottmv{x} \!:\! \ottnt{A}   ]{ \ottnt{B} }  & \text{ if $A$ is a kind }\\
    \ottsym{(}  \ottmv{x}  \ottsym{:}   \fomTypI{ \ottnt{A} }   \ottsym{)}  \to   \fomTypI[   \Gamma , \ottmv{x} \!:\! \ottnt{A}   ]{ \ottnt{B} }  & \text{ if $A$ is a $\Gamma$-type }
  \end{cases}\\
   \fomTypI{ \ottsym{(}  \lambda  \ottmv{x}  \ottsym{:}  \ottnt{A}  \ottsym{.}  \ottnt{B}  \ottsym{)} }  &=
  \begin{cases}
    \lambda  \ottmv{x}  \ottsym{:}   V( \ottnt{A} )   \ottsym{.}   \fomTypI[   \Gamma , \ottmv{x} \!:\! \ottnt{A}   ]{ \ottnt{B} }  & \text{ if $A$ is a kind }\\
     \fomTypI{ \ottnt{B} }  & \text{ if $A$ is a $\Gamma$-type }
  \end{cases}\\
   \fomTypI{ \ottnt{A} \, \ottnt{B} }  &=
  \begin{cases}
     \fomTypI{ \ottnt{A} }  \,  \fomTypI{ \ottnt{B} }  & \text{ if $B$ is a $\Gamma$-constructor }\\
     \fomTypI{ \ottnt{A} }        & \text{ if $B$ is a $\Gamma$-term }
  \end{cases}
\end{align*}
This function inserts duplication in product types as we discussed in
the examples section.  Otherwise, it is straightforward with the
intuition that we are erasing dependency.  The cases of the type
translation that deal with functions take into account the level of
the function's domain (just as we saw with $V$).  This distinction is
justified by the classification lemma and is reflected in the
substitution lemma for the translation:

\begin{lemma}[$\fomTypI{\cdot}$ respects substitution]
  \label{lemFOM:typsubst}
  Suppose $A$ is a kind or $\Gamma$-constructor in CC.  When $x:B \in
  \Gamma$ and $\Gamma  \vdash_{CC}  \ottnt{b}  \ottsym{:}  \ottnt{B}$, we have:
  \begin{itemize}
  \item $\fomTypI{[b/x]A} = [\fomTypI{b}/x]\fomTypI{A}$, if $B$ is a kind.
  \item $\fomTypI{[b/x]A} = \fomTypI{A}$, if $B$ is a $\Gamma$-type.
  \end{itemize}
\end{lemma}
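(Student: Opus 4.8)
The plan is to prove each of the two equations by induction on the structure of the expression $A$ (equivalently, on the derivation witnessing that $A$ is a kind or $\Gamma$-constructor, which supplies well-typedness of all subexpressions and so lets us classify them). The classification of $B$ stays fixed throughout: in the first induction $B$ is a kind, so $x$ is a constructor variable, and in the second $B$ is a $\Gamma$-type, so $x$ is a term variable. The base cases are immediate. For $A \in \{\sortstar,\sortbox\}$ both sides reduce to $\tvZ$, using that $x \neq \tvZ$; for $A = y$ with $y \neq x$ the translation is the variable $y$, untouched by either substitution. For $A = x$, the first bullet is $\fomTypI{[b/x]x} = \fomTypI{b} = [\fomTypI{b}/x]\,x = [\fomTypI{b}/x]\fomTypI{x}$, while the second bullet is \emph{vacuous}: if $B$ were a $\Gamma$-type, the classification theorem would force $x$ to be a $\Gamma$-term, contradicting the hypothesis that $A = x$ is a kind or $\Gamma$-constructor.

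Before the inductive cases I would isolate three auxiliary facts that make the two sides of each equation line up. First, \emph{substitution preserves classification}: by the typing substitution lemma together with the classification theorem, $[b/x]C$ is a kind (resp.\ $\Gamma$-type, resp.\ $\Gamma$-term) exactly when $C$ is, so the case splits in the definition of $\fomTypI{\cdot}$ select the same branch for $A$ and for $[b/x]A$. Second, \emph{$V$ is insensitive to this substitution}: since $V$ produces $F_{\omega}$ kinds built only from $\sortstar$ and $\to$, its output has no free variables, and $V([b/x]C) = V(C)$ because $V$ reads off only the kind skeleton, whose branching is preserved by the first fact (a kind is never a bare variable, so $V$ never meets $x$). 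Third, \emph{context-invariance}: $\fomTypI[\Gamma']{C}$ depends on $\Gamma'$ only through the kind/type classification of the variables it records, so $\fomTypI[\Gamma, y : A_1]{C} = \fomTypI[\Gamma, y : [b/x]A_1]{C}$ whenever $A_1$ and $[b/x]A_1$ share a classification, which they do by the first fact.

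With these in hand the remaining cases (product formation $(x{:}A)\to B$, abstraction $\lambda x{:}A.\,b$, and application $a\,b$) are routine. In each I would $\alpha$-rename the bound variable $y$ so that $y \neq x$ and $y$ is not free in $b$, push $[b/x]$ through the constructor, invoke the first fact to confirm the matching branch of $\fomTypI{\cdot}$ is taken on both sides, and apply the induction hypothesis to the immediate subexpressions. For binders the hypothesis is applied in the extended context $\Gamma, y : A_1$ (legitimate because $x : B$ still occurs there and $\Gamma \vdash_{CC} b : B$ weakens to it), after which the second and third facts reconcile the $V(\cdot)$ annotations and the context $\Gamma, y : [b/x]A_1$ produced by translating the substituted term; the application case needs no $V$ and no context extension, so it follows directly from the induction hypotheses.

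The main obstacle is not the per-constructor algebra but these gluing facts, and in particular the context bookkeeping for binders: one must verify that substitution never disturbs the kind/type classification on which every branch of $\fomTypI{\cdot}$ and $V$ depends, and that replacing $A_1$ by $[b/x]A_1$ inside the context leaves the translation of the body unchanged. Once classification-preservation and context-invariance are established, both equations fall out mechanically, with the only genuine subtlety being the vacuous variable subcase of the second bullet, which is exactly where the hypothesis ``$A$ is a kind or $\Gamma$-constructor'' does its work.
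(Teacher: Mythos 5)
Your proposal is correct and follows essentially the same route as the paper, which proves this lemma by induction on the typing derivation (equivalently, your structural induction on $A$ guided by its classification derivation) and leaves the details to the reader. The three auxiliary facts you isolate---preservation of classification under substitution, invariance of $V$ (including the observation that kinds are never variables or applications, so $V$ never meets $x$), and context-invariance of $\fomTypI{\cdot}$ up to classification---are exactly the bookkeeping the paper's one-line proof elides, and your treatment of the vacuous subcase $A = x$ under the second bullet is the right use of the hypothesis.
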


This lemma can be shown by induction on the typing derivation.  It
follows that the translation of types preserves $\beta$-conversion:
\begin{lemma}[$\fomTypI{\cdot}$ preserves $ =_{\beta} $]
  Suppose $A$ and $A'$ are kinds or $\Gamma$-constructors in CC such
  that $\ottnt{A}  =_{\beta}  \ottnt{A'}$.  Then $ \fomTypI{ \ottnt{A} }   =_{\beta}   \fomTypI{ \ottnt{A'} } $.
\end{lemma}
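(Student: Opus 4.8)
The plan is to reduce the statement to its single-step form and then lift it through the definition of $ =_{\beta} $. Recall that $ =_{\beta} $ is generated by \textsc{CVRefl}, \textsc{CVExp} and \textsc{CVRed}, so it is the reflexive--symmetric--transitive closure of $ \leadsto $; a derivation of $\ottnt{A}  =_{\beta}  \ottnt{A'}$ is therefore a finite zig-zag of single reduction steps. Since the analogous relation on F$_{\omega}$ is also an equivalence, it suffices to prove the one-step statement: \emph{if $A$ is a kind or $\Gamma$-constructor and $\ottnt{A}  \leadsto  \ottnt{A'}$, then $ \fomTypI{ \ottnt{A} }   =_{\beta}   \fomTypI{ \ottnt{A'} } $}. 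Each link of the zig-zag is an instance of this statement, and by Preservation the reduct $A'$ is again a kind or $\Gamma$-constructor with the same classifying sort, so the statement can be chained along the whole path and the pieces recombined using transitivity and symmetry of $ =_{\beta} $ on F$_{\omega}$.

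First I would prove the one-step statement by induction on the derivation of $\ottnt{A}  \leadsto  \ottnt{A'}$. For the congruence rules (\textsc{SPi1}, \textsc{SPi2}, \textsc{SLam1}, \textsc{SLam2}, \textsc{SApp1}, \textsc{SApp2}) the redex lies in an immediate subexpression. The typing rules \textsc{TPi}, \textsc{TLam} and \textsc{TApp} guarantee that each such subexpression is itself a kind or $\Gamma$-constructor in the appropriate (possibly extended) context, so the induction hypothesis applies to it. The delicate point is that $ \fomTypI{\cdot} $, $V$, and the context annotations all branch on the classification of subexpressions; but classification depends only on the classifying sort, which Preservation shows is stable under reduction. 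Hence the reduced and unreduced terms select the same branch of every case split, and the result follows from the induction hypothesis, the analogous (easier) fact that $V$ is invariant under reduction on kinds, and the congruence of $ =_{\beta} $ on F$_{\omega}$.

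The crux is the base case \textsc{SBeta}, where $A = (\lambda x{:}C.\,D)\,\ottnt{e}  \leadsto  [e/x]D = A'$. Since a CC kind is never an application, $A$ here is a $\Gamma$-constructor and $ \fomTypI{ \ottnt{A} } $ is computed by the application clause; moreover the classification of the argument $e$ is tied to that of the domain $C$, because $e\,{:}\,C$. I split on this shared classification. If $e$ is a $\Gamma$-constructor (so $C$ is a kind), then $ \fomTypI{ \ottnt{A} }  = (\lambda x{:}V(C).\, \fomTypI[\Gamma, x : C]{ \ottnt{D} } )\, \fomTypI{ \ottnt{e} } $, which $\beta$-reduces in F$_{\omega}$ to $[ \fomTypI{ \ottnt{e} } /x] \fomTypI[\Gamma, x : C]{ \ottnt{D} } $; by the kind case of Lemma~\ref{lemFOM:typsubst} this is exactly $ \fomTypI{ [e/x]D } $, so $ \fomTypI{ \ottnt{A} }   \leadsto   \fomTypI{ \ottnt{A'} } $. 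If instead $e$ is a $\Gamma$-term (so $C$ is a $\Gamma$-type), the application clause erases $e$ and the $\Gamma$-type case of the lambda clause erases the binder, giving $ \fomTypI{ \ottnt{A} }  =  \fomTypI{ \ottnt{D} } $; the $\Gamma$-type case of Lemma~\ref{lemFOM:typsubst} gives $ \fomTypI{ [e/x]D }  =  \fomTypI{ \ottnt{D} } $, so in fact $ \fomTypI{ \ottnt{A} }  =  \fomTypI{ \ottnt{A'} } $ syntactically. In either case $ \fomTypI{ \ottnt{A} }   \leadsto^{*}   \fomTypI{ \ottnt{A'} } $, hence $ \fomTypI{ \ottnt{A} }   =_{\beta}   \fomTypI{ \ottnt{A'} } $.

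I expect the main obstacle to be precisely this base case: one must recognize that a single CC redex is translated either into a genuine F$_{\omega}$ redex or into no redex at all (when dependency is being erased), and then invoke the matching branch of Lemma~\ref{lemFOM:typsubst} in each situation. Getting the two branches to align --- and checking that the classification-driven case splits in $ \fomTypI{\cdot} $ are respected, which rests on Preservation --- is where essentially all the content of the lemma lives; the congruence cases are then routine appeals to the induction hypothesis.
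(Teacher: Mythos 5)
The core of your argument is right, and it is what the paper intends: the paper gives no explicit proof of this lemma, saying only that it ``follows'' from Lemma~\ref{lemFOM:typsubst}, and your one-step analysis --- in particular the \textsc{SBeta} case, where the translated redex either contracts to $[\fomTypI{e}/x]\fomTypI[\Gamma,x:C]{D}$ (domain a kind) or is erased outright (domain a type), with the matching branch of the substitution lemma closing each subcase --- is exactly the content that derivation requires.

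There is, however, one genuine gap in your reduction from $=_{\beta}$ to single steps. You propose to chain the one-step statement along the zig-zag underlying a derivation of $\ottnt{A}  =_{\beta}  \ottnt{A'}$, citing Preservation to keep every intermediate expression a kind or $\Gamma$-constructor. Preservation only covers the \emph{forward} links: for an expansion link $a_{i+1}  \leadsto  a_i$ (rule \textsc{CVExp}), knowing that $a_i$ is well-typed tells you nothing about $a_{i+1}$ --- subject expansion fails, and the zig-zag may pass through expressions that are not well-typed at all, on which $\fomTypI{\cdot}$ is simply undefined. The standard repair, which your write-up omits, is to first apply Confluence (Theorem~\ref{thm:confluence}): from $\ottnt{A}  =_{\beta}  \ottnt{A'}$ obtain a common reduct $B$ with $\ottnt{A}  \leadsto^{*}  \ottnt{B}$ and $\ottnt{A'}  \leadsto^{*}  \ottnt{B}$, so that every intermediate expression lies on a forward reduction path from a well-typed constructor or kind, where Preservation does apply and your one-step lemma can be iterated on each leg before recombining with symmetry and transitivity of $=_{\beta}$ in F$_{\omega}$. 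With that substitution of Confluence for the zig-zag chaining, the proof goes through.
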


Before we can state that the results of $\fomTypI{\cdot}$ are
classified by $V$, we must extend the translation to contexts.  As
mentioned, $\fomCtxI{\cdot}$ will add a type variable $\tvZ :
\sortstar$ to the context.  There are two additional changes. First, a
variable $z : \ottsym{(}  \ottmv{x}  \ottsym{:}   \ast   \ottsym{)}  \to   \ast $ will be added to help provide a
canonical inhabitant for each type.  Second, for each kind variable
$x$ which appears in $\Gamma$, the translation will add another
variable $w^x : x$.  This last change simply ensures that contexts
match up with the translation of product types, where we add an extra
argument in the case of kinds as discussed above.

We define the translation of contexts in two parts.  First, a function
$\fomTypI{x:A}$ maps each context binding to one or two translated
bindings:
$$
  \fomTypI{x:A} =
  \begin{cases}
    x:\fomTypI{A}, w^x : x & \text{ if $A$ is a $\Gamma$-kind }\\
    x:\fomTypI{A} & \text{ if $A$ is a $\Gamma$-type }
  \end{cases}
$$
The translation of a context simply maps this last function onto each
binding and adds $\tvZ$ and $z$ to the front, as mentioned.
Suppose $\Gamma = x_1 : A_1, \ldots, x_n:A_n$, then:
$$
\fomCtxI{\Gamma} = \tvZ:*,\; z:\ottsym{(}  \ottmv{x}  \ottsym{:}   \ast   \ottsym{)}  \to   \ast ,\;
                   \fomTypI{x_1:A_1},\; \ldots,\; \fomTypI{x_n:A_n}
$$
Now the soundness of the translation of types follows
straightforwardly by induction on typing derivations
\begin{lemma}[Soundness of $\fomTypI{\cdot}$]
  Suppose $A$ is a sort, kind or $\Gamma$-type of CC such that 
  $\Gamma  \vdash_{CC}  \ottnt{A}  \ottsym{:}  \ottnt{B}$.  Then $ \fomCtxI{ \Gamma }   \vdash_{\text{F}_{\omega} }   \fomTypI{ \ottnt{A} }   \ottsym{:}   V( \ottnt{B} ) $.
\end{lemma}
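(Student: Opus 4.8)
The plan is to prove this by induction on the derivation of $\Gamma \vdash_{CC} A : B$. The statement as phrased covers only sorts, kinds, and $\Gamma$-types, but the induction does not close on that class: to translate a $\Gamma$-type that happens to be an application $\ottnt{A_{{\mathrm{1}}}} \, \ottnt{A_{{\mathrm{2}}}}$, the \textsc{TApp} case forces a recursion on the head $\ottnt{A_{{\mathrm{1}}}}$, which is a $\Gamma$-constructor of \emph{higher} kind and hence neither a kind nor a type. So first I would strengthen the statement to the full domain of $\fomTypI{\cdot}$: for every sort, kind, or $\Gamma$-constructor $A$ with $\Gamma \vdash_{CC} A : B$, one has $\fomCtxI{\Gamma} \vdash_{\text{F}_{\omega}} \fomTypI{A} : V(B)$. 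The lemma is then the special case where the classifier $B$ is $\sortbox$ or $\sortstar$, in which $V(B) = \sortstar$.

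Before the induction I would dispatch the auxiliary fact that $\vdash \Gamma$ implies $\vdash_{\text{F}_{\omega}} \fomCtxI{\Gamma}$, proved by induction on the context derivation using the kind/type case split of $\fomTypI{x:A}$. In fact, checking well-formedness of a translated binding already invokes soundness for the type it records, so the cleanest route is to prove soundness and context well-formedness \emph{together} by one induction. This is what makes the two leaf cases immediate. In \textsc{TSort} the only CC axiom is $(\sortstar,\sortbox)$, so $A = \sortstar$, $\fomTypI{\sortstar} = \tvZ$, and $\tvZ : \sortstar$ is recorded at the front of $\fomCtxI{\Gamma}$. In \textsc{TVar} the variable is necessarily a constructor variable --- a context can never bind $x : \sortbox$, since $\sortbox$ is untypeable --- and the context translation is arranged so that such an $x$ of CC-kind $B$ is bound in $\fomCtxI{\Gamma}$ at the F$_{\omega}$-kind $V(B)$, so the case reduces to reading off the binding.

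For the structural rules I would verify that each clause of $\fomTypI{\cdot}$ lines up with the corresponding F$_{\omega}$ kinding rule. In \textsc{TPi} and \textsc{TLam} the level of the domain dictates whether we emit a genuine F$_{\omega}$ binder (domain a kind) or erase it (domain a type); in either case the codomain is handled by the induction hypothesis, and the inserted witness argument of type $\fomTypI{A}$ is well-kinded because, by the strengthened hypothesis applied to the kind $A$, one has $\fomTypI{A} : \sortstar$ (since $V(\sortbox) = \sortstar$). The \textsc{TConv} case needs only the observation that $V$ is invariant under $\beta$-conversion of classifiers: since $V$ reads off nothing but the $\sortstar/\!\to$ skeleton of a kind, and reduction cannot turn a kind-domain into a type-domain, $B =_\beta B'$ gives $V(B) = V(B')$, whence the conclusion follows from the induction hypothesis applied with classifier $B'$.

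The main obstacle is the \textsc{TApp} case. Here $A = \ottnt{A_{{\mathrm{1}}}} \, \ottnt{A_{{\mathrm{2}}}}$ with $\Gamma \vdash_{CC} \ottnt{A_{{\mathrm{1}}}} : (x\!:\!A')\to B''$ and classifier $B = [\ottnt{A_{{\mathrm{2}}}}/x]B''$, and the difficulty is twofold. First, I must know the \emph{shape} of the F$_{\omega}$ kind of $\fomTypI{\ottnt{A_{{\mathrm{1}}}}}$ in order to apply it; this is exactly where the strengthened hypothesis on the higher constructor $\ottnt{A_{{\mathrm{1}}}}$ pays off, yielding $\fomTypI{\ottnt{A_{{\mathrm{1}}}}} : V((x\!:\!A')\to B'')$, which is a function kind precisely when $A'$ is a kind. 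Second, I must reconcile the target kind $V([\ottnt{A_{{\mathrm{2}}}}/x]B'')$ coming from the dependent substitution with what the F$_{\omega}$ application rule delivers. This requires a commutation law for $V$ in the spirit of Lemma~\ref{lemFOM:typsubst} --- namely that substitution leaves the kind-skeleton unchanged, so $V([\ottnt{A_{{\mathrm{2}}}}/x]B'') = V(B'')$ --- which itself leans on the classification theorem to see that neither a variable nor the substituted expression is ever a kind. Making this law line up with the two branches of the $\fomTypI{\ottnt{A_{{\mathrm{1}}}} \, \ottnt{A_{{\mathrm{2}}}}}$ clause (applied head when $\ottnt{A_{{\mathrm{2}}}}$ is a constructor, erased argument when $\ottnt{A_{{\mathrm{2}}}}$ is a term) is the only genuinely delicate bookkeeping in the proof; every other case is a direct match between a translation clause and an F$_{\omega}$ typing rule.
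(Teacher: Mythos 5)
Your proposal is correct and takes essentially the route the paper intends --- the paper merely asserts that the lemma ``follows straightforwardly by induction on typing derivations,'' and your induction is a faithful elaboration of that claim, with the strengthening to $\Gamma$-constructors being precisely the generalization needed to close the \textsc{TApp} case (the paper leaves this implicit by defining $\fomTypI{\cdot}$ on all constructors while stating the lemma only for types), and with the two $V$-invariance facts (under substitution and under $\beta$-conversion of kinds) correctly identified as the auxiliary lemmas the induction consumes. The one point worth flagging is that your \textsc{TVar} case assumes a kind-level binding $x:B$ is recorded in $\fomCtxI{\Gamma}$ at the F$_{\omega}$-kind $V(B)$, which matches how the paper actually uses the context translation later (it writes $\fomCtxI{\Gamma,x:A} = \fomCtxI{\Gamma}, x:V(A), w^x:\fomTypI{A}$ in the term-soundness proof) but not the displayed definition of $\fomTypI{x:A}$, which appears to contain a typo; your reading is the correct one.
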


\subsection{The translation of terms}

As mentioned in the last section, the translation of contexts permits
the construction of a canonical inhabitant of each type or kind in F$_{\omega}$.
In particular, for any expression $B$ such that $ \fomCtxI{ \Gamma }   \vdash_{\text{F}_{\omega} }  \ottnt{B}  \ottsym{:}  \ottnt{s}$, we will define a term $c^B$ of type $B$ in the same context.
That is, $ \fomCtxI{ \Gamma }   \vdash_{\text{F}_{\omega} }   c^{ \ottnt{B} }   \ottsym{:}  \ottnt{B}$.  If $s =  \ast $, then we may use
the term $z$ to construct $ c^{ \ottnt{B} } $:
$$
 c^{ \ottnt{B} }  = \ottmv{z} \, \ottnt{B} \hfill \text{ when $B$ is a type}
$$
Otherwise, $B$ is a kind and we define:
\begin{align*}
   c^{  \ast  }  &=  \tvZ \\
   c^{ \ottsym{(}  \ottmv{x}  \ottsym{:}  \ottnt{A}  \ottsym{)}  \to  \ottnt{B} }  &= \lambda  \ottmv{x}  \ottsym{:}  \ottnt{A}  \ottsym{.}   c^{ \ottnt{B} } 
\end{align*}
The evaluation behavior of these canonical inhabitants is not very
important.  The chief purpose of $c$ is to help in the term
translation of product types.  The problem is that when $\ottsym{(}  \ottmv{x}  \ottsym{:}  \ottnt{A}  \ottsym{)}  \to  \ottnt{B}$
is a valid CC type, its translation $\ottsym{(}  \ottmv{x}  \ottsym{:}   \fomTrmI{ \ottnt{A} }   \ottsym{)}  \to   \fomTrmI{ \ottnt{B} } $ is not
necessarily well-typed in F$_{\omega}$.  The translation $\fomTypI{\cdot}$
handles this by erasing dependency, but $\fomTrmI{\cdot}$ must retain
all the possible reductions which begin at $\ottsym{(}  \ottmv{x}  \ottsym{:}  \ottnt{A}  \ottsym{)}  \to  \ottnt{B}$.  Instead
of translating it as a product, we use $c$ to construct a function
whose application to $A$ and $B$ is well-typed.  In particular, $ c^{  \tvZ   \to   \tvZ   \to   \tvZ  }  \,  \fomTrmI{ \ottnt{A} }  \,  \fomTrmI{ \ottnt{B} } $ will be a valid F$_{\omega}$ expression.  Since
$\fomTrmI{\cdot}$ does not erase the terms from $A$ and $B$, this
retains all the possible reduction sequences.

We now present the full translation of terms:

\begin{align*}
   \fomTrmI{  \ast  }  &=  c^{  \tvZ  } \\
   \fomTrmI{ \ottmv{x} }  &=
  \begin{cases}
    w^x & \text{ if $x$ is a $\Gamma$-type }\\
    x & \text{ if $x$ is a $\Gamma$-term }
  \end{cases}\\
   \fomTrmI{ \ottsym{(}  \ottmv{x}  \ottsym{:}  \ottnt{A}  \ottsym{)}  \to  \ottnt{B} }  &=
  \begin{cases}
     c^{  \tvZ   \to   \tvZ   \to   \tvZ  }  \,  \fomTrmI{ \ottnt{A} }  \, \ottsym{(}  \ottsym{[}   c^{  V( \ottnt{A} )  }   \ottsym{/}  \ottmv{x}  \ottsym{]}   [   c^{  \fomTypI{ \ottnt{A} }  }  /w^{ \ottmv{x} }]   \fomTrmI[   \Gamma , \ottmv{x} \!:\! \ottnt{A}   ]{ \ottnt{B} }    \ottsym{)}
          & \text{ if $A$ is a kind }\\
     c^{  \tvZ   \to   \tvZ   \to   \tvZ  }  \,  \fomTrmI{ \ottnt{A} }  \, \ottsym{(}  \ottsym{[}   c^{  \fomTypI{ \ottnt{A} }  }   \ottsym{/}  \ottmv{x}  \ottsym{]}   \fomTrmI[   \Gamma , \ottmv{x} \!:\! \ottnt{A}   ]{ \ottnt{B} }   \ottsym{)} 
          & \text{ if $A$ is a $\Gamma$-type }
  \end{cases}\\
   \fomTrmI{ \lambda  \ottmv{x}  \ottsym{:}  \ottnt{A}  \ottsym{.}  \ottnt{b} }  &=
  \begin{cases}
    \ottsym{(}  \lambda  \ottmv{y}  \ottsym{:}   \tvZ   \ottsym{.}  \lambda  \ottmv{x}  \ottsym{:}   V( \ottnt{A} )   \ottsym{.}   \lambda w^{ \ottmv{x} }:  \fomTypI{ \ottnt{A} }  .  \fomTrmI[   \Gamma , \ottmv{x} \!:\! \ottnt{A}   ]{ \ottnt{b} }    \ottsym{)} \,  \fomTrmI{ \ottnt{A} } 
      &\text {if $A$ is a kind, picking $y$ fresh }\\
    \ottsym{(}  \lambda  \ottmv{y}  \ottsym{:}   \tvZ   \ottsym{.}  \lambda  \ottmv{x}  \ottsym{:}   \fomTypI{ \ottnt{A} }   \ottsym{.}   \fomTrmI[   \Gamma , \ottmv{x} \!:\! \ottnt{A}   ]{ \ottnt{b} }   \ottsym{)} \,  \fomTrmI{ \ottnt{A} }  
      &\text { if $A$ is a $\Gamma$-type, picking $y$ fresh }
  \end{cases}\\
   \fomTrmI{ \ottnt{A} \, \ottnt{B} }  &=
  \begin{cases}
     \fomTrmI{ \ottnt{A} }  \,  \fomTypI{ \ottnt{B} }  \,  \fomTrmI{ \ottnt{B} }  & \text{ if $B$ is a $\Gamma$-type}\\
     \fomTrmI{ \ottnt{A} }  \,  \fomTrmI{ \ottnt{B} }  & \text{ if $B$ is a $\Gamma$-term}
  \end{cases}
\end{align*}

\begin{theorem}[Soundness of $\fomTrmI{\cdot}$]
  If $\Gamma  \vdash_{CC}  \ottnt{a}  \ottsym{:}  \ottnt{A}$ then $ \fomCtxI{ \Gamma }   \vdash_{\text{F}_{\omega} }   \fomTrmI{ \ottnt{a} }   \ottsym{:}   \fomTypI{ \ottnt{A} } $.
\end{theorem}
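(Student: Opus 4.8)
The plan is to prove the statement by induction on the derivation of $\Gamma \vdash_{CC} a : A$, with one case for each of the six typing rules (TSort, TVar, TPi, TLam, TApp, TConv). Throughout, the Classification theorem is what decides which branch of the case-split definitions of $\fomTrmI{\cdot}$ and $\fomTypI{\cdot}$ applies---whether a bound domain is a kind or a $\Gamma$-type, and whether an argument is a $\Gamma$-constructor or a $\Gamma$-term. I will freely use the soundness of $\fomTypI{\cdot}$, the substitution lemma (Lemma~\ref{lemFOM:typsubst}), the fact that $\fomTypI{\cdot}$ preserves $=_\beta$, and the defining property of the canonical inhabitants, namely $\fomCtxI{\Gamma} \vdash_{\text{F}_{\omega}} c^{B} : B$ whenever $\fomCtxI{\Gamma} \vdash_{\text{F}_{\omega}} B : s$; well-formedness of $\fomCtxI{\Gamma}$ comes along from the context translation.

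The leaf and conversion cases are quick. For TSort we have $\Gamma \vdash_{CC} \sortstar : \sortbox$ (the only axiom of CC), where $\fomTrmI{\sortstar} = c^{\tvZ}$ and $\fomTypI{\sortbox} = \tvZ$; since $\tvZ : \sortstar$ in $\fomCtxI{\Gamma}$, the canonical-inhabitant property gives $c^{\tvZ} : \tvZ$. For TVar the term translation is a variable ($w^x$ or $x$), whose declared type in $\fomCtxI{\Gamma}$ is read off from the context translation, which was designed precisely so that---once Classification selects the matching clause---this declared type is $\fomTypI{A}$. For TConv we are given $\Gamma \vdash_{CC} a : A$, $\Gamma \vdash_{CC} B : s$ and $A =_\beta B$, so that $\fomTypI{\cdot}$ is defined on both. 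By the induction hypothesis $\fomTrmI{a} : \fomTypI{A}$; the $=_\beta$-preservation lemma gives $\fomTypI{A} =_\beta \fomTypI{B}$, while soundness of $\fomTypI{\cdot}$ on $\Gamma \vdash_{CC} B : s$ shows $\fomTypI{B}$ is a well-formed $\text{F}_{\omega}$ type, so $\text{F}_{\omega}$'s own conversion rule retypes $\fomTrmI{a}$ at $\fomTypI{B}$.

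The structural cases carry the real content, and each depends on the scaffolding that $\fomTrmI{\cdot}$ inserts. In TApp, with $\Gamma \vdash_{CC} a : (x{:}A) \to B$ and $\Gamma \vdash_{CC} b : A$, the induction hypothesis types $\fomTrmI{a}$ at $\fomTypI{(x{:}A) \to B}$; I then apply it to the translated argument(s) and read off the result type. In the branch matching $A$ a kind the translation is $\fomTrmI{a}\,\fomTypI{b}\,\fomTrmI{b}$: soundness of $\fomTypI{\cdot}$ gives $\fomTypI{b} : V(A)$, the induction hypothesis gives $\fomTrmI{b} : \fomTypI{A}$, and the resulting codomain $[\fomTypI{b}/x]\fomTypI[\Gamma, x\!:\!A]{B}$ is identified with $\fomTypI{[b/x]B}$ by Lemma~\ref{lemFOM:typsubst}. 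In the branch matching $A$ a $\Gamma$-type the translation is $\fomTrmI{a}\,\fomTrmI{b}$, and the same lemma shows $\fomTypI{[b/x]B} = \fomTypI{B}$ with $x$ not free. In TLam and TPi the translations wrap the body in a dummy redex $(\lambda y{:}\tvZ.\,\cdots)\,\fomTrmI{A}$ and, when $A$ is a kind, an extra $w^x$-abstraction; here I invoke the induction hypothesis on the body in the extended context $\fomCtxI{\Gamma, x\!:\!A}$, together with $\fomTrmI{A} : \tvZ$ and the canonical inhabitants $c^{V(A)}, c^{\fomTypI{A}}$, to check that the inserted applications and substitutions typecheck and that the computed $\text{F}_{\omega}$ type is exactly $\fomTypI{(x{:}A) \to B}$ (respectively $\tvZ = \fomTypI{s_3}$ for TPi, after closing off $x$ and $w^x$ by substituting canonical inhabitants).

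I expect the main obstacle to be the $\Pi$- and $\lambda$-cases together with TApp, since these are where the two translations are woven together. Concretely the difficulty is bookkeeping: verifying in each kind/type subcase that the inserted redexes, the $V(\cdot)$- and $\fomTypI{\cdot}$-typed arguments, and the canonical inhabitants line up so that the computed $\text{F}_{\omega}$ type is syntactically---or, where needed, $\beta$-convertibly---the intended $\fomTypI{\cdot}$ of the CC type, and that the free variables $x$ and $w^x$ behave correctly under substitution. The linchpin is Lemma~\ref{lemFOM:typsubst}: it is exactly what makes the substituted codomain in TApp agree with $\fomTypI{[b/x]B}$, and the same reasoning lets the canonical-inhabitant substitutions in TPi collapse the codomain to $\tvZ$. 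Should the body translation in TLam/TPi fail to substitute cleanly, I would isolate an auxiliary substitution lemma for $\fomTrmI{\cdot}$ mirroring Lemma~\ref{lemFOM:typsubst}, but I expect the type-level lemma to suffice, because the inserted structure only ever closes off variables using well-typed canonical inhabitants.
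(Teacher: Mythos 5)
Your proposal is correct and follows essentially the same route as the paper's proof: induction on the typing derivation, discharging each case with soundness of $\fomTypI{\cdot}$, Lemma~\ref{lemFOM:typsubst}, the canonical inhabitants $c^B$, and F$_{\omega}$ weakening/conversion (the paper writes out only the \textsc{TLam} case, in exactly the shape you describe). The only slight imprecision is attributing the collapse of the \textsc{TPi} codomain to Lemma~\ref{lemFOM:typsubst}: what is actually needed there is the induction hypothesis on the body (giving type $\fomTypI{s_2} = \tvZ$) together with ordinary F$_{\omega}$ substitution of the well-typed canonical inhabitants for $x$ and $w^x$, but this does not affect the overall argument.
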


As we have seen with previous soundness theorems, this proof is not
conceptually surprising but requires a certain amount of book keeping.
We show only one interesting case:
\begin{proof}
  We go by induction on the structure of the derivation $\D$ of
  $\Gamma  \vdash_{CC}  \ottnt{a}  \ottsym{:}  \ottnt{A}$.
  \begin{itemize}
  \item[{\bf Case:}] $\D = \lowered{
    \infer[TLam]
     {\infer{}{\D_1 \\\\ \Gamma  \vdash_{CC}  \ottsym{(}  \ottmv{x}  \ottsym{:}  \ottnt{A}  \ottsym{)}  \to  \ottnt{B}  \ottsym{:}  \ottnt{s}} \\
      \infer{}{\D_2 \\\\  \Gamma , \ottmv{x} \!:\! \ottnt{A}   \vdash_{CC}  \ottnt{b}  \ottsym{:}  \ottnt{B}}}
     {\Gamma  \vdash_{CC}  \lambda  \ottmv{x}  \ottsym{:}  \ottnt{A}  \ottsym{.}  \ottnt{b}  \ottsym{:}  \ottsym{(}  \ottmv{x}  \ottsym{:}  \ottnt{A}  \ottsym{)}  \to  \ottnt{B}}
   }$
    
   Inversion on $\D_1$ yields a subderivation showing either that $A$
   is a kind that $A$ is a $\Gamma$-type in CC.  We will consider each
   possibility individually.  Note that because $\fomTypI{\sortbox} =
   \fomTypI{\sortstar} = 0$, in either case we have an induction
   hypothesis:
   $$
   \mathrm{IH}_A :  \fomCtxI{ \Gamma }   \vdash_{\text{F}_{\omega} }   \fomTrmI{ \ottnt{A} }   \ottsym{:}   \tvZ 
   $$
   \begin{itemize}
   \item Suppose first that $\Gamma  \vdash_{CC}  \ottnt{A}  \ottsym{:}   \ast $.  Unfolding the
     definitions of the translations, we see that we must show:
     $$
      \fomCtxI{ \Gamma }   \vdash_{\text{F}_{\omega} }  \ottsym{(}  \lambda  \ottmv{y}  \ottsym{:}   \tvZ   \ottsym{.}  \lambda  \ottmv{x}  \ottsym{:}   \fomTypI{ \ottnt{A} }   \ottsym{.}   \fomTrmI[   \Gamma , \ottmv{x} \!:\! \ottnt{A}   ]{ \ottnt{b} }   \ottsym{)} \,  \fomTrmI{ \ottnt{A} }   \ottsym{:}  \ottsym{(}  \ottmv{x}  \ottsym{:}   \fomTypI{ \ottnt{A} }   \ottsym{)}  \to   \fomTypI[   \Gamma , \ottmv{x} \!:\! \ottnt{A}   ]{ \ottnt{B} } 
     $$
     Here $y$ is some variable which doesn't occur in $\Gamma$, $A$,
     $B$ or $b$.  By IH$_A$ and the \textsc{TApp} rule, it will
     be enough to show:
     $$
      \fomCtxI{ \Gamma }   \vdash_{\text{F}_{\omega} }  \lambda  \ottmv{y}  \ottsym{:}   \tvZ   \ottsym{.}  \lambda  \ottmv{x}  \ottsym{:}   \fomTypI{ \ottnt{A} }   \ottsym{.}   \fomTrmI[   \Gamma , \ottmv{x} \!:\! \ottnt{A}   ]{ \ottnt{b} }   \ottsym{:}  \ottsym{(}  \ottmv{y}  \ottsym{:}   \tvZ   \ottsym{)}  \to  \ottsym{(}  \ottmv{x}  \ottsym{:}   \fomTypI{ \ottnt{A} }   \ottsym{)}  \to   \fomTypI[   \Gamma , \ottmv{x} \!:\! \ottnt{A}   ]{ \ottnt{B} } 
     $$
     Recall that $\tvZ : \sortstar$ will appear in $ \fomCtxI{ \Gamma } $.  By
     applying soundness for $\fomTypI{\cdot}$ to the subderivations of
     $\D_2$, we find that $ \fomTypI{ \ottnt{A} } $ and $ \fomTypI[   \Gamma , \ottmv{x} \!:\! \ottnt{A}   ]{ \ottnt{B} } $ are
     also valid F$_{\omega}$ types in the contexts $\Gamma$ and
     $ \Gamma , \ottmv{x} \!:\! \ottnt{A} $, respectively.  So by two applications of
     \textsc{TPi} and a standard weakening lemma for F$_{\omega}$, we
     have:
     $$
      \fomCtxI{ \Gamma }   \vdash_{\text{F}_{\omega} }  \ottsym{(}  \ottmv{y}  \ottsym{:}   \tvZ   \ottsym{)}  \to  \ottsym{(}  \ottmv{x}  \ottsym{:}   \fomTypI{ \ottnt{A} }   \ottsym{)}  \to   \fomTypI[   \Gamma , \ottmv{x} \!:\! \ottnt{A}   ]{ \ottnt{B} }   \ottsym{:}   \ast 
     $$
     Therefore, by rule \textsc{TLam}, it will be enough to show:
     $$
       \fomCtxI{ \Gamma }  , \ottmv{y} \!:\!  \tvZ    \vdash_{\text{F}_{\omega} }  \lambda  \ottmv{x}  \ottsym{:}   \fomTypI{ \ottnt{A} }   \ottsym{.}   \fomTrmI[   \Gamma , \ottmv{x} \!:\! \ottnt{A}   ]{ \ottnt{b} }   \ottsym{:}  \ottsym{(}  \ottmv{x}  \ottsym{:}   \fomTypI{ \ottnt{A} }   \ottsym{)}  \to   \fomTypI[   \Gamma , \ottmv{x} \!:\! \ottnt{A}   ]{ \ottnt{B} } 
     $$
     We have already observed that $\ottsym{(}  \ottmv{x}  \ottsym{:}   \fomTypI{ \ottnt{A} }   \ottsym{)}  \to   \fomTypI[   \Gamma , \ottmv{x} \!:\! \ottnt{A}   ]{ \ottnt{B} } $ is a
     valid F$_{\omega}$ type in this context.  Thus, by another application
     of \textsc{TLam}, it is sufficient to show:
     $$
        \fomCtxI{ \Gamma }  , \ottmv{y} \!:\!  \tvZ   , \ottmv{x} \!:\!  \fomTypI{ \ottnt{A} }    \vdash_{\text{F}_{\omega} }   \fomTrmI[   \Gamma , \ottmv{x} \!:\! \ottnt{A}   ]{ \ottnt{b} }   \ottsym{:}   \fomTypI[   \Gamma , \ottmv{x} \!:\! \ottnt{A}   ]{ \ottnt{B} } 
     $$
     Observe that the IH for $D_2$ is close to this (after slightly
     unfolding the interpretation of the context):
     $$
       \fomCtxI{ \Gamma }  , \ottmv{x} \!:\!  \fomTypI{ \ottnt{A} }    \vdash_{\text{F}_{\omega} }   \fomTrmI[   \Gamma , \ottmv{x} \!:\! \ottnt{A}   ]{ \ottnt{b} }   \ottsym{:}   \fomTypI[   \Gamma , \ottmv{x} \!:\! \ottnt{A}   ]{ \ottnt{B} } 
     $$
     And the result follows by a weakening lemma.
   \item Suppose instead that $\Gamma  \vdash_{CC}  \ottnt{A}  \ottsym{:}   \Box $.  After unfolding the
     translations, we must show:
     $$
      \fomCtxI{ \Gamma }   \vdash_{\text{F}_{\omega} }  \ottsym{(}  \lambda  \ottmv{y}  \ottsym{:}   \tvZ   \ottsym{.}  \lambda  \ottmv{x}  \ottsym{:}   V( \ottnt{A} )   \ottsym{.}   \lambda w^{ \ottmv{x} }:  \fomTypI{ \ottnt{A} }  .  \fomTrmI[   \Gamma , \ottmv{x} \!:\! \ottnt{A}   ]{ \ottnt{b} }    \ottsym{)} \,  \fomTrmI{ \ottnt{A} }   \ottsym{:}  \ottsym{(}  \ottmv{x}  \ottsym{:}   V( \ottnt{A} )   \ottsym{)}  \to   \fomTypI{ \ottnt{A} }   \to   \fomTypI[   \Gamma , \ottmv{x} \!:\! \ottnt{A}   ]{ \ottnt{B} } 
     $$
     Here $y$ is some fresh variable.  By IH$_A$ and rule \textsc{TApp},
     it is enough to show:
     $$
      \fomCtxI{ \Gamma }   \vdash_{\text{F}_{\omega} }  \lambda  \ottmv{y}  \ottsym{:}   \tvZ   \ottsym{.}  \lambda  \ottmv{x}  \ottsym{:}   V( \ottnt{A} )   \ottsym{.}   \lambda w^{ \ottmv{x} }:  \fomTypI{ \ottnt{A} }  .  \fomTrmI[   \Gamma , \ottmv{x} \!:\! \ottnt{A}   ]{ \ottnt{b} }    \ottsym{:}  \ottsym{(}  \ottmv{y}  \ottsym{:}   \tvZ   \ottsym{)}  \to  \ottsym{(}  \ottmv{x}  \ottsym{:}   V( \ottnt{A} )   \ottsym{)}  \to   \fomTypI{ \ottnt{A} }   \to   \fomTypI[   \Gamma , \ottmv{x} \!:\! \ottnt{A}   ]{ \ottnt{B} } 
     $$
     As before, $\tvZ$ is a $ \fomCtxI{ \Gamma } $-type and soundness for
     $\fomTypI{\cdot}$ implies that $\fomTypI{A}$ and $ \fomTypI[   \Gamma , \ottmv{x} \!:\! \ottnt{A}   ]{ \ottnt{B} } $ are valid types in F$_{\omega}$ as well.  The definition of $V$
     ensures that $V(A)$ is an F$_{\omega}$ kind.  So by several applications
     of \textsc{TPi} and weakening for F$_{\omega}$, we have:
     $$
      \fomCtxI{ \Gamma }   \vdash_{\text{F}_{\omega} }  \ottsym{(}  \ottmv{y}  \ottsym{:}   \tvZ   \ottsym{)}  \to  \ottsym{(}  \ottmv{x}  \ottsym{:}   V( \ottnt{A} )   \ottsym{)}  \to   \fomTypI{ \ottnt{A} }   \to   \fomTypI[   \Gamma , \ottmv{x} \!:\! \ottnt{A}   ]{ \ottnt{B} }   \ottsym{:}   \ast 
     $$
     Thus, by three applications of \textsc{TLam}, it is enough to
     show:
     $$
         \fomCtxI{ \Gamma }  , \ottmv{y} \!:\!  \tvZ   , \ottmv{x} \!:\!  V( \ottnt{A} )   ,w^{ \ottmv{x} }\!:\!  \fomTypI{ \ottnt{A} }    \vdash_{\text{F}_{\omega} }   \fomTrmI[   \Gamma , \ottmv{x} \!:\! \ottnt{A}   ]{ \ottnt{b} }   \ottsym{:}   \fomTypI[   \Gamma , \ottmv{x} \!:\! \ottnt{A}   ]{ \ottnt{B} } 
     $$
     This follows from the IH for $\D_2$, the observation that $ \fomCtxI{  \Gamma , \ottmv{x} \!:\! \ottnt{A}  }  =    \fomCtxI{ \Gamma }  , \ottmv{x} \!:\!  V( \ottnt{A} )   ,w^{ \ottmv{x} }\!:\!  \fomTypI{ \ottnt{A} }  $, and weakening for
     F$_{\omega}$. \qedhere
   \end{itemize}
  \end{itemize}
\end{proof}

The soundness of the term translation demonstrates that it preserves
the static semantics of CC expressions.  We must also show that it
preserves their reduction behavior.  A lemma describing the way this
function interacts with substitutions is needed.  The duplication in
the first case below mirrors the duplication we have discussed in the
translation.
\begin{lemma}[Substitution for $\fomTrmI{\cdot}$] \label{lem:fomTrmISubst}
  Suppose $\Gamma  \vdash_{CC}  \ottnt{a}  \ottsym{:}  \ottnt{A}$ and $ ( \ottmv{x} : \ottnt{B} ) \in  \Gamma $.
  \begin{itemize}
  \item If $B$ is a kind and $b$ is a $\Gamma$-type in CC, then
    $$
    \fomTrmI{[b/x]a} = [\fomTypI{b}/x][\fomTrmI{b}/w^x]\fomTrmI{a}
    $$
  \item If $B$ is a $\Gamma$-type and $b$ is a $\Gamma$-term in CC,
    then
    $$
    \fomTrmI{[b/x]a} = [\fomTrmI{b}/x]\fomTrmI{a}
    $$
  \end{itemize}
\end{lemma}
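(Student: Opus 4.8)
The plan is to establish both equations at once by induction on the derivation $\mathcal{D}$ of $\Gamma \vdash_{CC} a : A$, so that the classification of each subexpression is available. In either case let $\sigma$ denote the F$_{\omega}$ substitution appearing on the right: $\sigma = [\fomTypI{b}/x][\fomTrmI{b}/w^x]$ when $B$ is a kind and $\sigma = [\fomTrmI{b}/x]$ when $B$ is a $\Gamma$-type, so the goal is uniformly $\fomTrmI{[b/x]a} = \sigma\,\fomTrmI{a}$. As usual for a substitution lemma, $b$ is well typed in the prefix of $\Gamma$ ending just before $x$, so $x \notin FV(b)$; consequently neither $x$ nor $w^x$ is free in $\fomTrmI{b}$ or $\fomTypI{b}$. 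I will also use freely that substitution preserves typing in CC, hence the sort of every subexpression, so that the case splits in the definitions of $\fomTrmI{\cdot}$ and $\fomTypI{\cdot}$ agree on the two sides of each equation.

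Before the induction I would record four small commutation facts. First, $V(\cdot)$ only records the arrow structure among kinds and erases every domain that is not itself a kind, so $V([b/x]C) = V(C)$ for each kind $C$, and each such $V(C)$ is a closed F$_{\omega}$ kind that $\sigma$ fixes. Second, the type translation mentions no witness variable $w^y$ and, in case two, no term variable $x$; thus $\sigma\,\fomTypI{C} = [\fomTypI{b}/x]\fomTypI{C}$ in case one and $\sigma\,\fomTypI{C} = \fomTypI{C}$ in case two, and in both cases this matches Lemma~\ref{lemFOM:typsubst}, giving $\sigma\,\fomTypI{C} = \fomTypI{[b/x]C}$. Third, since $c^{C}$ is defined by recursion on the structure of $C$, a short induction yields $\sigma\,c^{C} = c^{\sigma C}$; together with the previous facts this gives $\sigma\,c^{V(C)} = c^{V(C)}$ and $\sigma\,c^{\fomTypI{C}} = c^{\fomTypI{[b/x]C}}$. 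Fourth, because substituting $b$ for $x$ does not change the sort of any binding, the context indices $\Gamma, y : A_1$ and $\Gamma, y : [b/x]A_1$ make identical classification decisions, so they induce the very same translation function; this is what lets me apply the induction hypothesis to a subexpression at its original context index.

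With these in hand most cases are routine unfoldings. A sort gives $c^{\tvZ} = \sigma\,c^{\tvZ}$, which holds since $c^{\tvZ}$ mentions only $z$ and $\tvZ$. For a variable $y$: if $y \neq x$ then $\sigma$ fixes $\fomTrmI{y}$, and if $y = x$ then $\fomTrmI{[b/x]x} = \fomTrmI{b}$ matches $\sigma(w^x) = \fomTrmI{b}$ (case one) or $\sigma(x) = \fomTrmI{b}$ (case two), using $x, w^x \notin FV(\fomTrmI{b})$. For an application one splits on the classification of the argument and applies the two induction hypotheses, handling the extra type argument with the $\fomTypI{\cdot}$ facts recorded above. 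The abstraction case unfolds the appropriate clause of $\fomTrmI{\lambda x' : A.\,b'}$ (renaming the bound variable to $x'$), pushes $\sigma$ through the outer redex and the nested $\lambda$'s—whose binders are fresh for $\sigma$—rewrites the annotations $V(A)$ and $\fomTypI{A}$ by the recorded facts, and closes with the induction hypotheses for $A$ at $\Gamma$ and for $b'$ at $\Gamma, x' : A$.

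I expect the product case to be the only genuine obstacle. Writing the product as $(y : A_1) \to A_2$, its translation has the shape $c^{\tvZ \to \tvZ \to \tvZ}\,\fomTrmI{A_1}\,M$ with $M = [c^{V(A_1)}/y][c^{\fomTypI{A_1}}/w^y]\fomTrmI[\Gamma, y:A_1]{A_2}$, so $M$ already carries internal substitutions of canonical inhabitants for the bound variable $y$ and its witness $w^y$. To equate the two sides I must commute the external $\sigma$ past these internal substitutions, which is legitimate precisely because $y$ and $w^y$ are chosen fresh, hence differ from $x, w^x$ and do not occur in $\fomTypI{b}$ or $\fomTrmI{b}$; the standard substitution-commutation lemma then gives $\sigma\,M = [\sigma c^{V(A_1)}/y][\sigma c^{\fomTypI{A_1}}/w^y]\,\sigma\,\fomTrmI[\Gamma, y:A_1]{A_2}$. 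Rewriting by $\sigma c^{V(A_1)} = c^{V(A_1)}$ and $V([b/x]A_1) = V(A_1)$, by $\sigma c^{\fomTypI{A_1}} = c^{\fomTypI{[b/x]A_1}}$, and by $\sigma\,\fomTrmI[\Gamma, y:A_1]{A_2} = \fomTrmI[{\Gamma, y:[b/x]A_1}]{[b/x]A_2}$ (the last from the induction hypothesis and the context-index fact), and using the induction hypothesis $\sigma\,\fomTrmI{A_1} = \fomTrmI{[b/x]A_1}$ on the first argument, turns $\sigma$ applied to the whole product translation into the translation of $(y : [b/x]A_1) \to [b/x]A_2 = [b/x]((y:A_1)\to A_2)$, as required. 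Keeping these two layers of substitution and their freshness side-conditions straight is the crux; the remaining work is bookkeeping.
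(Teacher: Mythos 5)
Your argument is correct, and in fact the paper states this lemma without giving any proof at all, so there is nothing to compare against: your induction on the typing derivation, supported by the four recorded commutation facts (closedness of $V(\cdot)$ under the substitution, Lemma~\ref{lemFOM:typsubst} for the type translation, $\sigma\,c^{C}=c^{\sigma C}$, and insensitivity of the context index beyond classification), is exactly the standard argument one would supply, and you correctly identify the product case---where the external substitution must be commuted past the internal $[c^{V(A_1)}/y][c^{\fomTypI{A_1}}/w^y]$---as the only delicate step. Two fine points worth making explicit if this were written out: the lemma implicitly requires $x\notin FV(b)$ (otherwise the iterated substitution on the right-hand side is not the intended simultaneous one), which you correctly flag; and the context index on the left-hand side $\fomTrmI{[b/x]a}$ should really be the substituted context, which your fourth observation justifies identifying with $\Gamma$ for classification purposes.
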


\subsection{Strong Normalization}

The final step in this proof is to relate reductions from CC
expressions with reductions from their translations.  The following
result says that the term translation does not drop any reduction
steps.

\begin{theorem}[$\fomTrmI{\cdot}$ preserves reduction]\label{lem:fomTrmRed}
  Suppose $\Gamma  \vdash_{CC}  \ottnt{a}  \ottsym{:}  \ottnt{A}$.
  $$
  \ottnt{a}  \leadsto  \ottnt{a'} \hspace{1em} \Rightarrow \hspace{1em}  \fomTrmI[  \Gamma  ]{ \ottnt{a} }   \leadsto^{*}_{\neq0}   \fomTrmI[  \Gamma  ]{ \ottnt{a'} } 
  $$
  Here, $ \leadsto^{*}_{\neq0} $ denotes reduction in at least one step.
\end{theorem}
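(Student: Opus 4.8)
The plan is to induct on the derivation of $\ottnt{a} \leadsto \ottnt{a'}$, with the single base rule \textsc{SBeta} as the crux and the six congruence rules handled uniformly by the induction hypothesis together with the contextual closure of F$_{\omega}$ reduction. Throughout I will lean on three standing facts. First, by Preservation and Classification a single reduction step does not change the kind/constructor/type/term status of any subexpression, so the case splits in the definitions of $\fomTrmI{\cdot}$, $\fomTypI{\cdot}$ and $V(\cdot)$ select the same branch for $\ottnt{a}$ and $\ottnt{a'}$. Second, subexpressions of a well-typed expression are themselves well-typed (by the generation/inversion lemmas behind Preservation), which is what licenses appeals to the induction hypothesis on a reducing subexpression. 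Third, the term translation depends on its context index only through the way that index classifies variables; since Preservation keeps the sort of a binding fixed under reduction of its type, replacing $x{:}A$ by $x{:}A'$ with $\ottnt{A}\leadsto\ottnt{A'}$ leaves $\fomTrmI[\Gamma,x:A]{e}=\fomTrmI[\Gamma,x:A']{e}$ for any fixed $e$.

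For the base case \textsc{SBeta}, write the redex as $(\lambda x{:}A.\,b)\,c \leadsto [c/x]b$ and split on the level of $A$. When $A$ is a $\Gamma$-type, $c$ is a $\Gamma$-term, and unfolding the definitions gives
$$
\fomTrmI{(\lambda x{:}A.\,b)\,c} \;=\; \bigl((\lambda y{:}\tvZ.\,\lambda x{:}\fomTypI{A}.\,\fomTrmI[\Gamma,x:A]{b})\,\fomTrmI{A}\bigr)\,\fomTrmI{c}.
$$
Firing the inserted $y$-redex (whose variable is fresh, so the substitution is vacuous) and then the $x$-redex reduces this in two steps to $[\fomTrmI{c}/x]\fomTrmI[\Gamma,x:A]{b}$, which is exactly $\fomTrmI{[c/x]b}$ by the second bullet of Lemma~\ref{lem:fomTrmISubst}. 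When $A$ is a kind, $c$ is a $\Gamma$-constructor and the translation carries an extra $w^x$-abstraction and an extra $\fomTypI{c}$ argument; the same computation now fires three redexes ($y$, then $x$, then $w^x$) and lands on $[\fomTypI{c}/x][\fomTrmI{c}/w^x]\fomTrmI[\Gamma,x:A]{b}$, identified with $\fomTrmI{[c/x]b}$ by the first bullet of Lemma~\ref{lem:fomTrmISubst} (the two substitutions commute up to the freshness conventions). Either way at least one step is taken, so $\leadsto^{*}_{\neq 0}$ holds.

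For the congruence rules the scheme is uniform: the reducing subexpression $e\leadsto e'$ is well-typed, so the induction hypothesis gives $\fomTrmI{e}\leadsto^{*}_{\neq 0}\fomTrmI{e'}$, and since $\fomTrmI{e}$ literally occurs inside $\fomTrmI{\ottnt{a}}$ in every case (as the function in \textsc{SApp1}, the argument in \textsc{SApp2}, the domain occurrence $\fomTrmI{A}$ in \textsc{SPi1}/\textsc{SLam1}, or a body $\fomTrmI[\Gamma,x:A]{B}$, $\fomTrmI[\Gamma,x:A]{b}$ in \textsc{SPi2}/\textsc{SLam2}), contextual closure of F$_{\omega}$ reduction lifts this to $\fomTrmI{\ottnt{a}}\leadsto^{*}_{\neq 0}\fomTrmI{\ottnt{a'}}$; the nonzero step count is always charged to this term-translated occurrence, where no erasure can occur. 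The only extra work arises in \textsc{SPi1}, \textsc{SLam1}, and \textsc{SApp2} when the reduced subexpression is a constructor, because there it \emph{also} appears type-translated (as $\fomTypI{A}$, $V(A)$, an annotation, or the $\fomTypI{b}$ argument) and inside canonical inhabitants $c^{\fomTypI{A}}$. To reduce these occurrences in lockstep I would first sharpen the stated lemma ``$\fomTypI{\cdot}$ preserves $=_\beta$'' to the reduction-preserving form $\ottnt{A}\leadsto\ottnt{A'}\Rightarrow\fomTypI{A}\leadsto^{*}\fomTypI{A'}$ — provable by the very same induction, since the substitution lemma for $\fomTypI{\cdot}$ sends each source redex forward either to a matching F$_{\omega}$ redex or, when erased, to the identity — and observe that $c^{\fomTypI{A}}\leadsto^{*}c^{\fomTypI{A'}}$ since $c^{(-)}$ is built monotonically from its type argument.

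The main obstacle is not the algebra of the \textsc{SBeta} case but keeping the two translations synchronized in these last congruence cases. The crucial point is that $\fomTypI{\cdot}$ may \emph{erase} reduction steps — collapsing a source step to zero target steps — yet never reverses their direction. This is precisely why the weaker $=_\beta$ statement in the excerpt does not by itself suffice: F$_{\omega}$ reduction must literally reach $\fomTrmI{\ottnt{a'}}$, so the type-translated occurrences must advance under $\leadsto^{*}$, while the guaranteed nonzero step is supplied by the term-translated occurrence, where erasure cannot happen.
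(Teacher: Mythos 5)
Your proof is correct and follows the same route as the paper's (very terse) argument: induction on the derivation of $\ottnt{a}\leadsto\ottnt{a'}$, Lemma~\ref{lem:fomTrmISubst} for the beta case, and inversion plus the induction hypothesis for the congruence cases. The one place you go beyond the paper's sketch --- strengthening the stated ``$\fomTypI{\cdot}$ preserves $=_\beta$'' lemma to the directed form $\ottnt{A}\leadsto\ottnt{A'}\Rightarrow \fomTypI{ \ottnt{A} } \leadsto^{*} \fomTypI{ \ottnt{A'} } $ so that the type-translated and canonical-inhabitant occurrences can be advanced in lockstep while the nonzero step is charged to the term-translated occurrence --- is a genuine auxiliary fact the paper silently relies on, and your justification of it is right.
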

\begin{proof}
  The proof is by induction on the derivation that $\ottnt{a}  \leadsto  \ottnt{a'}$.  The
  case of beta reduction uses Lemma~\ref{lem:fomTrmISubst}.  Each
  congruence case follows quickly by using an inversion lemma on the
  typing assumption and applying the induction hypothesis.
\end{proof}

Strong normalization for CC now follows quickly, using the same result
for F$_{\omega}$.

\begin{theorem}[Strong normalization]
  If $\Gamma  \vdash_{CC}  \ottnt{a}  \ottsym{:}  \ottnt{A}$, then $a \in \SN$.
\end{theorem}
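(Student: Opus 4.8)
The plan is to argue by contradiction against the strong normalization of F$_{\omega}$, using the reduction-preserving translation developed above. Suppose that $\Gamma \vdash_{CC} a : A$ but $a \notin \SN$. Then there is an infinite reduction sequence $a = a_0 \leadsto a_1 \leadsto a_2 \leadsto \cdots$ in CC. The first thing I would establish is that every term in this sequence remains well typed in the same context: since $\Gamma \vdash_{CC} a_0 : A$, repeated application of the preservation theorem gives $\Gamma \vdash_{CC} a_i : A$ for every $i$. This is precisely the hypothesis needed to invoke Theorem~\ref{lem:fomTrmRed} at each step of the sequence.

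Next I would transport the infinite sequence across the translation. For each $i$, applying Theorem~\ref{lem:fomTrmRed} to $\Gamma \vdash_{CC} a_i : A$ together with $a_i \leadsto a_{i+1}$ yields $\fomTrmI[\Gamma]{a_i} \leadsto^{*}_{\neq0} \fomTrmI[\Gamma]{a_{i+1}}$, that is, a reduction of \emph{at least one} step in F$_{\omega}$. Concatenating these nonempty F$_{\omega}$ reduction segments produces an infinite reduction sequence beginning at $\fomTrmI[\Gamma]{a}$. Here the ``at least one step'' clause does the essential work: if the translation were permitted to collapse a CC step to zero F$_{\omega}$ steps, the image sequence could terminate and no contradiction would follow. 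It is exactly the non-erasure guaranteed by Theorem~\ref{lem:fomTrmRed} that forbids this.

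Finally I would close the argument using soundness and the known strong normalization of F$_{\omega}$. By the soundness theorem for $\fomTrmI{\cdot}$ we have $\fomCtxI{\Gamma} \vdash_{\text{F}_{\omega}} \fomTrmI[\Gamma]{a} : \fomTypI{A}$, so $\fomTrmI[\Gamma]{a}$ is a well-typed F$_{\omega}$ expression. But we have just exhibited an infinite F$_{\omega}$ reduction sequence starting from it, contradicting the strong normalization of F$_{\omega}$ (see~\cite{Gallier1990Girards}). Hence no infinite CC reduction from $a$ can exist, and $a \in \SN$.

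I do not expect any single step of this final assembly to be a serious obstacle; the conceptual difficulty of the whole development has already been discharged in the soundness theorem and, above all, in the reduction-preservation theorem. Within this proof the only point demanding care is the interplay between preservation and Theorem~\ref{lem:fomTrmRed}: one must carry a typing derivation along the reduction sequence so that the reduction-preservation hypothesis is available at every step, and one must rely on the strictness of $\leadsto^{*}_{\neq0}$ to ensure the transported sequence is genuinely infinite rather than eventually stationary.
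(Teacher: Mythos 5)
Your proposal is correct and follows exactly the paper's own argument: contradiction via an infinite CC reduction sequence, preservation to maintain typing along the sequence, Theorem~\ref{lem:fomTrmRed} to transport each step to a nonempty F$_{\omega}$ reduction segment, and soundness of $\fomTrmI{\cdot}$ plus strong normalization of F$_{\omega}$ to conclude. Your added emphasis on why the strictness of $\leadsto^{*}_{\neq0}$ is essential is a correct and worthwhile observation, but the proof itself is the same.
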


\begin{proof}
  Assume for a contradiction that there is an infinite reduction
  sequence starting at $a$:
  $$
  \ottnt{a}  \leadsto  \ottnt{a_{{\mathrm{1}}}}  \leadsto  \ottnt{a_{{\mathrm{2}}}}  \leadsto  \ldots
  $$
  By preservation, $\Gamma  \vdash_{CC}  \ottnt{a_{\ottmv{n}}}  \ottsym{:}  \ottnt{A}$ for each $n$.  Thus, by
  Lemma~\ref{lem:fomTrmRed}, there is another infinite sequence of
  reductions:
  $$
   \fomTrmI{ \ottnt{a} }   \leadsto^{*}_{\neq0}   \fomTrmI{ \ottnt{a_{{\mathrm{1}}}} }   \leadsto^{*}_{\neq0}   \fomTrmI{ \ottnt{a_{{\mathrm{2}}}} }   \leadsto^{*}_{\neq0}  \ldots
  $$
  But by the soundness of the term interpretation, we have $ \fomCtxI{ \Gamma }   \vdash_{\text{F}_{\omega} }   \fomTrmI{ \ottnt{a} }   \ottsym{:}   \fomTypI{ \ottnt{A} } $.  This is a contradiction because the
  well-typed terms of F$_{\omega}$ are strongly normalizing.
\end{proof}

\section{Modeling types as sets of expressions}
\label{sec:flexible}

The second proof we consider, from \citet{geuvers:flexible}, will be
the most familiar to readers acquainted with the Girard--Tait method of
reducibility candidates or saturated sets.  The paper places a special
emphasis on making the proof easy to extend with additional
programming language constructs.  To this end, only the metatheory we
have introduced so far is required.\footnote{In fact, Geuvers requires
  a little less: he claims preservation isn't necessary.  He still
  relies on substitution and a strong inversion lemma, though, so our
  presentation does not deviate too far from his proof.}  Several
examples of extensions are included, and we consider some after the
development for CC itself.

\subsection{Basic definitions}

We begin with a few definitions and results relating to reduction.
Intuition for these ideas is important to understanding the main
proof, so we discuss them in some detail.

\begin{definition}\label{def:base}
  Any expression of the form $\ottmv{x} \, \ottnt{a_{{\mathrm{1}}}}\,\ldots\,\ottnt{a_{\ottmv{n}}}$ is called a {\it
    base expression}.  The set of base expressions is denoted $\BASE$.
  Note that variables are base expressions (i.e., $n = 0$ is allowed).
\end{definition}

\begin{definition}\label{def:keyred}
  With some expressions we associate another expression, called a {\it
    key redex}.
  \begin{itemize}
  \item The expression $\ottsym{(}  \lambda  \ottmv{x}  \ottsym{:}  \ottnt{A}  \ottsym{.}  \ottnt{b}  \ottsym{)} \, \ottnt{a}$ is its own key redex.
  \item If $A$ has a key redex, then $\ottnt{A} \, \ottnt{B}$ has the same key redex.
  \end{itemize}
\end{definition}

We denote by $\redk(A)$ the expression obtained by reducing $A$'s key
redex, when it has one.  Note that base expressions don't have key
redexes.  The intuition behind key redexes is that they can not be
avoided.  Reducing an expression without reducing its key redex leaves
the redex in place.  This intuition and the importance of key
reduction is captured by the following two lemmas.  They are not
difficult to prove, but they rely on a few other simple properties of
beta reduction.

\begin{lemma} \label{lem:keyredex} 
  Suppose $a$ has a key redex and $\ottnt{a}  \leadsto  \ottnt{b}$ without reducing that
  redex.  Then $b$ has a key redex, and $\redk(a)  \leadsto^{*}  \redk(b)$.
\end{lemma}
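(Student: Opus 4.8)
The plan is to induct on the derivation that $a$ has a key redex, following the two clauses of Definition~\ref{def:keyred}. An expression with a key redex is precisely an application spine whose head is an abstraction, $(\lambda x:A.s)\,u_1\cdots u_n$ with $n\ge 1$, and the induction peels off one application at a time. I would rely on two routine facts about single-step reduction: that it is the congruence generated by the rules of Figure~\ref{fig:ptsdef}, so I can invert a step $a\leadsto b$ by asking which subterm was contracted; and that it interacts with substitution in the standard way, namely if $s\leadsto s'$ then $[t/x]s\leadsto[t/x]s'$, and if $t\leadsto t'$ then $[t/x]s\leadsto^*[t'/x]s$ (the latter possibly taking several steps, since $t$ may occur any number of times).

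In the base case $a=(\lambda x:A.s)\,t$ is its own key redex, so $\redk(a)=[t/x]s$. Because the step does not contract this redex, \textsc{SBeta} is excluded and only \textsc{SApp1} and \textsc{SApp2} remain. A \textsc{SApp1} step reduces the abstraction, and by inversion (only \textsc{SLam1}/\textsc{SLam2} apply) the result is still an abstraction, so $b$ again has itself as key redex: either $A\leadsto A'$, giving $\redk(b)=[t/x]s=\redk(a)$ in zero steps, or $s\leadsto s'$, giving $\redk(b)=[t/x]s'$ with $\redk(a)\leadsto\redk(b)$ by the substitution fact. A \textsc{SApp2} step reduces the argument, $t\leadsto t'$, so $\redk(b)=[t'/x]s$ and the substitution fact yields $\redk(a)\leadsto^*\redk(b)$.

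In the inductive case $a=A_0\,B$, where $A_0$ has a key redex and $\redk(a)=\redk(A_0)\,B$. The step is again \textsc{SApp1} or \textsc{SApp2}. If it is \textsc{SApp2} ($B\leadsto B'$), then $A_0$ retains its key redex, $b=A_0\,B'$ has the same key redex, and $\redk(a)=\redk(A_0)\,B\leadsto\redk(A_0)\,B'=\redk(b)$ by \textsc{SApp2}. If it is \textsc{SApp1} ($A_0\leadsto A_0'$), I would first observe that this inner step cannot be contracting $A_0$'s key redex: by clause two of Definition~\ref{def:keyred} that occurrence is exactly $a$'s key redex, which we assumed is left untouched. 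Hence the induction hypothesis applies to $A_0$, giving that $A_0'$ has a key redex with $\redk(A_0)\leadsto^*\redk(A_0')$; then $b=A_0'\,B$ inherits the key redex of $A_0'$, so $\redk(b)=\redk(A_0')\,B$, and the \textsc{SApp1} congruence lifts $\redk(A_0)\leadsto^*\redk(A_0')$ to $\redk(a)\leadsto^*\redk(b)$.

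The main obstacle is the argument case of the base step: contracting the argument $t$ of the key redex forces the multi-step conclusion $\redk(a)\leadsto^*\redk(b)$, because $[t/x]s$ may contain zero, one, or several copies of $t$; this is exactly why the lemma is phrased with $\leadsto^*$ rather than $\leadsto$. The only other point demanding care is the inductive \textsc{SApp1} case, where one must verify that the hypothesis ``without reducing the key redex'' transfers from $a$ down to $A_0$ so that the induction hypothesis is genuinely available—but this is immediate, since $a$ and $A_0$ share the same key redex occurrence.
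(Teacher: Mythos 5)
Your proof is correct. The paper itself gives no proof of this lemma---it only remarks that it ``relies on a few other simple properties of beta reduction''---and your argument by induction on the key-redex derivation, with the two standard substitution facts ($s \leadsto s'$ implies $[t/x]s \leadsto [t/x]s'$, and $t \leadsto t'$ implies $[t/x]s \leadsto^* [t/x]s'$) supplying exactly those properties, is precisely the kind of argument being gestured at; your case analysis is exhaustive (the two clauses of the key-redex definition are mutually exclusive, so the inversion on the reduction step in each case is as you describe) and your identification of the \textsc{SApp2} subcase of the base case as the reason the conclusion needs $\leadsto^*$ rather than $\leadsto$ is exactly right.
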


It is helpful to visualize this lemma:
$$
\begin{tabular}{ccc}
  $a$ & $ \leadsto $ & $\ottnt{b}$\\
  \rotatebox[origin=c]{270}{$ \leadsto $} & & \rotatebox[origin=c]{270}{$ \leadsto $}\\
  $\redk(a)$ & $ \leadsto^{*} $ & $\redk(b)$
\end{tabular}
$$

\begin{lemma} \label{lem:keyredexSN}
  Suppose $a,b \in \SN$ and $\redk(\ottnt{a} \, \ottnt{b}) \in \SN$.  Then $\ottnt{a} \, \ottnt{b}
  \in \SN$.
\end{lemma}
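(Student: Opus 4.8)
The plan is to prove this by well-founded induction, using the standard inductive characterization of $\SN$: an expression lies in $\SN$ exactly when all of its one-step $\leadsto$-reducts do. Since the hypothesis $\redk(\ottnt{a} \, \ottnt{b}) \in \SN$ presupposes that $\ottnt{a} \, \ottnt{b}$ has a key redex, I may fix that key redex and call it $R$. For an expression $x \in \SN$, write $\nu(x)$ for the length of a longest $\leadsto$-reduction sequence starting at $x$; this is finite because $\leadsto$ is finitely branching and $x$ is strongly normalizing (a König-style argument). First I would set up an induction on the measure $\nu(\ottnt{a}) + \nu(\ottnt{b})$, and reduce the goal, via the characterization above, to showing that every $c$ with $\ottnt{a} \, \ottnt{b} \leadsto c$ is itself in $\SN$.

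The heart of the argument is a case split on whether the step $\ottnt{a} \, \ottnt{b} \leadsto c$ contracts the key redex $R$ or not. If it does, then contracting that particular redex occurrence yields a unique result, so $c = \redk(\ottnt{a} \, \ottnt{b})$, which is in $\SN$ by hypothesis. If it does not, then $c$ is either $\ottnt{a'} \, \ottnt{b}$ with $\ottnt{a}  \leadsto  \ottnt{a'}$ or $\ottnt{a} \, \ottnt{b'}$ with $\ottnt{b}  \leadsto  \ottnt{b'}$. In either case the component that was reduced is again strongly normalizing (a reduct of an $\SN$ term), and Lemma~\ref{lem:keyredex} applies to the step $\ottnt{a} \, \ottnt{b} \leadsto c$: it tells us that $c$ again has a key redex and that $\redk(\ottnt{a} \, \ottnt{b})  \leadsto^{*}  \redk(c)$, so $\redk(c) \in \SN$ as well. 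Because one of $\nu(\ottnt{a}), \nu(\ottnt{b})$ strictly decreases while the other is unchanged, the measure drops, and the induction hypothesis—applied to the decomposition $\ottnt{a'}, \ottnt{b}$ or $\ottnt{a}, \ottnt{b'}$, all three of whose $\SN$ premises we have just re-established—gives $c \in \SN$. Since every one-step reduct of $\ottnt{a} \, \ottnt{b}$ is in $\SN$, so is $\ottnt{a} \, \ottnt{b}$.

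The step I expect to be the main obstacle is keeping the three strong-normalization premises simultaneously in force across a reduction, and in particular re-establishing $\redk(c) \in \SN$ in the non-key-redex case: the measure $\nu(\ottnt{a}) + \nu(\ottnt{b})$ does the work of terminating the induction, but it says nothing about the key reduct, so the third premise must instead be propagated by Lemma~\ref{lem:keyredex}. The two facts I would be most careful to justify are that contracting the specific occurrence $R$ produces exactly $\redk(\ottnt{a} \, \ottnt{b})$ (so the ``reduces $R$'' branch lands precisely on the hypothesis), and that the dichotomy ``reduces $R$ / does not reduce $R$'' is exhaustive and matches the congruence rules \textsc{SApp1}, \textsc{SApp2}, and \textsc{SBeta}—once those are pinned down, the remaining bookkeeping is routine.
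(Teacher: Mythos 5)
Your proof is correct, but it takes a genuinely different route from the paper's. The paper argues by contradiction on a global, infinite reduction sequence: since $a,b \in \SN$, any infinite sequence out of $\ottnt{a}\,\ottnt{b}$ must contract the key redex after finitely many steps, and repeated application of Lemma~\ref{lem:keyredex} along that prefix yields $\redk(\ottnt{a}\,\ottnt{b}) \leadsto^{*} [b'/x]a'$, so the infinite tail contradicts $\redk(\ottnt{a}\,\ottnt{b}) \in \SN$. You instead run a well-founded induction on the measure $\nu(a)+\nu(b)$ with a local, one-step case analysis, re-establishing all three hypotheses of the lemma after each non-key step. The mathematical engine is identical in both arguments---Lemma~\ref{lem:keyredex} pushes the key redex across non-key steps, and $a,b\in\SN$ bounds how many such steps can occur---so the difference is one of induction discipline rather than content. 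Your version costs more bookkeeping (you need the finiteness of $\nu$, hence K\"onig's lemma or a nested accessibility induction, and you must check that every premise survives a step), but it is constructive, it is the form one would use in a proof assistant, and it makes fully explicit two points the paper's contradiction argument glosses over: that the \textsc{SBeta}/\textsc{SApp1}/\textsc{SApp2} case split is exhaustive, and that the key redex may be contracted by an \textsc{SApp1} step inside $a$ before the head ever becomes a root $\beta$-redex---your observation that such a step lands exactly on $\redk(\ottnt{a}\,\ottnt{b})$ is precisely what the paper leaves implicit. Both proofs are sound; yours trades brevity for robustness.
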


\begin{proof}
  Suppose for a contradiction that there is an infinite reduction
  sequence starting at $\ottnt{a} \, \ottnt{b}$.  Since we know $a$ and $b$ are in
  $\SN$, this means the application must beta-reduce in some finite
  number of steps.  That is, the infinite sequence has a prefix of
  the form:
  $$
  \ottnt{a} \, \ottnt{b}  \leadsto^{*}  \ottsym{(}  \lambda  \ottmv{x}  \ottsym{:}  \ottnt{A}  \ottsym{.}  \ottnt{a'}  \ottsym{)} \, \ottnt{b'}  \leadsto  \ottsym{[}  \ottnt{b'}  \ottsym{/}  \ottmv{x}  \ottsym{]}  \ottnt{a'}  \leadsto  \ldots
  $$
  Note that this last step reduces a key redex.  Thus, by multiple
  applications of lemma~\ref{lem:keyredex}, we have $\redk(\ottnt{a} \, \ottnt{b})
   \leadsto^{*}  \ottsym{[}  \ottnt{b'}  \ottsym{/}  \ottmv{x}  \ottsym{]}  \ottnt{a'}$.  This is a contradiction, since $\redk(\ottnt{a} \, \ottnt{b}) \in \SN$ but we found an infinite reduction sequence starting
  at it.
\end{proof}

Saturated sets and their closure properties are the key technical
device in the interpretation.  Originally introduced by Tait, they are
closely related to Girard's {\it candidates of reducibility} (for
detailed comparisons, see~\cite{luo:book}
and~\cite{Gallier1990Girards}).  The idea is pervasive, and we will
see it again in the second proof.

\begin{definition}
  A set of expressions $S$ is called a {\it saturated set} if the
  following three conditions hold:
  \begin{itemize}
  \item $S \subseteq \SN$
  \item $(\SN \cap \BASE) \subseteq S$
  \item If $A \in \SN$ and $\redk(A) \in S$ then $A \in S$.
  \end{itemize}
\end{definition}

The third condition states that saturated sets are ``closed under the
expansion of key redexes''.  We write $\SAT$ for the collection of all
saturated sets.  Note that $\SN \in \SAT$ and that every saturated set
is non-empty.

\begin{lemma} \label{lem:satintersection} If $S$ is a non-empty
  collection of saturated sets, then $\displaystyle\bigcap S \in \SAT$.
\end{lemma}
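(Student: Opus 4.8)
The plan is to verify that $\bigcap S$ satisfies each of the three defining conditions of a saturated set directly, exploiting the fact that every member of the collection $S$ already satisfies them. Throughout I would write $T$ for an arbitrary member of $S$, so that $\bigcap S \subseteq T$ holds for every such $T$, and I would treat the three conditions in turn.

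The two ``closure-style'' conditions go through with no difficulty. For the second condition, the second saturation property gives $(\SN \cap \BASE) \subseteq T$ for each $T \in S$; since this inclusion holds uniformly over the whole collection, it is inherited by the intersection, so $(\SN \cap \BASE) \subseteq \bigcap S$. For the third condition, suppose $A \in \SN$ and $\redk(A) \in \bigcap S$. Then $\redk(A) \in T$ for every $T \in S$, so applying the third saturation property of $T$ (using $A \in \SN$) yields $A \in T$ for each $T$; as this holds across the entire collection, $A \in \bigcap S$, as needed.

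The one step that genuinely requires care, and the place where the hypothesis earns its keep, is the first condition $\bigcap S \subseteq \SN$. Here I would use non-emptiness to fix some particular $T_0 \in S$ and conclude $\bigcap S \subseteq T_0 \subseteq \SN$, the last inclusion being $T_0$'s own first saturation property. I expect this to be the only real obstacle, not because it is long but because it is where the claim could fail: were $S$ empty, the convention $\bigcap S = (\text{all expressions})$ would violate containment in $\SN$, so the non-emptiness assumption is exactly what makes the first condition hold. Having checked all three conditions, I would conclude $\bigcap S \in \SAT$.
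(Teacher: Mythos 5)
Your proof is correct: each of the three saturation conditions passes to the intersection pointwise, and non-emptiness is used exactly where it is needed, namely to obtain $\bigcap S \subseteq \SN$ from some fixed member of $S$. The paper omits the proof of this lemma entirely, and your argument is the standard one that was intended.
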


\begin{definition}
  If $S_1$ and $S_2$ are sets of expressions, define:
$$
\sarr{S_1}{S_2} := \{a \,|\, \forall b \in S_1, \ottnt{a} \, \ottnt{b} \in S_2 \}
$$
\end{definition}

It helps to have some intuition for this last definition: An
expression $a$ is in $\sarr{S_1}{S_2}$ if whenever $a$ is applied to
an expression in $S_1$, you get an expression in $S_2$.  Thus, when
these sets model types, $\sarr{S_1}{S_2}$ will contain the functions
from the first type to the second.  The next lemma, that
$\sarr{\cdot}{\cdot}$ preserves saturation, involves the most
intricate reasoning about reduction that appears in this proof.

\begin{lemma} \label{lem:satarrow}
  If $S_1, S_2 \in \SAT$, then $\sarr{S_1}{S_2} \in \SAT$.
\end{lemma}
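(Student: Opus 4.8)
The plan is to verify the three defining conditions of a saturated set directly for $\sarr{S_1}{S_2}$, relying on the saturation of $S_1$ and $S_2$ for the first two and, for the last and hardest condition, on Lemma~\ref{lem:keyredexSN}.

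First, for the inclusion $\sarr{S_1}{S_2} \subseteq \SN$: given $a \in \sarr{S_1}{S_2}$, I would pick any variable $x$, which lies in $\SN \cap \BASE$ and hence in $S_1$ by the saturation of $S_1$. Then $a\,x \in S_2 \subseteq \SN$. Since any infinite reduction sequence from $a$ lifts, via the congruence rule for the function position of an application, to an infinite sequence from $a\,x$, the strong normalization of $a\,x$ forces $a \in \SN$.

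Second, for $(\SN \cap \BASE) \subseteq \sarr{S_1}{S_2}$: take a base expression $a = \ottmv{x} \, \ottnt{a_{{\mathrm{1}}}}\,\ldots\,\ottnt{a_{\ottmv{n}}} \in \SN \cap \BASE$ and an arbitrary $b \in S_1 \subseteq \SN$. Then $a\,b$ is again a base expression with a variable at its head, so it has no head redex and every reduction from it takes place inside some $a_i$ or inside $b$; as all of these are strongly normalizing, $a\,b \in \SN \cap \BASE$. By saturation of $S_2$ we get $a\,b \in S_2$, and since $b$ was arbitrary, $a \in \sarr{S_1}{S_2}$.

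Third, the crux, is closure under key-redex expansion. Suppose $A \in \SN$ and $\redk(A) \in \sarr{S_1}{S_2}$; in particular $A$ has a key redex. Fixing $b \in S_1$, I must show $A\,b \in S_2$. The pivotal observation is the bookkeeping identity $\redk(A\,b) = \redk(A)\,b$: by the definition of key redex, $A\,b$ inherits the same key redex as $A$, and contracting it only rewrites the $A$ part. Hence $\redk(A\,b) = \redk(A)\,b \in S_2 \subseteq \SN$, using $\redk(A) \in \sarr{S_1}{S_2}$ and $b \in S_1$. Now $A \in \SN$, $b \in \SN$, and $\redk(A\,b) \in \SN$, so Lemma~\ref{lem:keyredexSN} gives $A\,b \in \SN$; combined with $\redk(A\,b) \in S_2$, saturation of $S_2$ yields $A\,b \in S_2$. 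As $b$ was arbitrary, $A \in \sarr{S_1}{S_2}$. The main obstacle is precisely this last condition, and within it the identity $\redk(A\,b) = \redk(A)\,b$, which is exactly where the inductive clause in the definition of key redex does its work; once that identity is in hand, Lemma~\ref{lem:keyredexSN} and the saturation of $S_2$ close the argument cleanly.
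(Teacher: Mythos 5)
Your proposal is correct and follows essentially the same route as the paper's proof: the first two conditions are handled by instantiating with an element of $S_1$ (the paper uses non-emptiness abstractly where you pick a variable, an immaterial difference), and the third hinges on the same identity $\redk(\ottnt{a} \, \ottnt{b}) = \redk(a)\,b$ together with Lemma~\ref{lem:keyredexSN} and closure of $S_2$ under key-redex expansion.
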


\begin{proof}
  There are three conditions to verify:
  \begin{itemize}
  \item ($\sarr{S_1}{S_2} \subseteq \SN$)\; Suppose $a \in
    \sarr{S_1}{S_2}$.  Saturated sets are non-empty, so let $b \in
    S_1$ be given.  Then $\ottnt{a} \, \ottnt{b}\in S_2$, and so $\ottnt{a} \, \ottnt{b} \in \SN$.
    Thus, $\ottnt{a} \in \SN$.

  \item ($(\SN \cap \BASE) \subseteq \sarr{S_1}{S_2}$)\;  Let $a \in \SN
    \cap \BASE$ be given.  For any $b \in S_1$, since $b \in \SN$,
    $\ottnt{a} \, \ottnt{b} \in \SN \cap \BASE$.  Thus $\ottnt{a} \, \ottnt{b} \in S_2$.  So, $a \in
    \sarr{S_1}{S_2}$.

  \item ($\sarr{S_1}{S_2}$ is closed under key redex expansion)\;
    Suppose $a \in \SN$ and $\redk(a) \in \sarr{S_1}{S_2}$.  We must
    show $a \in \sarr{S_1}{S_2}$, so let $b \in S_1$ be given.  We
    have $\redk(a)\;b \in S_2$, and must show $a\;b \in S_2$.  But
    $\redk(a)\;b = \redk(\ottnt{a} \, \ottnt{b})$. Since $S_2$ is closed under
    expansion of key redexes, it is enough to show that $\ottnt{a} \, \ottnt{b} \in
    \SN$.  This follows immediately by lemma~\ref{lem:keyredexSN}.
    \qedhere
  \end{itemize}
\end{proof}

\subsection{Interpreting kinds}
\label{secSAT:kinds}

The interpretation of types comes in two steps.  First we define a
function $V(\,\cdot\,)$ on the sorts and kinds of CC.  This is,
roughly, the type of the main interpretation: if $B$ is a kind or type
such that $\Gamma  \vdash  \ottnt{B}  \ottsym{:}  \ottnt{A}$, then $B$'s interpretation will be an
element of the set $V(A)$.
\begin{align*}
  V( \Box ) = {} & \SAT\\
  V( \ast )  = {} & \SAT\\
  V(\ottsym{(}  \ottmv{x}  \ottsym{:}  \ottnt{A}  \ottsym{)}  \to  \ottnt{B}) = {} & 
      \begin{cases}
        \{f \,|\, f : V(A) \to V(B) \} & \text{when $A$ is a kind}\\
        V(B)  & \text{otherwise}
      \end{cases}
\end{align*}

By $\{f\,|\, f : V(A) \to V(B) \}$, we mean the collection of all
(set-theoretic) functions from $V(A)$ to $V(B)$.

\begin{lemma} \label{lem:Vinhabited}
  If $A$ is a kind, then $V(A)$ is non-empty.
\end{lemma}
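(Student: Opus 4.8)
The plan is to induct on the structure of the kind $A$, reading off the two relevant clauses of the definition of $V$. First I would establish a shape lemma for kinds: if $A$ is a kind, i.e. $\Gamma \vdash_{CC} A : \sortbox$, then either $A = \sortstar$ or $A = (x{:}C)\to D$ with $D$ again a kind. This is a generation argument on the derivation of $\Gamma \vdash_{CC} A : \sortbox$. The only rules that can conclude a judgement whose right-hand side is the atom $\sortbox$ without being immediately excluded are \textsc{TSort}, forcing $A = \sortstar$ since $(\sortstar,\sortbox)$ is the unique axiom, and \textsc{TPi}, forcing $A = (x{:}C)\to D$; the latter is typed by a rule whose third component is $\sortbox$ --- namely $(\sortbox,\sortbox,\sortbox)$ or $(\sortstar,\sortbox,\sortbox)$ --- each of which has second component $\sortbox$, so that $\Gamma, x{:}C \vdash_{CC} D : \sortbox$ and $D$ is a kind. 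Every other rule is ruled out: \textsc{TLam} concludes with a product type, never the atom $\sortbox$; and \textsc{TVar}, \textsc{TApp}, and \textsc{TConv} would each require $\sortbox$ itself to be typeable (as a context binding, as a codomain being substituted into, or as the well-formed target $B$ in the conversion premise), which is impossible since no axiom has $\sortbox$ on the left.

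With the shape lemma in hand I would induct on $A$. In the base case $A = \sortstar$ we have $V(\sortstar) = \SAT$, which is non-empty because $\SN$ is itself a saturated set. In the step case $A = (x{:}C)\to D$ with $D$ a kind, I split on $C$ exactly as the definition of $V$ does. If $C$ is a $\Gamma$-type, then $V(A) = V(D)$, non-empty by the induction hypothesis on the smaller kind $D$. If $C$ is a kind, then $V(A) = \{f \mid f : V(C) \to V(D)\}$, the set of all set-theoretic functions from $V(C)$ to $V(D)$; the induction hypothesis gives some $d_0 \in V(D)$, and the constant function returning $d_0$ is a witness in $V(A)$.

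I expect no deep obstacle. The only part demanding any care is the shape lemma, and there the subtlety is purely in dispatching the \textsc{TApp} case, where one observes that a substitution instance $[b/x]B'$ can equal the atom $\sortbox$ only if $B'$ is $\sortbox$, or $B' = x$ with $b = \sortbox$, both of which again force $\sortbox$ to be typeable. The inductive step itself is routine once one notices that inhabitation of the function space needs only a witness in the codomain $V(D)$, furnished by the induction hypothesis, so the question of whether $V(C)$ is itself inhabited never has to be confronted.
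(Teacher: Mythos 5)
Your proof is correct. The paper states this lemma without giving any proof, and your argument---a generation lemma showing every kind is either $\sortstar$ or a product $(x{:}C)\to D$ with $D$ again a kind, followed by structural induction using $\SN\in\SAT$ in the base case and a constant function with value drawn from $V(D)$ in the function-space case---is exactly the standard argument the authors leave implicit.
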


As an example, consider the type $ \ast   \to   \ast $.  Notice that $V( \ast   \to   \ast )$ is the collection of all functions from saturated sets to
saturated sets.  So, when we interpret an expression with this type
(say $\lambda  \ottmv{x}  \ottsym{:}   \ast   \ottsym{.}  \ottmv{x}$), we will expect to get a function that takes
collections of expressions to other collections of expressions.  This
makes sense, since it is a function from types to types.

The observant reader will notice that this definition of $V$ mirrors
the one from Section~\ref{secFOM:types}.  Just as there, it indicates
that we will ignore dependency in the interpretation of types.  This
will work because of the limited ways in which CC may use terms in
types.  For example, CC lacks large eliminations: even though we can
encode natural numbers, we can not define types by pattern matching on
them.

\subsection{Interpreting types}
\label{secSAT:types}

Because our interpretation is not restricted to closed types, we begin
by defining an environment that interprets the variables.  Later, we
will consider another similar environment for terms.

\begin{definition}
  Given a context $\Gamma$ such that $\vdash  \Gamma$, a {\it constructor
    environment} $\sigma$ for $\Gamma$ is a function that maps the type
  variables of $\Gamma$ to appropriate sets according to $V$.  It
  should satisfy the condition:
  $$
  \text{if }  ( \ottmv{x} : \ottnt{A} ) \in  \Gamma \,\wedge\,\ottnt{A}\text{ is a kind, then }
  \sigma(x) \in V(A)
  $$
  We'll write $\sigma \models \Gamma$ for this relation and
  $\ext{\sigma}{x}{S}$ for the constructor environment which maps $x$ to
  $S$ and otherwise agrees with $\sigma$.
\end{definition}

Finally, we define the interpretation of types $\interp{A}$
when $A$ is a sort, kind or $\Gamma$-type:
\begin{align*}
  \interp{ \Box } = {} & \SN\\
  \interp{ \ast } = {} & \SN\\
  \interp{\ottmv{x}} = {} & \sigma(x)\\
  \interp{\ottsym{(}  \ottmv{x}  \ottsym{:}  \ottnt{A}  \ottsym{)}  \to  \ottnt{B}} = {} &
    \begin{cases}
        \sarr{\interp{A}}
             {\displaystyle\bigcap_{S \in V(A)}
                 \interp[ \ext{\sigma}{x}{S} ]{B}}  
          & \text{ when $A$ is a kind}\\
      \sarr{\interp{A}}{\interp{B}} & \text{ otherwise}
    \end{cases}\\
  \interp{\lambda  \ottmv{x}  \ottsym{:}  \ottnt{A}  \ottsym{.}  \ottnt{b}} = {} &
    \begin{cases}
        S \in V(A) \mapsto \interp[\ext{\sigma}{x}{S}]{b}
          & \text{ when $A$ is a kind}\\
        \interp{b} & \text{ otherwise}
    \end{cases}\\
  \interp{\ottnt{a} \, \ottnt{b}} = {} &
    \begin{cases}
        \interp{a}\;\interp{b}
          & \text{ when $b$ is a $\Gamma$-constructor}\\
        \interp{a} & \text{ otherwise}
    \end{cases}
\end{align*}

By $S \in V(A) \mapsto \interp[\ext{\sigma}{x}{S}]{b}$, we mean the
set-theoretic function that maps each $S$ in $V(A)$ to
$\interp[\ext{\sigma}{x}{S}]{b}$.

This type interpretation is very similar to the one from
Section~\ref{secFOM:types}.  Many of the lemmas we will need also
mirror results from that section.  For example, compare the
substitution lemma below with Lemma~\ref{lemFOM:typsubst}.

\begin{lemma}[$\interpm{ }{\cdot}$ respects substitution] 
  \label{lem:interpsubst}
  Suppose $\sigma \models \Gamma$ and $A$ is a kind or
  $\Gamma$-constructor.  When $ ( \ottmv{x} : \ottnt{B} ) \in  \Gamma $ and
  $\Gamma  \vdash  \ottnt{b}  \ottsym{:}  \ottnt{B}$, we have:
  \begin{itemize}
  \item $\interp{\ottsym{[}  \ottnt{b}  \ottsym{/}  \ottmv{x}  \ottsym{]}  \ottnt{A}} =
    \interp[\ext{\sigma}{x}{\interp{b}}]{A}$, if $B$ is a kind.
  \item $\interp{\ottsym{[}  \ottnt{b}  \ottsym{/}  \ottmv{x}  \ottsym{]}  \ottnt{A}} = \interp{A}$, if $B$
    is a $\Gamma$-type.
  \end{itemize}
\end{lemma}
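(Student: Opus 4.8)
The plan is to prove both bullets simultaneously by induction on the typing derivation of $A$, reading off the classification of each immediate subexpression from the derivation so that we always know which branch of the definition of $\interp{\cdot}$ is in force. Working from the typing derivation rather than from raw syntax is the convenient choice precisely because $\interp{\cdot}$ branches on whether subexpressions are kinds, $\Gamma$-types, or $\Gamma$-terms, and the derivation hands us exactly that information. Intuitively, the two bullets split according to the role of $x$: when $B$ is a kind, $x$ is a constructor variable and substituting $b$ for it must be recorded in the environment as $\ext{\sigma}{x}{\interp{b}}$; when $B$ is a $\Gamma$-type, $x$ is a term variable, and since $\interp{\cdot}$ ignores term-level data, the substitution can have no effect.

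The base cases are immediate. For the sorts, $[b/x]\Box = \Box$ with $\interp{\Box} = \SN$ regardless of the environment, and similarly for $\ast$, so both equations hold. For a variable $A = y$: if $y = x$ then $B$ must be a kind (otherwise $x$ would be a $\Gamma$-term, contradicting that $A$ is a constructor), and $\interp{[b/x]x} = \interp{b} = \ext{\sigma}{x}{\interp{b}}(x)$, which is the kind bullet; if $y \neq x$, the substitution is the identity and $\sigma$ agrees with $\ext{\sigma}{x}{\interp{b}}$ at $y$. Note that in the $\Gamma$-type bullet the subcase $y = x$ cannot arise, exactly because a term variable is neither a kind nor a constructor.

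For the inductive cases I unfold the definition of $\interp{\cdot}$, push $[b/x]$ through the outermost former, and apply the induction hypotheses to the immediate subexpressions. The fact that keeps the case analysis honest is that a term substitution preserves classification: $A_1$ is a kind if and only if $[b/x]A_1$ is, and an argument is a $\Gamma$-constructor if and only if its substitution instance is; this follows from the substitution lemma for typing together with the classification theorem, and it guarantees that the same branch of the definition is selected on both sides of each equation. The application case is where the asymmetry between the two bullets becomes visible: when the argument is a $\Gamma$-term it is discarded by $\interp{\cdot}$, which is precisely why substituting a term can never alter the interpretation, delivering the $\Gamma$-type bullet.

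The product-type and $\lambda$-abstraction cases with a kind domain are the main obstacle, since they combine the intersection $\bigcap_{S \in V(A_1)}$ with a nested environment extension $\ext{\sigma}{y}{S}$ under the binder. After renaming the bound variable $y$ fresh for $x$ and $b$, the crux is to show the two environment updates commute, namely that $\ext{(\ext{\sigma}{x}{\interp{b}})}{y}{S}$ coincides with the environment obtained by first extending at $y$ and then substituting at $x$. This rests on $y$ being fresh for $b$, so that $\interp[\ext{\sigma}{y}{S}]{b} = \interp{b}$, and on $x \neq y$, so that the two updates do not interfere; a routine weakening of $\Gamma \vdash b : B$ to $\Gamma, y\!:\!A_1$ lets me invoke the induction hypothesis for the codomain in the extended context. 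Once this bookkeeping is in place, the induction hypothesis closes the case, and the intersection is preserved because it ranges over the same index set $V(A_1) = V([b/x]A_1)$ on both sides.
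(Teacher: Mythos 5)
Your proof is correct and takes essentially the approach the paper intends: the paper omits the proof of this lemma entirely, remarking only (of its Section~\ref{secFOM:types} analogue, Lemma~\ref{lemFOM:typsubst}) that it goes by induction on the typing derivation, which is exactly your strategy. The details you single out --- preservation of classification under substitution, commutation of the environment updates after renaming the bound variable fresh, and invariance of $V$ under the substitution so the intersection's index set is unchanged --- are precisely the bookkeeping facts the omitted proof needs.
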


From this it follows that beta-convertible types have the same
interpretation.

\begin{lemma}[$\interpm{ }{\cdot}$ respects $ =_{\beta} $]
  \label{lem:interpbeta}
  Suppose $\sigma \models \Gamma$ and $\ottnt{A_{{\mathrm{1}}}}$, $\ottnt{A_{{\mathrm{2}}}}$ are
  kinds or $\Gamma$-constructors such that $\ottnt{A_{{\mathrm{1}}}}  =_{\beta}  \ottnt{A_{{\mathrm{2}}}}$.  Then
  $\interp{\ottnt{A_{{\mathrm{1}}}}} = \interp{\ottnt{A_{{\mathrm{2}}}}}$.
\end{lemma}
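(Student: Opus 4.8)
The plan is to reduce the statement about the symmetric closure $=_{\beta}$ to the behaviour of $\interp{\cdot}$ under a single reduction step, and then to reassemble using confluence. Concretely, I would first prove an auxiliary invariance lemma: whenever $A$ is a kind or $\Gamma$-constructor and $\ottnt{A} \leadsto \ottnt{A'}$, then $A'$ is again a kind or $\Gamma$-constructor of the same classification (by preservation together with the classification theorem) and $\interp{A} = \interp{A'}$. Extending this to $\ottnt{A} \leadsto^{*} \ottnt{A'}$ is immediate by induction on the length of the reduction, since each intermediate expression stays in the domain of $\interp{\cdot}$. Finally, given $\ottnt{A_{{\mathrm{1}}}} =_{\beta} \ottnt{A_{{\mathrm{2}}}}$, confluence (Theorem~\ref{thm:confluence}) produces a common reduct $C$ with $\ottnt{A_{{\mathrm{1}}}} \leadsto^{*} C$ and $\ottnt{A_{{\mathrm{2}}}} \leadsto^{*} C$; because forward reduction preserves classification, $C$ lies in the domain of $\interp{\cdot}$, and the multi-step invariance gives $\interp{\ottnt{A_{{\mathrm{1}}}}} = \interp{C} = \interp{\ottnt{A_{{\mathrm{2}}}}}$. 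The reason for routing through confluence rather than inducting directly on the derivation of $=_{\beta}$ is that a backward step in such a chain would require subject expansion to keep the intermediate expression in the domain of $\interp{\cdot}$, whereas forward reduction stays safely inside that domain by preservation.

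The heart of the argument is the single-step invariance lemma, proved by induction on the derivation of $\ottnt{A} \leadsto \ottnt{A'}$. The essential case is \textsc{SBeta}, where $\ottsym{(} \lambda \ottmv{x}\!:\!\ottnt{A}\ottsym{.}\ottnt{b} \ottsym{)}\,\ottnt{c} \leadsto \ottsym{[}\ottnt{c}\ottsym{/}\ottmv{x}\ottsym{]}\ottnt{b}$, and here everything turns on the substitution lemma (Lemma~\ref{lem:interpsubst}). Unfolding the application clause, $\interp{\ottsym{(} \lambda \ottmv{x}\!:\!\ottnt{A}\ottsym{.}\ottnt{b} \ottsym{)}\,\ottnt{c}}$ branches on whether $c$ is a $\Gamma$-constructor, which---since $c:A$---is exactly the condition that $A$ is a kind, the very condition that also governs the $\lambda$-clause. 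When $A$ is a kind, the two clauses combine to give $\interp[\ext{\sigma}{x}{\interp{c}}]{b}$, which is precisely $\interp{\ottsym{[}\ottnt{c}\ottsym{/}\ottmv{x}\ottsym{]}\ottnt{b}}$ by the first bullet of Lemma~\ref{lem:interpsubst}; when $A$ is a $\Gamma$-type, both the application and the $\lambda$ collapse to $\interp{b}$, matching the second bullet. Thus the two cases of the substitution lemma line up exactly with the two case-splits in the definition of the interpretation.

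For the congruence rules (\textsc{SPi1}, \textsc{SPi2}, \textsc{SLam1/2}, \textsc{SApp1/2}) I would unfold the matching clause of $\interp{\cdot}$ and apply the induction hypothesis to the reduced subexpression. Two small points must be checked so that the same branch of each case-split is selected before and after the step: first, reduction preserves classification, so a kind domain stays a kind domain and a constructor argument stays a constructor argument; and second, $V$ is invariant under reduction of kinds, i.e.\ $\ottnt{A} \leadsto \ottnt{A'}$ with $A$ a kind implies $V(A) = V(A')$---an easy separate induction on the structure of $A$---which is needed in the Pi- and $\lambda$-clauses, where $V(A)$ indexes both the intersection $\bigcap_{S \in V(A)} \interp[\ext{\sigma}{x}{S}]{B}$ and the domain of the interpreting function. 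With these in hand, each congruence case is routine.

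I expect the main obstacle to be bookkeeping rather than depth: ensuring that the classification of subexpressions (and hence the branch selected in each clause of $\interp{\cdot}$) is genuinely preserved by reduction, and verifying the $V$-invariance needed for the domains and index sets to agree. The one genuinely substantive step is the \textsc{SBeta} case, but the real work there has already been isolated into Lemma~\ref{lem:interpsubst}, so the present proof reduces to observing that its two cases mirror the interpretation's own case analysis.
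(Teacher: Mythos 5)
Your proposal is correct and spells out, in the standard way, exactly the argument the paper leaves implicit (it merely remarks that the lemma ``follows'' from Lemma~\ref{lem:interpsubst}): single-step invariance of $\interp[ ]{\cdot}$ with the \textsc{SBeta} case handled by the two bullets of the substitution lemma, the $V$-invariance and classification-preservation side conditions for the congruence cases, and a reduction to common reducts so that every intermediate expression stays well-typed. The only cosmetic point is that passing from $\ottnt{A_{{\mathrm{1}}}} =_{\beta} \ottnt{A_{{\mathrm{2}}}}$ to a common reduct uses the Church--Rosser corollary of Theorem~\ref{thm:confluence} (a routine zig-zag induction on the conversion derivation) rather than the theorem verbatim, which is stated only for two reduction sequences out of a single source.
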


As promised, the range of the interpretation is classified by the
function $V$.  The proof is by induction on typing derivations.  In
the conversion case, Lemma~\ref{lem:interpbeta} is used.

\begin{lemma}[Soundness of
  $\interpm{ }{\cdot}$] \label{lem:interptype} If $\sigma \models
  \Gamma$ and $A$ is a kind or $\Gamma$-constructor such that $\Gamma  \vdash  \ottnt{A}  \ottsym{:}  \ottnt{B}$, then $\interp{A} \in V(B)$.
\end{lemma}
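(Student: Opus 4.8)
The plan is to proceed by induction on the derivation of $\Gamma \vdash A : B$, using the classification theorem at each step to determine which branch of the definitions of $V$ and $\interp{\cdot}$ applies, and to feed the induction hypothesis through the subderivations. I would first dispatch the base cases. For \textsc{TSort}, the only relevant axiom is $(\sortstar,\sortbox)$, so $A = \sortstar$ and $B = \sortbox$; here $\interp{\sortstar} = \SN$ and $V(\sortbox) = \SAT$, and the goal is exactly the already-noted fact $\SN \in \SAT$. For \textsc{TVar} with $(x : B) \in \Gamma$, the variable $x$ is never a kind (no context can bind a variable at type $\sortbox$, since $\sortbox$ has no type), so it must be a $\Gamma$-constructor, forcing $B$ to be a kind; then the defining condition of $\sigma \models \Gamma$ gives $\sigma(x) \in V(B)$, and $\interp{x} = \sigma(x)$ closes the case.

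The inductive cases each begin by extracting, via a generation/inversion lemma on the relevant subderivation, the sorts involved, so that the induction hypothesis is applied only to subexpressions that are themselves kinds or constructors. For \textsc{TPi}, the subject $(x:A) \to B$ has a sort as its type, so its interpretation has the shape $\sarr{\interp{A}}{C}$ for a codomain $C$, and by Lemma~\ref{lem:satarrow} the goal reduces to $\interp{A} \in \SAT$ and $C \in \SAT$. The domain is handled by the induction hypothesis on $\Gamma \vdash A : s_1$, giving $\interp{A} \in V(s_1) = \SAT$. When $A$ is a kind, $C = \bigcap_{S \in V(A)} \interp[\ext{\sigma}{x}{S}]{B}$; here I would check that $\ext{\sigma}{x}{S} \models (\Gamma, x : A)$ for each $S \in V(A)$ so the induction hypothesis on $\Gamma, x:A \vdash B : s_2$ makes each member saturated, use Lemma~\ref{lem:Vinhabited} to see the index set $V(A)$ is non-empty, and conclude with Lemma~\ref{lem:satintersection}. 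When $A$ is a type, $C = \interp{B}$ and the induction hypothesis applies directly, since $\sigma$ already models $\Gamma, x:A$ (a new term variable places no constraint on a constructor environment).

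The \textsc{TLam} and \textsc{TApp} cases are where the definitions of $V$ and $\interp{\cdot}$ must be matched as genuine set-theoretic functions and applications. For \textsc{TLam}, inversion shows the codomain $B$ is itself a kind (both CC rules forming a kind-valued product have a kind codomain), so in the kind-domain subcase the induction hypothesis gives $\interp[\ext{\sigma}{x}{S}]{b} \in V(B)$ for each $S \in V(A)$, which is precisely the statement that $S \in V(A) \mapsto \interp[\ext{\sigma}{x}{S}]{b}$ is an element of $\{f \mid f : V(A) \to V(B)\} = V((x:A)\to B)$; the type-domain subcase collapses to the induction hypothesis on $b$. For \textsc{TApp}, in the constructor-argument subcase the induction hypotheses give $\interp{a} \in V((x:A)\to B)$ (a function out of $V(A)$) and $\interp{b} \in V(A)$, so $\interp{a}\,\interp{b} \in V(B)$; the term-argument subcase discards $\interp{b}$ and uses $\interp{a} \in V((x:A)\to B) = V(B)$ directly. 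In both subcases the ascribed type is $[b/x]B$, so I must also record the small fact that $V$ is invariant under substitution, $V([b/x]B) = V(B)$; this holds because the product clause of $V$ only inspects whether each domain is a kind, and a well-typed substitution preserves that classification. The \textsc{TConv} case is handled as the surrounding text indicates: the induction hypothesis gives $\interp{a} \in V(A)$, and since $A =_\beta B$ one concludes $V(A) = V(B)$; this $\beta$-invariance of $V$ is the $V$-level counterpart of Lemma~\ref{lem:interpbeta} and follows from confluence (Theorem~\ref{thm:confluence}) together with invariance of $V$ under a single reduction step.

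I expect the main obstacle to be the dependent-product machinery in \textsc{TPi} and \textsc{TApp}: correctly arranging the intersection over the non-empty index set $V(A)$ so that Lemmas~\ref{lem:Vinhabited}, \ref{lem:satintersection} and~\ref{lem:satarrow} combine, and verifying that the extension $\ext{\sigma}{x}{S}$ genuinely models the extended context in the kind-domain subcases. The two auxiliary observations about $V$ --- its invariance under substitution (for \textsc{TApp}) and under conversion (for \textsc{TConv}) --- are individually routine inductions on the structure of kinds, but they are precisely the facts that are easy to overlook, being the $V$-level analogs of the interpretation-level Lemmas~\ref{lem:interpsubst} and~\ref{lem:interpbeta}.
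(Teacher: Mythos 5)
Your proof is correct and follows exactly the route the paper indicates: induction on the typing derivation, with Lemmas~\ref{lem:satarrow}, \ref{lem:satintersection} and~\ref{lem:Vinhabited} carrying the \textsc{TPi} case, the classification theorem selecting the right branch of $V$ and $\interp[]{\cdot}$ throughout, and a conversion-invariance fact closing \textsc{TConv}. You even correctly identify that the conversion case actually needs invariance of $V$ under $=_{\beta}$ (and \textsc{TApp} needs invariance of $V$ under substitution) --- the $V$-level analogs of Lemmas~\ref{lem:interpbeta} and~\ref{lem:interpsubst} --- a distinction the paper's one-line proof sketch, which cites Lemma~\ref{lem:interpbeta} itself, elides.
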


An important consequence of this lemma is that the interpretation of a
type is always a saturated set and thus contains only strongly
normalizing expressions.

\subsection{From the interpretation to Strong Normalization}

The key fact about the function $\interp[ ]{\cdot}$ is that every
CC expression is in the interpretation of its type.  Before we can
prove this, we need a notion of environment for terms corresponding to
$\sigma$ for types.

\begin{definition}
  We call a mapping on variables $\rho$ a {\it term environment for
    $\Gamma$ with respect to $\sigma$} when $\sigma \models \Gamma$
  and:
  $$
  \text{ if }  ( \ottmv{x} : \ottnt{A} ) \in  \Gamma  \text{ then } \rho(x) \in \interp{A}
  $$
  We write $\sigma \models \rho : \Gamma$ for this relation and
  $\subst{A}$ for the expression created by simultaneously replacing
  the variables of $A$ with their mappings in $\rho$.  We write
  $\rho[x \mapsto a]$ for the term environment that sends $x$ to $a$
  and otherwise agrees with $\rho$.
\end{definition}

We show only the trickiest case of the key theorem---the complete
proof may be found in the appendix.  Though there are a number of
details to keep track of, all the cleverness is in the definition of
the interpretation; the result here is straightforward by induction.

\begin{theorem}[Soundness of $\substm{ }{\cdot}$]\label{thm:soundness1}
  Suppose $\Gamma  \vdash  \ottnt{a}  \ottsym{:}  \ottnt{A}$ and $\sigma \models \rho : \Gamma$.  Then
  $\subst{a} \in \interp{A}$.
\end{theorem}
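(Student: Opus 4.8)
The plan is to argue by induction on the derivation of $\Gamma \vdash a : A$, proving the statement uniformly for every compatible pair $\sigma \models \rho : \Gamma$; this generality is essential, since the abstraction case will require extending both environments and reapplying the hypothesis to the body. Three cases are immediate. For \textsc{TSort} we must show $\subst{s_1} = s_1 \in \interp{s_2} = \SN$, which holds because sorts are normal forms. For \textsc{TVar} with $(x:A)\in\Gamma$ we have $\subst{x} = \rho(x)$, and $\rho(x)\in\interp{A}$ is precisely the defining clause of $\sigma \models \rho : \Gamma$. For \textsc{TConv} the hypothesis gives $\subst{a}\in\interp{A}$, and Lemma~\ref{lem:interpbeta} supplies $\interp{A} = \interp{B}$ from $A =_{\beta} B$, so $\subst{a}\in\interp{B}$.

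The \textsc{TPi} case collapses to strong normalization: the goal is $\subst{(x:A)\to B} \in \interp{s_3} = \SN$, and since $\subst{(x:A)\to B}$ is the product $(x:\subst{A})\to\substm{\rho}{B}$ (choosing $x$ outside the domain of $\rho$), it suffices that $\subst{A}\in\SN$ and $\substm{\rho}{B}\in\SN$, a basic fact about reduction of products. The first follows from the hypothesis on $A$; for the second I extend the pair by $\ext{\sigma}{x}{S}$ for some $S\in V(A)$ (which exists by Lemma~\ref{lem:Vinhabited} when $A$ is a kind) and $\ext{\rho}{x}{x}$ with $x$ a fresh variable, which lies in $\interp{A}$ since fresh variables are in $\SN\cap\BASE$ and hence in every saturated set; the hypothesis on $B$ then gives $\substm{\rho}{B}\in\interp[\ext{\sigma}{x}{S}]{B}\subseteq\SN$. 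For \textsc{TApp}, with $\Gamma\vdash a:(x:A)\to B$ and $\Gamma\vdash b:A$, the hypotheses yield $\subst{a}\in\interp{(x:A)\to B}$ and $\subst{b}\in\interp{A}$. When $A$ is a kind this interpretation is $\sarr{\interp{A}}{\bigcap_{S\in V(A)}\interp[\ext{\sigma}{x}{S}]{B}}$, so $\subst{a}\,\subst{b}$ lands in every $\interp[\ext{\sigma}{x}{S}]{B}$; instantiating $S$ at $\interp{b}$ (which lies in $V(A)$ by Lemma~\ref{lem:interptype}) and rewriting via Lemma~\ref{lem:interpsubst} gives membership in $\interp{[b/x]B}$, the goal since $\subst{a\,b}=\subst{a}\,\subst{b}$. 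The case where $A$ is a type is identical but simpler, using the non-dependent clause of the interpretation and the second bullet of Lemma~\ref{lem:interpsubst}.

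The crux is \textsc{TLam}, where $\Gamma\vdash\lambda x:A.b:(x:A)\to B$. Writing $L=\subst{\lambda x:A.b}=\lambda x:\subst{A}.\substm{\rho}{b}$, I must place $L$ in an arrow set, so I fix $S\in V(A)$ and $c\in\interp{A}$ and aim to show $L\,c\in\interp[\ext{\sigma}{x}{S}]{B}$ (dropping $S$ in the type case). The extended pair $\ext{\sigma}{x}{S}\models\ext{\rho}{x}{c}:\Gamma,x:A$ is compatible because $c\in\interp{A}=\interp[\ext{\sigma}{x}{S}]{A}$ ($x$ being fresh in $A$) and $S\in V(A)$, so the hypothesis on $b$ gives $\substm{\ext{\rho}{x}{c}}{b}\in\interp[\ext{\sigma}{x}{S}]{B}$. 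Now $L\,c$ is its own key redex with $\redk(L\,c)=\substm{\ext{\rho}{x}{c}}{b}$, which we have just placed in the codomain set. That set is saturated, so by closure under key-redex expansion it remains only to show $L\,c\in\SN$. I obtain this from Lemma~\ref{lem:keyredexSN}: $c\in\SN$ and $\redk(L\,c)\in\SN$ already hold, and $L\in\SN$ because $\subst{A}\in\SN$ and $\substm{\rho}{b}\in\SN$ (the latter by applying the hypothesis for $b$ with $c$ taken to be a fresh variable). Ranging over all $S$ and $c$ yields $L\in\interp{(x:A)\to B}$.

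The main obstacle is this abstraction case: it is the only place that forces the simultaneous extension of both the constructor environment $\sigma$ and the term environment $\rho$, and it requires establishing strong normalization of $L\,c$ \emph{before} the saturation closure can be invoked, so the interplay of Lemmas~\ref{lem:keyredexSN},~\ref{lem:interpsubst}, and~\ref{lem:interptype} with the fresh-variable trick must be orchestrated carefully. By contrast, once the interpretation is set up, every other case is bookkeeping.
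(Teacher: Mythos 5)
Your proposal is correct and follows essentially the same route as the paper's own proof: the same induction on the typing derivation, the same fresh-variable instantiation of $\rho$ in the \textsc{TPi} case, the same use of Lemma~\ref{lem:interptype} and Lemma~\ref{lem:interpsubst} in the application case, and the same key-redex-expansion argument via Lemma~\ref{lem:keyredexSN} in the abstraction case. The only differences are cosmetic (you merge the two sort subcases of \textsc{TLam} into one uniform argument, and you leave implicit the appeal to the classification lemma needed before invoking Lemma~\ref{lem:interpbeta} in the conversion case).
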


\begin{proof}
  By induction on the derivation $\D$ of $\Gamma  \vdash  \ottnt{a}  \ottsym{:}  \ottnt{A}$.
  \begin{itemize}
  \item[{\bf Case:}] $\D = \lowered{
      \infer[TLam]
        {\infer{}{\D_1 \\\\ \Gamma  \vdash  \ottnt{A}  \ottsym{:}  \ottnt{s}} \\
         \infer{}{\D_2 \\\\  \Gamma , \ottmv{x} \!:\! \ottnt{A}   \vdash  \ottnt{b}  \ottsym{:}  \ottnt{B}}}
       {\Gamma  \vdash  \lambda  \ottmv{x}  \ottsym{:}  \ottnt{A}  \ottsym{.}  \ottnt{b}  \ottsym{:}  \ottsym{(}  \ottmv{x}  \ottsym{:}  \ottnt{A}  \ottsym{)}  \to  \ottnt{B}}
    }$

    The IH for $\D_1$ gives us $\subst{A} \in \SN$.  Since $x$ is
    a bound variable, we may pick it to be fresh for the domain and
    range of $\rho$.  There are two subcases: $s$ is either $\ast$ or
    $\Box$.
    \begin{itemize}
    \item Suppose $s$ is $\ast$.  Then we must show $\lambda
      x:\subst{A} . \subst{b} \in \sarr{\interp{A}}{\interp{B}}$.  So
      let $a \in \interp{A}$ be given, and observe it is enough to
      show $(\lambda x:\subst{A} . \subst{b})\;a \in \interp{B}$.

      We have $\sigma \models \rho[x \mapsto a] :  \Gamma , \ottmv{x} \!:\! \ottnt{A} $.  Thus, the
      IH for $\D_2$ gives us $\subst[\rho[x \mapsto a\rbracket ]{b} \in
      \interp{B}$.  But we also know
      $$
        (\lambda x:\subst{A} . \subst{b})\;a  \leadsto 
        [a/x]\subst{b} =
        \subst[ \rho[x \mapsto a\rbracket ]{b},
      $$
      and this step contracts a key redex.  So, it suffices to show
      that $(\lambda x:\subst{A} . \subst{b})\;a \in \SN$.  This
      follows by lemma~\ref{lem:keyredexSN}, using the classification
      lemma and lemma~\ref{lem:interptype} to show the pieces of the
      application are in $\SN$.

    \item Suppose instead that $s$ is $ \Box $.  We must show
      $\lambda x : \subst{A} . \subst{b} \in
      \sarr{\interp{A}}{\displaystyle\bigcap_{S \in V(A)}
        \interp[\ext{\sigma}{x}{S}]{B}}$.  Let an expression $a \in
      \interp{A}$ and a saturated set $S \in V(A)$ be given.  It is
      enough to show $(\lambda x:\subst{A} . \subst{b})\;a \in
      \interp[\ext{\sigma}{x}{S}]{B}$.

      Because $\ext{\sigma}{x}{S} \models \ext{\rho}{x}{a} :
       \Gamma , \ottmv{x} \!:\! \ottnt{A} $, the IH for $\D_2$ gives us that
      $\subst[\ext{\rho}{x}{a}]{b} \in
      \interp[\ext{\sigma}{x}{S}]{B}$.  As in the previous case, we
      can observe that
      $$
      (\lambda x:\subst{A} . \subst{b})\;a  \leadsto 
      [a/x] \subst{b} = \subst[\ext{\rho}{x}{a}]{b}.
      $$
      This step contracts a key redex, and by reasoning as in the last
      case we find $(\lambda x:\subst{A} . \subst{b})\;a \in
      \interp[\ext{\sigma}{x}{S}]{B}$ as desired. \qedhere
    \end{itemize}
  \end{itemize}
\end{proof}

The last result quickly implies strong normalization:

\begin{theorem}
  Suppose $\Gamma  \vdash  \ottnt{a}  \ottsym{:}  \ottnt{A}$.  Then $a \in \SN$.
\end{theorem}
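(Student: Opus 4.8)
The plan is to instantiate the soundness theorem (Theorem~\ref{thm:soundness1}) with the \emph{identity} term environment, so that the substituted expression $\subst{a}$ is literally $a$, and then read off strong normalization from the fact that the interpretation of $A$ is contained in $\SN$. The whole argument is an ``initialization'' of the soundness theorem: all the real work has already been done in the interpretation and in Theorem~\ref{thm:soundness1}, and what remains is to exhibit legal environments that specialize the general statement to the concrete expression $a$.

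First I would fix the environments. Since $\Gamma  \vdash  \ottnt{a}  \ottsym{:}  \ottnt{A}$ entails $\vdash  \Gamma$, I build a constructor environment with $\sigma \models \Gamma$ by assigning, to each variable $x$ with $ ( \ottmv{x} : \ottnt{A'} ) \in  \Gamma $ and $A'$ a kind, an arbitrary element of $V(A')$; this is possible because $V(A')$ is non-empty by Lemma~\ref{lem:Vinhabited}. For the term environment I take $\rho$ to be the identity, $\rho(x) = x$ for every variable. The crucial verification is that $\sigma \models \rho : \Gamma$, i.e.\ that $\rho(x) = x \in \interp{A'}$ for each $ ( \ottmv{x} : \ottnt{A'} ) \in  \Gamma $. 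By the classification theorem each such $A'$ is a kind or a $\Gamma$-type, so by soundness of the type interpretation (Lemma~\ref{lem:interptype}) we have $\interp{A'} \in V( \Box ) = \SAT$ or $\interp{A'} \in V( \ast ) = \SAT$; in either case $\interp{A'}$ is a saturated set. Now every variable lies in $\SN \cap \BASE$ (a variable is a base expression and admits no reductions), and saturated sets contain $\SN \cap \BASE$, so $x \in \interp{A'}$ as required. This is exactly where the second saturated-set axiom earns its keep: it guarantees that the free variables of $a$ land in the interpretations of their declared types.

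With a valid $\sigma \models \rho : \Gamma$ in hand, Theorem~\ref{thm:soundness1} yields $\subst{a} \in \interp{A}$. Because $\rho$ is the identity substitution it is inert, so $\subst{a} = a$ and hence $a \in \interp{A}$. It then remains only to observe that $\interp{A} \subseteq \SN$. By the classification theorem $A$ is a sort, a kind, or a $\Gamma$-type; if $A$ is a sort then $\interp{A} = \SN$ by definition, and otherwise $\interp{A}$ is a saturated set by Lemma~\ref{lem:interptype} and so $\interp{A} \subseteq \SN$. In every case $a \in \SN$, which is the desired conclusion. I do not expect any serious obstacle here: the one step warranting care is the admissibility of the identity environment described above, and even that reduces to the closure properties of saturated sets together with the inhabitation fact from Lemma~\ref{lem:Vinhabited}.
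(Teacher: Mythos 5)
Your proposal is correct and follows essentially the same route as the paper's own proof: instantiate the soundness theorem with a constructor environment built from canonical inhabitants of the $V(A')$ (via Lemma~\ref{lem:Vinhabited}) and the identity term environment, justify $\sigma \models \rho : \Gamma$ by the fact that saturated sets contain $\SN \cap \BASE$ and hence all variables, and conclude via Lemma~\ref{lem:interptype} that $\interp{A} \subseteq \SN$. Your explicit treatment of the case where $A$ is a sort is a minor additional care the paper elides, but the argument is the same.
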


\begin{proof}
  For each kind $A$, let $S_A$ be some canonical inhabitant of $V(A)$
  (by lemma~\ref{lem:Vinhabited}, these exist).  Define a constructor
  environment $\sigma$ such that, if $x : A \in \Gamma$ and $A$ is a
  kind, then $\sigma(x) \mapsto S_A$.  Define a term environment
  $\rho$ such that each variable of $\Gamma$ maps to itself.

  To see $\sigma \models \rho : \Gamma$, observe first that $\vdash  \Gamma$ (by induction on the typing derivation).  Thus each type
  assigned by $\Gamma$ itself has type $ \ast $ or $ \Box $.  So
  their interpretations are saturated sets
  (lemma~\ref{lem:interptype}), which contain all the variables.

  Thus, by the soundness of the interpretation, $a = \subst{a} \in
  \interp{A}$.  But by the classification lemma and
  lemma~\ref{lem:interptype}, $\interp{A}$ is a saturated set.  So $a
  \in \SN$.
\end{proof}

\subsection{Extensions}

We conclude the presentation of this development by describing how it
changes to accommodate several extensions to CC.  We sketch each
addition at a high level to give a sense of the proof's flexibility.
Adding small $\Sigma$-types and W-types is encouragingly
straightforward.  Unfortunately, changes at the kind level turn out to
be considerably more complicated.

\subsubsection*{Small $\Sigma$-types}

Small $\Sigma$-types classify dependent pairs where the first component is
a term.  We extend the syntax of CC with four new constructs
$$
A,B,a,b := \ldots \,|\,
 \Sigma  \ottmv{x} : \ottnt{A} . \ottnt{B}  \,|\,
\ottsym{(}  \ottnt{a}  \ottsym{,}  \ottnt{b}  \ottsym{)} \,|\,
 \mathsf{proj}_1\; \ottnt{a}  \,|\,
 \mathsf{proj}_2\;  \ottnt{a} 
$$
and we add straightforward corresponding typing rules:
$$
\infer[TSigma]
  {\Gamma  \vdash  \ottnt{A}  \ottsym{:}   \ast  \\  \Gamma , \ottmv{x} \!:\! \ottnt{A}   \vdash  \ottnt{B}  \ottsym{:}  \ottnt{s}}
  {\Gamma  \vdash   \Sigma  \ottmv{x} : \ottnt{A} . \ottnt{B}   \ottsym{:}  \ottnt{s}}
\hspace{3em}
\infer[TPair]
  {\Gamma  \vdash  \ottnt{a}  \ottsym{:}  \ottnt{A} \\ \Gamma  \vdash  \ottnt{b}  \ottsym{:}  \ottsym{[}  \ottnt{a}  \ottsym{/}  \ottmv{x}  \ottsym{]}  \ottnt{B} \\\\ \Gamma  \vdash   \Sigma  \ottmv{x} : \ottnt{A} . \ottnt{B}   \ottsym{:}  \ottnt{s}}
  {\Gamma  \vdash  \ottsym{(}  \ottnt{a}  \ottsym{,}  \ottnt{b}  \ottsym{)}  \ottsym{:}   \Sigma  \ottmv{x} : \ottnt{A} . \ottnt{B} }
$$
$$
\infer[TProj1]
  {\Gamma  \vdash  \ottnt{a}  \ottsym{:}   \Sigma  \ottmv{x} : \ottnt{A} . \ottnt{B} }
  {\Gamma  \vdash   \mathsf{proj}_1\; \ottnt{a}   \ottsym{:}  \ottnt{A}}
\hspace{3em}
\infer[TProj2]
  {\Gamma  \vdash  \ottnt{a}  \ottsym{:}   \Sigma  \ottmv{x} : \ottnt{A} . \ottnt{B} }
  {\Gamma  \vdash   \mathsf{proj}_2\;  \ottnt{a}   \ottsym{:}  \ottsym{[}   \mathsf{proj}_1\; \ottnt{a}   \ottsym{/}  \ottmv{x}  \ottsym{]}  \ottnt{B}}
$$
The reduction judgement must also change.  The obvious congruence
rules are needed for each construct, and there are two reduction rules
to handle the case where the projection operations meet pairs:
$$
\infer[SProj1] { }
  { \mathsf{proj}_1\; \ottsym{(}  \ottnt{a}  \ottsym{,}  \ottnt{b}  \ottsym{)}   \leadsto  \ottnt{a}}
\hspace{3em}
\infer[SProj2] { }
  { \mathsf{proj}_2\;  \ottsym{(}  \ottnt{a}  \ottsym{,}  \ottnt{b}  \ottsym{)}   \leadsto  \ottnt{b}}
$$
We make some simple changes to the definitions of base expressions and
key reduction.  These ensure that certain pair constructions will
always appear in our interpretations.  In particular, we extend
Definition~\ref{def:base} with the following clause:
\begin{itemize}
\item If $a \in \BASE$ then $ \mathsf{proj}_1\; \ottnt{a}  \in \BASE$ and $ \mathsf{proj}_2\;  \ottnt{a} 
  \in \BASE$.
\end{itemize}
And we extend Definition~\ref{def:keyred} with the following clause:
\begin{itemize}
\item If $a$ has a key redex, then $ \mathsf{proj}_1\; \ottnt{a} $ and $ \mathsf{proj}_2\;  \ottnt{a} $
  have the same key redex.
\end{itemize}

The definition of saturated sets remains the same, and we define a new
construction $X \otimes Y$ that is a saturated set whenever $X$ and $Y$
are:
$$
X \otimes Y := \{a\;|\;  \mathsf{proj}_1\; \ottnt{a}  \in X \wedge  \mathsf{proj}_2\;  \ottnt{a}  \in Y \}
$$
This construction is used to extend the interpretation of types for
dependent sums.  Here, since we know $x$ is a term variable, we do not
need to extend $\sigma$ in the interpretation of $B$ (just as in the
interpretation for product types and functions).
$$
\interp{ \Sigma  \ottmv{x} : \ottnt{A} . \ottnt{B} } = \interp{A} \otimes \interp{B}
$$
This new clause doesn't significantly alter the proofs of
Lemmas~\ref{lem:interpsubst}, \ref{lem:interpbeta} and
\ref{lem:interptype}.  Similarly, a quick inspection of the four new
typing rules reveals that the soundness of the term interpretation
(Theorem~\ref{thm:soundness1}) follows directly by induction in these
cases.

\subsubsection*{W-types}

W-types add well-founded trees and recursion to the calculus of
constructions.  They are common in the literature as a small change
that adds much of the expressive power of simple datatypes.  We do not
present their details, but a comprehensive introduction may be found
in~\cite{nordstrom:book}.  

Extending the proof to support W-types is only a little harder than
the previous example.  Once again, the definitions of the base
expressions and key reduction each get an extra clause.  The main
difficulty comes in defining a new construction on saturated sets to
model the $ \mathsf{W} \ottmv{x} : \ottnt{A} . \ottnt{B} $ type constructor.  The typing rule for this
constructor is:
$$
\infer[TW]
  {\Gamma  \vdash  \ottnt{A}  \ottsym{:}   \ast  \\  \Gamma , \ottmv{x} \!:\! \ottnt{A}   \vdash  \ottnt{B}  \ottsym{:}   \ast }
  {\Gamma  \vdash   \mathsf{W} \ottmv{x} : \ottnt{A} . \ottnt{B}   \ottsym{:}   \ast }
$$
The type $ \mathsf{W} \ottmv{x} : \ottnt{A} . \ottnt{B} $ classifies well-founded trees where $A$
describes the ways a tree may be formed and $B$ describes the contents
of the tree for each possible $A$.  Unsurprisingly, the interpretation
of this type involves a least fixed point over a particular monotone
operator on saturated sets.  Geuvers demonstrates that a suitable
class of operators on saturated sets has least fixed points (an
interesting exercise in set theory, but somewhat outside the scope of
the current project).

After proving this property of saturated sets, the rest of the proof
hardly changes.  An extra case is added to the interpretation of types
which uses a fixed point to interpret $ \mathsf{W} \ottmv{x} : \ottnt{A} . \ottnt{B} $.  The cases
involving the new typing rules for W-types are then straightforward by
induction.

\subsubsection*{Large $\Sigma$-types}

Large $\Sigma$-types classify dependent pairs where the first component is
a constructor.  Adding them to small $\Sigma$-types involves only one
additional typing rule:
$$
\infer[TSigmaL]
  {\Gamma  \vdash  \ottnt{A}  \ottsym{:}   \Box  \\  \Gamma , \ottmv{x} \!:\! \ottnt{A}   \vdash  \ottnt{B}  \ottsym{:}  \ottnt{s}}
  {\Gamma  \vdash   \Sigma  \ottmv{x} : \ottnt{A} . \ottnt{B}   \ottsym{:}   \Box }
$$
This addition is more complicated than small product types because the
$\otimes$ construction on saturated sets is no longer sufficient.  Defining
$$
\interp{ \Sigma  \ottmv{x} : \ottnt{A} . \ottnt{B} } = \interp{A} \otimes \interp{B}
$$
is incorrect when $A$ is a kind, because $\sigma$ must contain
interpretations for each of the type variables in $B$ on the
right-hand side.

This requires significant changes to the kind and type
interpretations.  Currently, $\interp{A}$ is a saturated set when $A$
is a kind.  Instead, $\interp{A}$ will be function from elements of
$V(A)$ to $\SAT$.  Lemma~\ref{lem:interptype} and
Theorem~\ref{thm:soundness1} change as follows:
\begin{lemma*}[Soundness of $\interpm{}{\cdot}$]
  Suppose $\sigma \models \Gamma$ and $A$ is a kind or $\Gamma$-constructor
  such that $\Gamma  \vdash  \ottnt{A}  \ottsym{:}  \ottnt{B}$.
  \begin{itemize}
  \item If $A$ is a $\Gamma$-constructor, then $\interp{A} \in V(B)$.
  \item If $A$ is a kind, then $\interp{A} \in \{f \,|\, f : V(A) \to
    \SAT\}$.    
  \end{itemize}
\end{lemma*}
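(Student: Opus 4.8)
The plan is to prove both bullets simultaneously by induction on the derivation $\D$ of $\Gamma \vdash A : B$, exactly as in Lemma~\ref{lem:interptype}, reading off which bullet to discharge from the classification theorem: a kind is an $A$ whose ascribed type $B$ is $\Box$, and a $\Gamma$-constructor is an $A$ with $\Gamma \vdash B : \Box$. The two bullets must be carried together, because the new product clause interprets a kind $(x{:}A) \to B$ by quantifying over realizers supplied by the interpretations of its constituents, which may fall under either bullet. Before the case analysis I would record that a variable, a $\lambda$-abstraction, and an application are \emph{never} kinds: a kind has type $\Box$, and since CC has no axiom $(\Box, s)$, the sort $\Box$ never types a variable, never occurs (up to $=_{\beta}$) as the codomain of a well-formed $\Pi$, and hence never types an application. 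Consequently, for the rules TVar, TLam, and TApp only the constructor bullet is in play, and the argument is verbatim that of Lemma~\ref{lem:interptype}: TVar is $\interp{x} = \sigma(x) \in V(B)$ from $\sigma \models \Gamma$; TLam reduces, via the induction hypothesis on the body under $\ext{\sigma}{x}{S}$, to showing $S \mapsto \interp[\ext{\sigma}{x}{S}]{b}$ lands in $\{f : V(A) \to V(B)\} = V((x{:}A) \to B)$; and TApp combines the induction hypothesis with the (adapted) substitution lemma. The TConv case is routine, using that $=_{\beta}$ preserves both the classification of $A$ and the value of $V$ on the ascribed kind, together with the conversion lemma for the interpretation.

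The genuinely new work is confined to the two rules that can produce a kind, TSort for the sort $\ast$ and TPi with third sort $\Box$. For TSort I would define $\interp{\ast}$ to be the identity on $V(\ast) = \SAT$, which trivially lies in $\{f : \SAT \to \SAT\}$ and records the intuition that a semantic type $S$ is realized at kind $\ast$ by exactly the members of $S$. For TPi-at-$\Box$ I would give the refined, function-valued interpretation sending each $g \in V((x{:}A) \to B)$ to the saturated set of expressions that carry realizers of the domain to realizers of the codomain pointwise in $g$: when $A$ is a kind, to $\bigcap_{S \in V(A)} \sarr{\interp{A}(S)}{\interp[\ext{\sigma}{x}{S}]{B}(g(S))}$, and when $A$ is a $\Gamma$-type, to $\sarr{\interp{A}}{\interp{B}(g)}$. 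I would emphasize that the \emph{type-level} product clause must be refined in the same spirit, since when $A$ is a kind its interpretation $\interp{A}$ is no longer a saturated set but a function, so the old $\sarr{\interp{A}}{\cdots}$ is replaced by a quantification over realizers $c \in \interp{A}(S)$.

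Two obligations then discharge the TPi case, and one is the crux. First, each value of the assignment is a saturated set, which follows from Lemma~\ref{lem:satarrow} for the single arrow and Lemma~\ref{lem:satintersection} for the intersection over $S \in V(A)$ (non-empty by Lemma~\ref{lem:Vinhabited}). Second, the assignment $g \mapsto (\cdots)$ is a genuine function with domain $V((x{:}A) \to B)$ and codomain $\SAT$, which reduces to the induction hypotheses classifying $\interp{A}$ and each $\interp[\ext{\sigma}{x}{S}]{B}$ --- exactly where the simultaneous induction over both bullets pays off. Because the kind interpretation is now function-valued, I would first restate and re-prove the analogues of Lemma~\ref{lem:interpsubst} and Lemma~\ref{lem:interpbeta}, threading the semantic argument $g$ through every clause so that substituting into, or $\beta$-converting, a kind yields \emph{equal functions} $V(A) \to \SAT$ rather than merely equal saturated sets. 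The hard part will be the TPi-at-$\Box$ clause itself: it is entirely absent from the original development, where a kind interpreted to a single saturated set, and it is the point at which the dependency of $B$'s interpretation on the chosen domain element $S \in V(A)$ must be reconciled with the extra argument $g$, fed in pointwise as $S \mapsto g(S)$. Getting this definition to be simultaneously pointwise saturated, functorial in $g$ with the stated domain and codomain, and stable under substitution and $\beta$-conversion is what forces the reformulation of the whole interpretation and is the one place where more than routine bookkeeping is required.
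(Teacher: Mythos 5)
The paper never actually proves this lemma: it states the modified $\Sigma$-clause, explicitly omits the updated kind-level clauses for the other constructs, and dismisses the rest with ``the proofs of every result involving the kind and type interpretations must be redone, but they are not harder.'' Your reconstruction supplies exactly the omitted material, and its skeleton is the intended one: induction on the derivation carrying both bullets simultaneously; the observation that variables, abstractions and applications are never kinds in CC, so only \textsc{TSort} and \textsc{TPi}-at-$\Box$ feed the second bullet; the pointwise clause $g \mapsto \bigcap_{S \in V(A)}\sarr{\interp{A}(S)}{\interp[\ext{\sigma}{x}{S}]{B}(g(S))}$ justified by Lemmas~\ref{lem:satarrow}, \ref{lem:satintersection} and~\ref{lem:Vinhabited}; the corresponding repair of the \emph{type-level} product clause (since $\interp{A}$ is no longer a set when $A$ is a kind); and the re-proof of Lemmas~\ref{lem:interpsubst} and~\ref{lem:interpbeta} for function-valued interpretations. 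All of this is consistent with what the paper sketches.

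There is one concrete error: $\interp{ \ast }$ must be the \emph{constant} function $S \in \SAT \mapsto \SN$, not the identity. The identity does satisfy the present lemma (it lies in $\{f \,|\, f : \SAT \to \SAT\}$), but it falsifies the companion soundness theorem for terms, whose kind bullet at $\Gamma \vdash a :  \ast $ demands $\subst{a} \in \interp{ \ast }(\interp{a})$; with the identity this reads $\subst{a} \in \interp{a}$, i.e.\ every type expression inhabits its own interpretation. That already fails for $a = (x{:}y) \to y$ with $y{:} \ast $ in $\Gamma$: the expression $((x{:}\subst{y}) \to \subst{y})\,b$ is neither a base expression nor the expansion of a key redex, so nothing forces it into $\sigma(y)$, and hence the $\Pi$-expression need not lie in $\sarr{\sigma(y)}{\sigma(y)}$ when $\sigma(y)$ is, say, the least saturated set. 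The constant-$\SN$ choice recovers the original clause $\interp{ \ast } = \SN$ uniformly in the new argument and makes that case of the soundness theorem go through as before. With that repair, your outline is correct and matches the development the paper intends.
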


\begin{theorem*}[Soundness of $\substm{}{\cdot}$]
  Suppose $\Gamma  \vdash  \ottnt{a}  \ottsym{:}  \ottnt{A}$ and $\sigma \models \rho : \Gamma$.
  \begin{itemize}
  \item If $A$ is a $\Gamma$-type, $\subst{a} \in \interp{A}$.
  \item If $A$ is a kind, $\subst{a} \in \interp{A}(\interp{a})$.
  \end{itemize}
\end{theorem*}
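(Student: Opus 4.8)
The plan is to follow the proof of Theorem~\ref{thm:soundness1} essentially line for line, proceeding by induction on the derivation $\D$ of $\Gamma \vdash a : A$, but now carrying the strengthened two-part invariant of the restated theorem. Before the induction can even be stated, three supporting definitions must be lifted into the function-valued world. The type interpretation on kinds must be redefined so that $\interp{A} \in \{f \mid f : V(A) \to \SAT\}$; in particular $\interp{\ast}$ becomes the constant map sending every $S \in V(\ast)$ to $\SN$, and the product-kind clause must become a \emph{dependent} arrow: $\interp{(x{:}A)\to B}(f)$ is the set of expressions $g$ such that for every point $S \in V(A)$ and every argument $t \in \interp{A}(S)$ one has $g\,t \in \interp[\ext{\sigma}{x}{S}]{B}(f(S))$ (with the analogous non-dependent version when $A$ is a $\Gamma$-type). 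Correspondingly, the definition of term environment must be split so that a kind variable $(x{:}A) \in \Gamma$ satisfies $\rho(x) \in \interp{A}(\sigma(x))$. Finally, the companion interpretation lemma and Lemma~\ref{lem:interpsubst} must be reproved in function-valued form, and Lemma~\ref{lem:satarrow} must be generalized to show this dependent arrow is again an element of $\SAT$.

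With that scaffolding, the routine cases transcribe directly. The variable case is immediate from the split term-environment condition. The case TSort, and more generally any subderivation whose subject has type $\Box$, is never recursed into by this theorem: wherever the induction needs information about a kind it appeals to the companion interpretation lemma, not to the soundness theorem. All the genuinely new content lives in TLam and TApp, and within each in the \emph{constructor} subcase, where the subject's type is a kind and the right-hand side of the conclusion is the function application $\interp{A}(\interp{a})$ rather than a bare saturated set.

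For TLam in this subcase, write $a = \lambda x{:}A.b$ with kind $(x{:}A)\to B$, so that $\interp{a}$ is the function $S \mapsto \interp[\ext{\sigma}{x}{S}]{b}$. Unfolding the dependent-arrow definition, it suffices to fix a point $S \in V(A)$ and an argument $t \in \interp{A}(S)$ and show $(\lambda x{:}\subst{A}.\subst{b})\,t \in \interp[\ext{\sigma}{x}{S}]{B}(\interp{a}(S))$. Exactly as in the original proof this application contracts a key redex to $\subst[\ext{\rho}{x}{t}]{b}$, so by Lemma~\ref{lem:keyredexSN} (and the companion lemma, to see that the parts are in $\SN$) it is enough to place that contractum in the target. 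The induction hypothesis for $b$ does precisely this, once we observe $\ext{\sigma}{x}{S} \models \ext{\rho}{x}{t} : \Gamma, x{:}A$; the requirement $t \in \interp{A}(S)$ built into the split term-environment condition is exactly what certifies this extended environment.

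The main obstacle is the TApp case that produces a constructor: from $\Gamma \vdash a : (x{:}A)\to B$ and $\Gamma \vdash b : A$ we must show $\subst{a}\,\subst{b} \in \interp{[b/x]B}(\interp{a\,b})$, which is a coherence problem between syntactic and semantic substitution. Taking $A$ a kind (the hard subcase, so $b$ is a constructor), the induction hypotheses give $\subst{a} \in \interp{(x{:}A)\to B}(\interp{a})$ and $\subst{b} \in \interp{A}(\interp{b})$; instantiating the dependent arrow at the point $S = \interp{b}$ with argument $t = \subst{b}$ yields $\subst{a}\,\subst{b} \in \interp[\ext{\sigma}{x}{\interp{b}}]{B}(\interp{a}(\interp{b}))$. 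It then remains to rewrite this into the claimed set, which requires two identities that must be established alongside the interpretation: the function-valued substitution lemma giving $\interp{[b/x]B} = \interp[\ext{\sigma}{x}{\interp{b}}]{B}$, and the application clause giving $\interp{a\,b} = \interp{a}(\interp{b})$. Making these two rewrites and the arrow-instantiation line up exactly --- and checking that $\subst{b}$ genuinely inhabits the fiber $\interp{A}(\interp{b})$, which itself leans on the kind case of the hypothesis for $b$ --- is where the bookkeeping departs from Theorem~\ref{thm:soundness1}; the simpler subcase where $A$ is a $\Gamma$-type follows the same pattern using the non-dependent arrow and the second bullet of Lemma~\ref{lem:interpsubst}.
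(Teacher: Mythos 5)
The paper never actually proves this restated theorem: after sketching the function-valued reinterpretation of kinds needed for large $\Sigma$-types, it says only that ``the proofs of every result involving the kind and type interpretations must be redone, but they are not harder,'' and omits all details. Measured against that intent, your reconstruction is substantially the right one: the dependent-arrow reinterpretation of product kinds, the split term-environment condition $\rho(x) \in \interp{A}(\sigma(x))$ for kind variables, the function-valued forms of Lemmas~\ref{lem:interptype} and~\ref{lem:interpsubst}, and especially the TApp constructor subcase --- instantiating the dependent arrow at the point $S = \interp{b}$ with argument $\subst{b}$ and then discharging the mismatch via $\interp{\ottsym{[}  \ottnt{b}  \ottsym{/}  \ottmv{x}  \ottsym{]}  \ottnt{B}} = \interp[\ext{\sigma}{x}{\interp{b}}]{B}$ and $\interp{\ottnt{a} \, \ottnt{b}} = \interp{a}(\interp{b})$ --- are exactly where the new content lives, and you handle them correctly.

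There is, however, one concrete misstep: the claim that ``any subderivation whose subject has type $\Box$ is never recursed into'' is false, and following it leaves a hole. The induction ranges over all derivations, so \textsc{TSort}, \textsc{TPi} landing in $\Box$, and \textsc{TConv} at the kind level are genuine cases of the induction, not appeals to the companion lemma. More importantly, in your \textsc{TLam} case you need $\subst{A} \in \SN$ for the domain annotation before Lemma~\ref{lem:keyredexSN} applies, and when $A$ is a kind this comes precisely from the induction hypothesis for the first premise $\D_1$, whose subject has type $\Box$; the companion interpretation lemma cannot supply it, because it speaks only about the semantic object $\interp{A}$ and says nothing about the syntactic expression $\subst{A}$. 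These cases are easy --- with $\interp{\Box}$ the constant-$\SN$ function they all reduce to membership in $\SN$, exactly as in the appendix proof of Theorem~\ref{thm:soundness1} --- but they must be kept in the induction. A secondary omission: the restatement exists to accommodate large $\Sigma$-types, so a complete proof also needs the \textsc{TSigmaL}, \textsc{TPair} and projection cases, now read in the fiber using $\interp{ \Sigma  \ottmv{x} : \ottnt{A} . \ottnt{B} } = (X,Y) \mapsto \interp{A}(X) \otimes \interp[\ext{\sigma}{x}{X}]{B}(Y)$; your proposal is silent on these.
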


To illustrate these changes, we show the modified interpretations for
the kind $ \Sigma  \ottmv{x} : \ottnt{A} . \ottnt{B} $ when both $A$ and $B$ are kinds.  In this
case, we define
$$
V( \Sigma  \ottmv{x} : \ottnt{A} . \ottnt{B} ) = V(A) \times V(B)
$$
where $\times$ is the standard set-theoretic product operator.  
The type interpretation of this kind uses the new function argument
to fill in the gap we observed before:
$$
\interp{ \Sigma  \ottmv{x} : \ottnt{A} . \ottnt{B} } = 
    (X,Y) \in V(A) \times V(B) 
       \mapsto \interp{A}(X) \otimes \interp[\ext{\sigma}{x}{X}]{B}(Y)
$$
The other cases of $\interp[]{\cdot}$ that handle kinds must
also be updated, but we omit the details.  The proofs of every result
involving the kind and type interpretations must be redone, but they
are not harder.

\section{Modeling pure type systems with realizability semantics}
\label{sec:realizability}

{\it Note to the reader: The proof presented in this section is the
  most complicated of the three.  The model it uses is substantially
  more complex than the previous two, and there are several technical
  problems with the paper under consideration.  This section is
  included for completeness and to document some of the issues we
  encountered in reproducing the results.  Casual readers are
  encouraged to skip the details.}

\citet{DBLP:conf/types/MelliesW96} consider the question of strong
normalization for a subset of the pure type systems.  They define a
``realizability'' semantics parameterized in the same way as a PTS and
show that, when such a model exists, the corresponding PTS is strongly
normalizing.  Their proof identifies four specific properties that the
model must satisfy in order to guarantee strong normalization, and the
paper exhibits suitable models for several systems.  The idea of using
realizability semantics to model CC was originally introduced
by~\citet{alti:phd93}.

In this section we present their development.  The results are
particularly interesting in that the authors consider pure type
systems which are more expressive than CC.  For example, ECC (the
extended calculus of constructions) adds an infinite hierarchy of
predicative sorts to CC.  Proving strong normalization for this system
has traditionally been somewhat harder~\cite{luo:book}.

This proof is considerably more involved than the one in the previous
section.  In particular, we must define realizability models and lift
many of the ideas already explored in the context of CC to this new
domain, suitably generalized to work with any pure type system.  The
situation is additionally complicated because some of the theorems and
proofs given in the paper are false or inadequate.  We still believe
the technique is worth presenting because of its promised generality
and because it seems possible the problems here could be repaired.  We
will focus on giving intuition for the model constructions and avoid
getting caught up in the proofs.

We begin by introducing \textit{labeled} pure type systems with tight
reduction (Section~\ref{secPTS:lpts}).  The basic structures used in
the interpretation are defined in Section~\ref{secPTS:lsets}, and in
this context Section~\ref{secPTS:problem} illustrates one of the
paper's errors.  Section~\ref{secPTS:conditions} identifies the four
key properties that must hold of a model for the strong normalization
proof to apply, and examples of suitable constructions are given for
System F (Section~\ref{secPTS:systemF}) and CC
(Section~\ref{secPTS:CC}).  The interpretation of expressions into
these models is given in Section~\ref{secPTS:interp}.  Finally,
Section~\ref{secPTS:sn} discusses the paper's attempt to prove strong
normalization when a satisfactory model exists and to lift the result
back to a PTS without the extra labels.

\subsection{Labeled pure type systems and tight reduction}
\label{secPTS:lpts}

The proof we consider here is primarily concerned with
\textit{labeled} pure type systems which have more type annotations
and a restricted reduction relation.  In particular, the syntactic
forms for function abstraction ($ \lambda _{( \ottmv{x} : A ) \to B }. b $) and application
($ \mathsf{app} _{( \ottmv{x} : A ) \to B } ( b , a ) $) are now labeled with the complete type
of the function involved.  Additionally, the rule for beta reduction
has been modified to demand that the annotations match:
$$
\ottdruleTSBeta{}
$$
This new rule is known as \textit{tight reduction}.  The restrictions
give us more information about types from the syntax itself, which can
help to avoid potential circularity in the proof.

The complete specification of the modified system can be found in
Figure~\ref{fig:ptstight}.  The only other significant change is in
the conversion rule, which demands that one type is actually reducible
to the other.  This ensures that conversions take place in a path
through the set of well-typed expressions.  After proving that the
expressions of this system are strongly normalizing, it will be
relatively simple to lift the result back to a standard PTS by adding
annotations.

In what follows we consider an arbitrary labeled PTS and let
$\T$ stand for the set of its expressions.  Many of our definitions
from the previous section easily adapt to the new domain.  The base
expressions are now those with the form:
$$
  \mathsf{app}_{(y_n:A_n)\to B_n}(\ldots( \mathsf{app} _{( \ottmv{y_{{\mathrm{1}}}} : A_{{\mathrm{1}}} ) \to B_{{\mathrm{1}}} } ( \ottmv{x} , a_{{\mathrm{1}}} ) , \ldots), a_n)
$$

The definitions of key redexes, saturated sets and the product
construction $\sarr{S_1}{S_2}$ on saturated sets remain the same.
Syntactically, key reductions now look like this:
\begin{center}
  \begin{tabular}{cl}
              & 
      $\mathsf{app}_{(x_n:A_n)\to B_n} (\ldots 
             ( \mathsf{app} _{( \ottmv{x} : A ) \to B } (  \lambda _{( \ottmv{x} : A ) \to B }. b  , a ) , a_1),
                  \ldots),a_n)$ \\
    $ \leadsto_{t} $ &
      $\mathsf{app}_{(x_n:A_n)\to B_n} ( \ldots (\mathsf{app}_{(x_1:A_1)\to B_1} 
                                   (\ottsym{[}  a  \ottsym{/}  \ottmv{x}  \ottsym{]}  b, a_1), \ldots), a_n)$
  \end{tabular}
\end{center}
As before, we'll write $a = \redk(b)$ when $a$ is the labeled expression
that results from reducing $b$'s key redex.

\begin{figure}
  \small
  \begin{tabular}{ll}
    $A,\,B,\,a,\,b$ & $::= 
    \ottnt{s} \;|\; 
    \ottmv{x} \;|\;
    \ottsym{(}  \ottmv{x}  \ottsym{:}  A  \ottsym{)}  \to  B \;|\;
     \lambda _{( \ottmv{x} : A ) \to B }. b  \;|\;
     \mathsf{app} _{( \ottmv{x} : A ) \to B } ( a , b )  $\\
    $\Gamma$ & $::=  \cdot \;|\; \Gamma , \ottmv{x} \!:\! A $
  \end{tabular}
  
  \bigskip

  \begin{ottdefnblock}{$a  \leadsto_{t}  b$}{}
  $$
  \ottdruleTSBeta{}
  $$
  \medskip
  \begin{center}
    plus contextual rules, including reduction in the type annotations
  \end{center}
\end{ottdefnblock}
\bigskip
\begin{ottdefnblock}{$a  \leadsto^{*}_{t}  b$}{}
  $$
  \ottdruleMTSRefl{}\qquad \ottdruleMTSStep{}
  $$
\end{ottdefnblock}

\begin{ottdefnblock}{$\Gamma  \vdash_{t}  a  \ottsym{:}  A$}{}
  $$
  \ottdruleTTSort{}\qquad \ottdruleTTVar{}
  $$
  
  $$
  \ottdruleTTConv{} \qquad \ottdruleTTPi{}
  $$

  $$
  \ottdruleTTLam{}\qquad \ottdruleTTApp{}
  $$
\end{ottdefnblock}

\begin{ottdefnblock}{$\vdash_{t}  \Gamma$}{}
  $$
  \ottdruleTCNil{}\qquad \ottdruleTCCons{}
  $$
\end{ottdefnblock}

\caption{Definition of PTS with tight reduction}
\label{fig:ptstight}
\end{figure}

\subsection{Realizability constructions}
\label{secPTS:lsets}

We now define the basic constructions that will be used in the
interpretation of a PTS.
\begin{definition}
  A \textit{$\Lambda$-set} is a pair $(X_0,\models)$ where $X_0$ is
  any set and $\cdot \models \cdot \subseteq \T \times X_0$ is a
  relation between the set of labeled expressions and $X_0$.

  We call the elements of $X_0$ the \textit{carriers}.  When $\alpha
  \in X_0$, the \textit{realizers} of $\alpha$ are the expressions $A$
  such that $A \models \alpha$.  When $X$ is a $\Lambda$-set we write
  $X_0$ for its first component, $\models_X$ for its second, and
  $\alpha \lin X$ for $\alpha \in X_0$.
\end{definition}

Roughly speaking, a type $B$ will be modeled as a $\Lambda$-set whose
realizers include the terms of type $B$.  We can think of
$\Lambda$-sets as sets of expressions with some extra structure
provided by the carriers.  Like sets of expressions, $\Lambda$-sets
can be saturated:
\begin{definition}
  A $\Lambda$-set $X$ is \textit{saturated} if:
  \begin{itemize}
  \item every realizer is strongly normalizing,
  \item there is a carrier that is realized by every element of
    $\BASE\cap\SN$, and
  \item if $\alpha \lin X$, the realizers of $\alpha$ are closed under
    the expansion of key redexes.  That is, if $a \models \alpha$,
    $a = \redk(b)$ and $b \in \SN$, then $b \models \alpha$.
  \end{itemize}
\end{definition}

It is not hard to see that if $X$ is a saturated $\Lambda$-set, $X$'s
realizers form a saturated set.  We also identify a class of
isomorphisms between $\Lambda$-sets:
\begin{definition}
  Let $X$ and $Y$ be two $\Lambda$-sets.  A \textit{$\Lambda$-iso} $p$
  from $X$ to $Y$ is a bijective function $p : X_0 \to Y_0$ such that
  $a \models_X \alpha$ iff $a \models_Y p(\alpha)$.
\end{definition}

As we suggested earlier, types will be modeled by $\Lambda$-sets.
Unsurprisingly, then, sorts will be modeled by collections of
$\Lambda$-sets.  We introduce some additional structure in these
collections to deal with the circularity in some pure type systems.
In the definitions to follow we have some fixed set $\EE$ which will
index a family of equivalence relations.  We will instantiate $\EE$
when we give models for particular theories.
\begin{definition}
  An {\it $\EE$-set} $\EA$ is a set of $\Lambda$-sets that is paired with
  two families of equivalence relations, indexed by $i \in \EE$:
  $$
    \eqbig{i}{\EA}{\cdot}{\cdot} \subseteq \EA^2
    \hspace{3em} \text{and} \hspace{3em}
    \eqsmall{i}{\EA}{\cdot}{\cdot} \subseteq 
      (\displaystyle\bigcup_{X \in \EA} X_0)^2
  $$
\end{definition}
\vspace{-1em} Here, $\eqbig{i}{\EA}{}{}$ equates some of the
$\Lambda$-sets in $\EA$, and $\eqsmall{i}{\EA}{}{}$ equates some of
their carriers.  As we will see, it would be hard to interpret types
formed using CC's rule $(\sortbox,\sortbox,\sortbox)$ inside of set
theory without breaking up the $\EE$-set associated with $\sortbox$
into equivalence classes.

We can now lift the notion of products to $\Lambda$-sets.  When
interpreting a type $\ottsym{(}  \ottmv{x}  \ottsym{:}  A  \ottsym{)}  \to  B$, we will have a $\Lambda$-set
$X$ for $A$ and a family of $\Lambda$-sets for $B$, one for
each carrier of $X$.  The following construction defines a new
corresponding $\Lambda$-set.
\begin{definition}[$\Lambda$-set products]
  Let $\EA_1$ and $\EA_2$ be $\EE$-sets.  Suppose $X \in \EA_1$ and
  $Y_{\alpha} \in \EA_2$ for each $\alpha \lin X$.  We define a new
  $\Lambda$-set $\sarr{X}{Y}$ by:

  \begin{tabular}{rcl}
    $\sarr{X}{Y}_0$ & := &
      $\{ f : (\alpha \in X_0) \to (Y_\alpha)_0 \,|\,
           \forall \alpha,\alpha' \in X_0, \forall i \in \EE,\,
                          \eqsmall{i}{\EA_1}{\alpha}{\alpha'}
             \Rightarrow
             \eqsmall{i}{\EA_2}{f(\alpha)}{f(\alpha')}\}$\\
             \\
    $a \models_{\sarr{X}{Y}} f$ & iff &
      $\forall \alpha \in X_0, \forall b \models_{X} \alpha, 
       \forall A,B \in \SN,\, 
            \mathsf{app} _{( \ottmv{x} : A ) \to B } ( a , b )  \models_{Y_{\alpha}} f(\alpha)$
  \end{tabular}
\end{definition}

The carriers of this new $\Lambda$-set are set-theoretic functions
which take any carrier $\alpha \lin X$ to a carrier of $Y_{\alpha}$.
These functions must map carriers related in $\EA_1$ to carriers
related in $\EA_2$.  Intuitively, a term $a$ realizes such a function
$f$ when any application of $f$ is realized by the corresponding
applications of $a$.

Our last definition in this section extends the equivalence relations
of two $\EE$-sets to the products between them.  The definition is
somewhat intricate and can be skipped on a first reading.
\begin{definition}
  Let $\EA_1$ and $\EA_2$ be $\EE$-sets.  Suppose we have $X,X' \in
  \EA_1$ and two corresponding families of $\Lambda$-sets,
  $Y_{\alpha},Y'_{\alpha'} \in \EA_2$ for each $\alpha \in X$ and
  $\alpha' \in X'$.  For each $i \in \EE$, we define two new
  relations:
  \begin{itemize}
  \item$\eqbig{i}{\sarr{\EA_1}{\EA_2}}{\sarr{X}{Y}}{\sarr{X'}{Y'}}$
    \hspace{0.5em} iff:
    $$
         \eqbig{i}{\EA_1}{X}{X'}
      \hspace{2em}\text{and}\hspace{2em}
         \forall (\alpha,\alpha') \in X_0 \times X'_0,\;
                         \eqsmall{i}{\EA_1}{\alpha}{\alpha'} 
            \Rightarrow  \eqbig{i}{\EA_2}{Y_{\alpha}}{Y'_{\alpha'}}
    $$
  \item Suppose $\eqbig{i}{\sarr{\EA_1}{\EA_2}}{\sarr{X}{Y}}{\sarr{X'}{Y'}}$.
      When $f \lin \sarr{X}{Y}$ and $g \lin \sarr{X'}{Y'}$
      we define $\eqsmall{i}{\sarr{\EA_1}{\EA_2}}{f}{g}$ iff:
      $$
      \forall (\alpha,\alpha') \in X_0 \times X'_0,\;
                    \eqsmall{i}{\EA_1}{\alpha}{\alpha'}
        \Rightarrow \eqsmall{i}{\EA_2}{f(\alpha)}{g(\alpha')}
      $$
  \end{itemize}
\end{definition}

\subsection{A problem}
\label{secPTS:problem}

As mentioned in the introduction, there are several problems with this
proof.  The first comes from the definition of $\Lambda$-set products.
The authors claim to prove that this operation preserves saturation:
\begin{quote}
  Let $\EA_1$ and $\EA_2$ be $\EE$-sets.  Suppose $X \in \EA_1$ and
  $Y_{\alpha} \in \EA_2$ for each $\alpha \lin X$.  If $X$ and each
  $Y_{\alpha}$ are saturated $\Lambda$-sets, then so is $\sarr{X}{Y}$.
\end{quote}

While a similar result holds for for saturated sets, this proposition
is false.  A saturated $\Lambda$-set must have a carrier which
realizes every element of $\BASE \cap \SN$, but $\sarr{X}{Y}$ may have
no carriers at all.

As an example, consider three $\Lambda$-sets, $X, Y_{\beta}$ and
$Y_{\gamma}$, such that $X$ has two carriers and the others have one,
and each carrier is realized by every strongly normalizing expression.
That is:
\begin{eqnarray*}
  X :=& (\{\beta,\gamma\},\,\SN\times\{\beta,\gamma\})\\
  Y_{\beta} :=& (\{\beta'\},\,\SN\times\{\beta'\})\\
  Y_{\gamma} :=& (\{\gamma'\},\,\SN\times\{\gamma'\})
\end{eqnarray*}
These $\Lambda$-sets are saturated.  Suppose $\EE$ is a singleton set
$\{1\}$, and define two $\EE$-sets:
\begin{center}
  \begin{tabular}{lclclcl}
    $\EA_1$ & $:=$ & $\{X\}$ &  \hspace{4em} &$\EA_2$ & $:=$ & $ \{Y_{\beta},\,Y_{\gamma}\}$\\
    $\eqbig{1}{\EA_1}{}{}$ & $ :=$ & $ \{(X,X)\}$ & &
    $\eqbig{1}{\EA_2}{}{}$ & $ :=$ & $ \EA_2 \times \EA_2 $ \\
    $\eqsmall{1}{\EA_1}{}{}$ &$ :=$ & $ X_0 \times X_0$ & &
    $\eqsmall{1}{\EA_2}{}{}$ &$ :=$ & $ \{(\beta',\beta'),(\gamma',\gamma')\}$
  \end{tabular}
\end{center}
Notice in particular that $\beta$ and $\gamma$ are related by
$\eqsmall{1}{\EA_1}{}{}$ but that $\beta'$ and $\gamma'$ are not
related by $\eqsmall{1}{\EA_2}{}{}$.  Any carrier of $\sarr{X}{Y}$
would have to map $\beta$ to $\beta'$ and $\gamma$ to $\gamma'$, so no
carrier can preserve the equivalence relation.  Thus, $\sarr{X}{Y}$
has no carriers and is not saturated.  The (non-)proof given in the
paper misses this problem because it neglects to reason carefully
about which functions preserve the equivalence relation.

We could fix the example given here by demanding that if two carriers
of elements of a $\EE$-set realize the same expressions, they must be
related by $\eqsmall{i}{\EA}{}{}$.  However, it would still be
possible to construct a similar counter example by picking a smaller
set of realizers for $\gamma'$.  

One can imagine more complicated restrictions on $\Lambda$-sets and
$\EE$-sets which restore this property, but it is not clear how they
would influence the rest of the proof.  It is also possible that all
the specific uses of the product construction later in the paper
result in saturated $\Lambda$-sets.  However, because this regularity
property is implicitly relied on in countless places, tracking it
completely is beyond the scope of this survey.

\subsection{Models}
\label{secPTS:conditions}

We will now describe the model into which we interpret a labeled PTS.
This will be followed by four conditions, parameterized by the sets
$\sorts$, $\axioms$ and $\rules$.  The main result of the paper is
that when there exists a model satisfying the four conditions
instantiated with parameters corresponding to a particular PTS,
that system is strongly normalizing.

The four conditions are somewhat involved.  However, when using a
model where the set $\EE$ is empty, conditions 2-4 are trivially
satisfied.  This is the case for System F, so we recommend beginning
by understanding condition 1 and the model of System F in the next
section.  Then Section~\ref{secPTS:notCC} explains why this
construction does not suffice for CC, which may help motivate the
remaining conditions.

For each sort $s \in \sorts$, a model consists of
\begin{itemize}
\item an $\EE$-set $\EAup{s}$,
\item a saturated $\Lambda$-set $\EAdn{s}$, and
\item a bijection $\lift{s} : \EAdn{s}_0 \to \EAup{s}$.
\end{itemize}
When $\Gamma  \vdash  \ottnt{A}  \ottsym{:}  \ottnt{s}$, we intend for $\EAup{s}$ to contain a model of $A$
as a ``type'' and for $\EAdn{s}$ to contain a model of $A$ as a
``term''.  The lifting function $\lift{s}$ relates these two
interpretations: with each carrier of $\EAdn{s}$ we associate the
$\Lambda$-set that models its realizers as a type.  We will also use
the inverse of this function, which we write $\drop{s} : \EAup{s} \to
\EAdn{s}_0$.

The interpretation in Section~\ref{secPTS:interp} comes in two
corresponding levels.  Each type classified by a sort $s$ has a {\it
  type} interpretation as an element of $\EAup{s}$.  Every well-typed
expression also has a {\it term} interpretation as a carrier of the
$\Lambda$-set associated with its type.  Finally, each expression will
realize its term interpretation.  By condition 1.1 below, this will
imply the expression is strongly normalizing.

\subsubsection*{Condition 1: Uniformity of the universe hierarchy}

The following three properties ensure that the $\EE$-sets and
$\Lambda$-sets corresponding to sorts have a regular internal
structure and that there are relationships among them corresponding to
the sets $\axioms$ and $\rules$.

\begin{itemize}
\item[(1.1)] For each sort $s$, the elements of $\EAup{s}$ are
  saturated $\Lambda$-sets and the carriers of $\EAdn{s}$ are
  each realized by every strongly normalizing expression.
\item[(1.2)] If $(s_1,s_2) \in \axioms$, then $\EAdn{s_1} \in
  \EAup{s_2}$.
\item[(1.3)] Suppose $(s_1,s_2,s_3) \in \rules$ and that we have
  $\Lambda$-sets $X \in \EAup{s_1}$ and $Y_{\alpha} \in \EAup{s_2}$
  for each $\alpha \in X$.  If:
  $$
  \forall \alpha,\alpha' \lin X, \forall i \in \EE,\;
                 \eqsmall{i}{\EAup{s_1}}{\alpha}{\alpha'} 
    \Rightarrow  \eqbig{i}{\EAup{s_2}}{Y_{\alpha}}{Y_{\alpha'}}
  $$
  Then there is a $\Lambda$-set $\sarrdn{X}{Y} \in \EAup{s_3}$ and
  a $\Lambda$-iso $\isodn{X}{Y} : \sarr{X}{Y} \to \sarrdn{X}{Y}$.
\end{itemize}

While the first two sub-conditions are straightforward, some intuition
is helpful for the third.  If $ ( \ottnt{s_{{\mathrm{1}}}} , \ottnt{s_{{\mathrm{2}}}} , \ottnt{s_{{\mathrm{3}}}} ) \in \mathcal{R} $, then the PTS
allows function types whose domain is classified by $\ottnt{s_{{\mathrm{1}}}}$ and range
by $\ottnt{s_{{\mathrm{2}}}}$.  These function types can themselves be classified by
sort $\ottnt{s_{{\mathrm{3}}}}$.  Correspondingly, the condition says that when we may
form a $\Lambda$-set product in the model from $\EAup{s_1}$ to
$\EAup{s_2}$, there must be an isomorphic $\Lambda$-set in
$\EAup{s_3}$.  As we will see in Section~\ref{secPTS:notCC}, the
equivalence relation condition on the $\Lambda$-set product restricts
our attention to certain well-formed constructions to cope with the
size mismatch between set-theoretic function spaces and PTS functions.

The remaining three conditions impose regularity constraints on the
equivalence relations.  These are essentially just sanity checks, and
because we are not covering the details of the proofs we will not
fully discuss how they are used.  Roughly, conditions 2.1 and 3 check
that when two $\Lambda$-sets or carriers are related by
$\eqbig{i}{\EA}{}{}$ or $\eqsmall{i}{\EA}{}{}$, applying various
operations to them preserves the relation.  Conditions 2.2 and 4 check
that when carriers appear in the model of more than one sort, they are
treated uniformly.

\subsubsection*{Condition 2: Uniformity of $\eqbig{i}{\EA}{}{}$ and
  $\eqsmall{i}{\EA}{}{}$}

\begin{itemize}
\item[(2.1)] Suppose $X,X' \in \EAup{s_1}$ and $\EAdn{s_1} \in
  \EAup{s_2}$.  Then, for any $i \in \EE$:
  $$
  \eqbig{i}{\EAup{s_1}}{X}{X'} 
  \hspace{2em}\text{iff}\hspace{2em}
  \eqsmall{i}{\EAup{s_2}}{\drop{s_1}(X)}{\drop{s_1}(X')}
  $$
\item[(2.2)] Suppose $X_1, X_1' \in \EAup{s_1}$ and $X_2, X_2' \in
  \EAup{s_2}$ such that $\alpha$ is a carrier of both $X_1$ and $X_2$
  and $\alpha'$ is a carrier of both $X_1'$ and $X_2'$.  Then, for any
  $i \in \EE$:
  $$
  \eqsmall{i}{\EAup{s_1}}{\alpha}{\alpha'} 
  \hspace{2em}\text{iff}\hspace{2em}
  \eqsmall{i}{\EAup{s_2}}{\alpha}{\alpha'}
  $$
\end{itemize}

\subsubsection*{Condition 3: Uniformity of $\sarrdn{X}{Y}$ and
  $\isodn{X}{Y}$}
Suppose $ ( \ottnt{s_{{\mathrm{1}}}} , \ottnt{s_{{\mathrm{2}}}} , \ottnt{s_{{\mathrm{3}}}} ) \in \mathcal{R} $.  Let $\Lambda$-sets $X,X' \in
\EAup{s_1}$ and families of $\Lambda$-sets $Y_{\alpha},Y'_{\alpha'} \in
\EAup{s_2}$ for each $(\alpha,\alpha') \in X_0 \times X'_0$ be given
such that both families satisfy the hypothesis of (1.3).  For each
$i \in \EE$:
\begin{itemize}
\item[(3.1)] If 
  $\eqbig{i}{\sarr{\EAup{s_1}}{\EAup{s_2}}}{\sarr{X}{Y}}{\sarr{X'}{Y'}}$
  then
  $\eqbig{i}{\EAup{s_3}}{\sarrdn{X}{Y}}{\sarrdn{X'}{Y'}}$.
\item[(3.2)] Suppose
  $\eqbig{i}{\sarr{\EAup{s_1}}{\EAup{s_2}}}{\sarr{X}{Y}}{\sarr{X'}{Y'}}$.
  Then for any $f \lin \sarr{X}{Y}$ and $g \lin \sarr{X'}{Y'}$:
  $$
  \eqsmall{i}{\sarr{\EAup{s_1}}{\EAup{s_2}}}{f}{g}
  \hspace{2em}\text{iff}\hspace{2em}
  \eqsmall{i}{\EAup{s_3}}{\isodn{X}{Y}(f)}{\isodn{X'}{Y'}(g)}
  $$
\end{itemize}

\subsubsection*{Condition 4: Uniformity of $\lift{s}$ and $\drop{s}$}
\begin{itemize}
\item[(4.1)] If $X \in \EAup{s_1}$ and $X \in \EAup{s_2}$ then
  $\drop{s_1}(X) = \drop{s_2}(X)$.
\item[(4.2)] If $\alpha \lin \EAdn{s_1}$ and $\alpha \lin \EAdn{s_2}$
  then $\lift{s_1}(\alpha) = \lift{s_2}(\alpha)$.
\end{itemize}
These two conditions indicate that the sort subscripts on the lifting
operation and its inverse are only annotations; they do not influence
the behavior of the functions.

\subsection{A model for System F}
\label{secPTS:systemF}

We will use two simple $\Lambda$-set constructions in the model for
System F.
\begin{definition}
  A $\Lambda$-set $X$ is \textit{degenerate} if $X_0 = \{S\}$ where
  $S$ is a saturated set and $a \models_{X} S$ iff $a \in S$.
  We refer to $S$ as the \textit{underlying set} of $X$, and write
  $\DG$ for the set of all degenerate $\Lambda$-sets.
\end{definition}

\begin{definition}
  When $X$ is any non-empty set, we define an associated $\Lambda$-set
  $J(X)$ whose carrier set is $X$.  Each element of $X$ is realized by
  every strongly normalizing expression.
\end{definition}

Recall that System F is given by:
$$
\sorts = \{\sortstar,\sortbox\} \hspace{4em}
\axioms = \{(\sortstar,\sortbox)\} \hspace{4em}
\rules = \{(\sortstar,\sortstar,\sortstar),(\sortbox,\sortstar,\sortstar)\}
$$
In the model of System F, we pick $\EE = \emptyset$.  So, we will not
have to define any of the $\EE$-set relations.  We pick
$$
\EAup{\sortstar} = \DG  \hspace{6em}  \EAup{\sortbox} = \{ J(\DG) \}
$$
and, for each sort $s$, set $\EAdn{s} = J(\EAup{s})$.  The bijection
$\lift{s} : \EAdn{s}_0 \to \EAup{s}$ is then simply the identity
function.

We must verify that this model has the appropriate properties.  Of
these, conditions 2-4 are vacuous because $\EE$ is empty.  Conditions
1.1 and 1.2 are apparent.  It only remains to check 1.3.

\begin{proof}
  Suppose that $ ( \ottnt{s_{{\mathrm{1}}}} , \ottnt{s_{{\mathrm{2}}}} , \ottnt{s_{{\mathrm{3}}}} ) \in \mathcal{R} $ and we have $\Lambda$-sets $X
  \in \EAup{s_1}$ and $Y_{\alpha} \in \EAup{s_2}$ for each $\alpha
  \lin X$.  Because $s_2 = s_3 = \sortstar$, each $Y_{\alpha}$ is a
  degenerate $\Lambda$-set.

  The carriers of $\sarr{X}{Y}$ are the functions which map each
  $\alpha \lin X$ to a carrier of $Y_{\alpha}$.  But because each
  $Y_{\alpha}$ has just one carrier, there is only one such function
  $f$.

  We must pick a degenerate $\Lambda$-set for $\sarrdn{X}{Y}$, so
  pick the one whose underlying set is the realizers of $f$ in
  $\sarr{X}{Y}$.\footnote{Here we are implicitly relying on the
    problematic lemma from Section~\ref{secPTS:problem}.  However,
    expanding the definitions when $s_1 = \sortstar$ or $s_1 =
    \sortbox$ reveals that the constructions remain saturated in this
    particular case.}  Then the $\Lambda$-iso $\isodn{X}{Y}$ simply
  maps the only carrier of $\sarr{X}{Y}$ to the only carrier of
  $\sarrdn{X}{Y}$ and trivially satisfies the condition that realizers
  are preserved.
\end{proof}

\subsection{Does this model work for CC?}
\label{secPTS:notCC}

It is instructive to consider why the model selected for System F is
not sufficient for CC.  This will motivate the use of $\EE$-sets and
their equivalence relations.  Recall that CC is the PTS given by:
$$
\sorts = \{\sortstar,\sortbox\} \hspace{4em}
\axioms = \{(\sortstar,\sortbox)\} \hspace{4em}
\rules = \{  (\sortstar,\sortstar,\sortstar),
           \,(\sortstar,\sortbox,\sortbox),
           \,(\sortbox,\sortstar,\sortstar),
           \,(\sortbox,\sortbox,\sortbox)\}
$$
The only difference from System F is the addition of two new rules.
We find the problem in satisfying property 1.3 for the rule
$(\sortbox,\sortbox,\sortbox)$.

Suppose $X \in \EAup{\sortbox}$ and $Y_{\alpha} \in \EAup{\sortbox}$
for each $\alpha \lin X$.  Referring back to the definition of
$\EAup{\sortbox}$ for our model, we see this means $X$ and the
$Y_{\alpha}$s are all $J(\DG)$.

Condition 1.3 demands that we find an element of $\EAup{\sortbox}$
which is isomorphic to $\sarr{X}{Y}$.  The only such element is
$J(\DG)$, so we are being asked to show $J(\DG)$ is isomorphic to
$\sarr{X}{Y}$.  But the carrier set of $J(\DG)$ is $\DG$, and the
carrier set of $\sarr{X}{Y}$ is all the set theoretic functions from
$\DG$ to $\DG$.  Plainly, there is no bijection between these sets.

This problem is fundamental, since the axiom $(\sortstar,\sortbox)$
means $\EAup{\sortbox}$ must contain $\EAdn{\sortstar}$.  To resolve
it, we use a non-empty $\EE$.  The equivalence relations will break
our model of $\sortbox$ up into an infinite hierarchy of levels.  In
situations like this where self-reference and set-theoretic size
constraints are a problem, we will simply move to a ``higher'' level.

\subsection{A model for CC}
\label{secPTS:CC}

We pick $\EAup{\sortstar} = \DG$ and $\EAdn{s} = J(\EAup{s})$ for each
$s$ as we did in the model of System F.  Correspondingly, each
$\lift{s}$ is the identity function.  The definition of
$\EAup{\sortbox}$ is broken into levels.  Define:
\begin{itemize}
\item $\lev_1 = \{J(\DG)\}$
\item For each $n \in \mathbb{N}^{+}$, 
  $$
  \lev_{\le n} = \displaystyle\bigcup_{1 \le k \le n} \lev_{k}
  $$
\item For each $n \in \mathbb{N}^{+}$
  \begin{center}
    \begin{tabular}{ccll}
      $\lev_{n+1}$ 
         & = 
         & $\{\sarr{X}{Y}\;|\; X \in \EAup{\sortstar},\,$&
                               $Y_{\alpha} \in \lev_{n} 
                                 \text{ for each } \alpha \lin X \}$\\
         & $\cup$
         & $\{\sarr{X}{Y}\;|\; X \in \lev_{n},\,$&
                               $Y_{\alpha} \in \lev_{\le n} 
                                 \text{ for each } \alpha \lin X \}$\\
         & $\cup$
         & $\{\sarr{X}{Y}\;|\; X \in \lev_{\le n},\,$&
                               $Y_{\alpha} \in \lev_{n} 
                                 \text{ for each } \alpha \lin X \}$
    \end{tabular}
  \end{center}
\end{itemize}

Then $\EAup{\sortbox}$ is the collection of all these levels:
$$
\EAup{\sortbox} = \displaystyle\bigcup_{n \in \mathbb{N}^{+}} \lev_{n}
$$
Notice that all the levels are disjoint.  Each level collects the
$\Lambda$-set products whose domain or range is formed from elements
of the previous level.

Finally we define the equivalence relations for our $\EE$-sets.  In
this model, we make $\EE$ a singleton set $\{1\}$, so there are two
relations to define for each sort.
\begin{itemize}
\item $\eqbig{1}{\EAup{\sortstar}}{}{}$ relates every two elements of
  $\EAup{\sortstar}$.
\item $\eqsmall{1}{\EAup{\sortstar}}{}{}$ relates every two carriers
  of every two elements of $\EAup{\sortstar}$.
\item $\eqbig{1}{\EAup{\sortbox}}{}{}$ relates two elements of
  $\EAup{\sortbox}$ iff they are in the same $\lev_{n}$.
\item $\eqsmall{1}{\EAup{\sortbox}}{}{}$ relates every two carriers of
  every two elements of $\EAup{\sortbox}$.
\end{itemize}

These constructions satisfy conditions 1.1 and 1.2 as in the model for
System F.\footnote{This is another implicit use of the mistaken
  theorem discussed in Section~\ref{secPTS:problem}.  However, one can
  verify that $\EAup{\sortbox}$ here does contain only saturated
  $\Lambda$-sets.  The situation is less clear in the paper's model of
  ECC, where the relations $\eqsmall{i}{\EAup{s}}{}{}$ are more
  complicated.}  The same proof of condition 1.3 for rules
$(\sortstar,\sortstar,\sortstar)$ and $(\sortbox,\sortstar,\sortstar)$
also applies.  We now consider how this model solves the problem we
encountered with rules $(\sortstar,\sortbox,\sortbox)$ and
$(\sortbox,\sortbox,\sortbox)$ in the previous section.  
\begin{proof}
  Suppose $X \in \EAup{s}$ for some $s$ and $Y_{\alpha} \in
  \EAup{\sortbox}$ for each $\alpha \lin X$.  Suppose also that:
$$
\forall \alpha,\alpha' \lin X,
\eqsmall{1}{\EAup{s_1}}{\alpha}{\alpha'} \;\Rightarrow\;
\eqbig{1}{\EAup{\sortbox}}{Y_{\alpha}}{Y_{\alpha'}}
$$
Because the relations $\eqsmall{1}{\EAup{s}}{}{}$ equate any two
carriers, this implies that all of the $Y_{\alpha}$s are related by
$\eqbig{1}{\EAup{\sortbox}}{}{}$.  This means they are all in the same
set $\lev_n$ for some $n$.  Define the natural number $m$ to be $1$ if
$s = \sortstar$ and otherwise to be the $m$ such that $X \in \lev_m$.
Then $\sarr{X}{Y} \in \lev_{\max(n,m)+1}$.  So we pick $\sarr{X}{Y}$
itself for $\sarrdn{X}{Y}$ and the identity function for
$\isodn{X}{Y}$.
\end{proof}

We check each of the remaining conditions.  Of these, 3.1 and 3.2 are
somewhat involved and the rest are easy.
\begin{itemize}
\item[(2.1)] The only case where $\EAdn{s_1} \in \EAup{s_2}$ is when
  $s_1$ is $\sortstar$ and $s_2$ is $\sortbox$.  The required condition
  follows easily since $\eqbig{1}{\EAup{s_1}}{}{}$ relates every two
  elements of $\EAup{s_1}$ and $\eqsmall{1}{\EAup{s_2}}{}{}$ relates
  every two carriers of elements of $\EAup{s_2}$.

\item[(2.2)] The carriers of elements of $\EAup{\sortstar}$ are all
  expressions.  The carriers of elements of $\EAup{\sortbox}$ are all
  $\Lambda$-sets.  Thus we must have $s_1 = s_2$ and the condition is
  trivial.

\item[(3.1)] When $s_2 = s_3 = \sortstar$ this condition is trivial
  because $\eqbig{i}{\EAup{\sortstar}}{}{}$ relates every two elements
  of $\EAup{\sortstar}$.  Otherwise suppose the $\Lambda$-sets $X,X'
  \in \EAup{s_1}$ and the families of $\Lambda$-sets
  $Y_{\alpha},Y'_{\alpha'} \in \EAup{\sortbox}$ are given as
  specified, so that:
  $$
  \eqbig{1}{\sarr{\EAup{s_1}}{\EAup{\sortbox}}}{\sarr{X}{Y}}{\sarr{X'}{Y'}}
  $$
  We must show that $\sarrdn{X}{Y}$ and $\sarrdn{X'}{Y'}$ occupy the
  same $\lev_n$.  Recall that $\sarrdn{X}{Y} = \sarr{X}{Y}$ and
  $\sarrdn{X'}{Y'} = \sarr{X'}{Y'}$.  The result then follows from the
  definition of $\eqbig{1}{\sarr{\EAup{s_1}}{\EAup{s_2}}}{}{}$, which
  checks that $X$ and $X'$ are on the same level and each pair
  $Y_{\alpha}$ and $Y_{\alpha'}$ are on the same level.

\item[(3.2)] Suppose $ ( \ottnt{s_{{\mathrm{1}}}} , \ottnt{s_{{\mathrm{2}}}} , \ottnt{s_{{\mathrm{3}}}} ) \in \mathcal{R} $ and
  $\eqbig{1}{\sarr{\EAup{s_1}}{\EAup{s_2}}}{\sarr{X}{Y}}{\sarr{X'}{Y'}}$.
  Let $f \lin \sarr{X}{Y}$ and $g \lin \sarr{X'}{Y'}$ be given.  Since
  $\eqsmall{1}{\EAup{s_3}}{}{}$ always relates any two carriers of
  $\EAup{s_3}$, we must show
  $\eqsmall{1}{\sarr{\EAup{s_1}}{\EAup{s_2}}}{f}{g}$.

  This follows immediately by the definition of
  $\eqsmall{1}{\sarr{\EAup{s_1}}{\EAup{s_2}}}{}{}$ and the
  observation that $\eqsmall{1}{\EAup{s_2}}{}{}$ also relates any two
  carriers of $\EAup{s_2}$ (and thus any two elements in the in the
  range of $f$ and $g$).

\item[(4)] The models of the sorts are disjoint, so this condition is
  trivial.
  
\end{itemize}

\subsection{The interpretation}
\label{secPTS:interp}

The interpretations of types and terms should not be very surprising.
As we have mentioned, the interpretation of a type will be a
$\Lambda$-set and the interpretation of a term will be a carrier.  The
soundness theorem will say that every well-typed expression realizes its
term interpretation.

We define three functions by mutual recursion: $\ptsCtxI{\Gamma}$,
$\ptsTypI{\Gamma}{a}$ and $\ptsTrmI{\Gamma}{a}$.  The soundness
theorem will show that the first is defined whenever $\vdash_{t}  \Gamma$ and
the latter two whenever $\Gamma  \vdash_{t}  a  \ottsym{:}  A$ and $\gamma \in
\ptsCtxI{\Gamma}$.  In general $\gamma$ will be an $n$-tuple of pairs
associating variables from the context with carriers.  We write
$\gamma(x)$ for the carrier associated with $x$, when it exists.

\begin{align}
  \ptsCtxI{ \cdot } & := \emptyset \\
  \ptsCtxI{ \Gamma , \ottmv{x} \!:\! A } & := 
      \{ (\gamma,(x,\alpha))\;|\; 
                \gamma \in \ptsCtxI{\Gamma} 
         \wedge \alpha \lin \ptsTypI{\Gamma}{A} \}\\
  \notag \\
  \ptsTypI{\Gamma}{\ottnt{s}} &:= \EAdn{s}
     \label{eqs:ptsTypSort}\\
  \ptsTypI{\Gamma}{\ottsym{(}  \ottmv{x}  \ottsym{:}  A  \ottsym{)}  \to  B} &:= 
    \sarrdn{\ptsTypI{\Gamma}{A}}
           {\ptsTypI[\gamma,\_]{ \Gamma , \ottmv{x} \!:\! A }{B}}
     \label{eqs:ptsTypPi}\\
  \ptsTypI{\Gamma}{\ottnt{A}} &:= \lift{s}(\ptsTrmI{\Gamma}{A})
     \label{eqs:ptsTypDefault}\\
  \notag\\
  \ptsTrmI{\Gamma}{x} &:= \gamma(x)
     \label{eqs:ptsTrmVar}\\
  \ptsTrmI{\Gamma}{ \lambda _{( \ottmv{x} : A ) \to B }. b } &:= 
    \isodn{\ptsTypI{\Gamma}{A}}
          {\ptsTypI[\gamma,\_]{ \Gamma , \ottmv{x} \!:\! A }{B}} 
       (\ptsTrmI[\gamma,\_]{ \Gamma , \ottmv{x} \!:\! A }{b})
     \label{eqs:ptsTrmLam}\\
  \ptsTrmI{\Gamma}{ \mathsf{app} _{( \ottmv{x} : A ) \to B } ( a , b ) } &:=
    (\isoup{\ptsTypI{\Gamma}{A}}
           {\ptsTypI[\gamma,\_]{ \Gamma , \ottmv{x} \!:\! A }{B}}
       (\ptsTrmI{\Gamma}{a}))\;(\ptsTrmI{\Gamma}{b})
    \label{eqs:ptsTrmApp}\\
  \ptsTrmI{\Gamma}{a} &:=
    \drop{s}(\ptsTypI{\Gamma}{a})
    \label{eqs:ptsTrmDefault}
\end{align}

There are a few notational infelicities to explain.  First,
equations~\ref{eqs:ptsTypDefault} and~\ref{eqs:ptsTrmDefault} are
meant to apply only when the previous clauses in the respective
definitions do not.  The remark after the definition of condition 4 in
Section~\ref{secPTS:conditions} justifies the use of an arbitrary $s$
in these cases.  By $\isoup{\ptsTypI{\Gamma}{A}}
{\ptsTypI[\gamma,\_]{ \Gamma , \ottmv{x} \!:\! A }{B}}$ in
equation~\ref{eqs:ptsTrmApp} we mean the inverse of
$\isodn{\ptsTypI{\Gamma}{A}}
{\ptsTypI[\gamma,\_]{ \Gamma , \ottmv{x} \!:\! A }{B}}$.  Finally, throughout the
definition we wrote $\ptsTypI[\gamma,\_]{ \Gamma , \ottmv{x} \!:\! A }{B}$ and
$\ptsTrmI[\gamma,\_]{ \Gamma , \ottmv{x} \!:\! A }{b}$ for the functions
$$
  \alpha \lin \ptsTypI{\Gamma}{A} \mapsto
              \ptsTypI[\gamma,(x,\alpha)]{ \Gamma , \ottmv{x} \!:\! A }{B}
$$
and
$$
  \alpha \lin \ptsTypI{\Gamma}{A} \mapsto
              \ptsTrmI[\gamma,(x,\alpha)]{ \Gamma , \ottmv{x} \!:\! A }{b}
$$
respectively.  The first can be viewed as an indexed family of
$\Lambda$-sets suitable for use as the second argument to a
$\Lambda$-set product construction.  The second is a carrier of
products formed with the first.

This definition is more intimidating than the interpretation from
the previous proof, but none of it is surprising.  More, it has the
advantage that the connection between the interpretations of types and
terms is explicit as part of their definition.  We describe each
clause to help with the symbolic burdern:
\begin{itemize}
\item The definition of $\ptsCtxI{\Gamma}$ says that when $\gamma \in
  \ptsCtxI{\Gamma}$, each $x$ in $\Gamma$ should be paired with a
  carrier of the interpretation of its given type.  This is
  essentially the definition of the $\sigma \models \Gamma$ judgement
  from the last section.

\item The definition of $\ptsTypI{\Gamma}{\ottnt{A}}$ associates with each
  well-typed expression a $\Lambda$-set in the model (assuming the
  four conditions are met).  Clause~\ref{eqs:ptsTypSort} says that
  when $A$ is a sort $s$, we pick the $\Lambda$-set explicitly
  specified by the model ($\EAdn{s}$).

  In clause~\ref{eqs:ptsTypPi}, we consider a function type $\ottsym{(}  \ottmv{x}  \ottsym{:}  A  \ottsym{)}  \to  B$.  The typing rule \textsc{TTPi} says this will be a
  valid function type in sort $s_3$ when $A$ is in sort $s_1$,
  $B$ is in sort $s_2$ and $ ( \ottnt{s_{{\mathrm{1}}}} , \ottnt{s_{{\mathrm{2}}}} , \ottnt{s_{{\mathrm{3}}}} ) \in \mathcal{R} $.  The
  $\Lambda$-set $\sarr{\ptsTypI{\Gamma}{A}}
  {\ptsTypI[\gamma,\_]{ \Gamma , \ottmv{x} \!:\! A }{B}}$ models the functions
  from the interpretation of $A$ to the interpretation of
  $B$, but we do not know if it exists anywhere in our model.
  Luckily, condition 1.3 guarantees that an isomorphic $\Lambda$-set
  lives in $\EAup{s_3}$.  We pick that one, since this is just where
  the interpretation of the function type belongs.
  
  The catch-all clause~\ref{eqs:ptsTypDefault} handles expressions
  that look more like terms than types.  In that case, we use the term
  interpretation and then lift the resulting carrier to a
  $\Lambda$-set with the provided $\lift{}$ function.  For example, if
  $A$ is the application of a type function $\ottnt{a} \, \ottnt{b}$, the term
  interpretation will interpret $\ottnt{a}$ and apply the result to the
  term interpretation of $\ottnt{b}$.  There is a similar clause in the
  term interpretation that will call back to the type interpretation
  when it reaches subcomponents that looked more like types.

\item The definition of $\ptsTrmI{\Gamma}{\ottnt{a}}$ associates the
  carrier of a $\Lambda$-set in the model with each well-typed
  expression.  Here, clause~\ref{eqs:ptsTrmVar} handles variables
  using the provided environment.  The final default
  clause~\ref{eqs:ptsTrmDefault} mirrors
  clause~\ref{eqs:ptsTypDefault} from the previous definition.

  Functions $ \lambda _{( \ottmv{x} : A ) \to B }. b $ are interpreted by
  clause~\ref{eqs:ptsTrmLam}.  Here,
  $\ptsTrmI[\gamma,\_]{ \Gamma , \ottmv{x} \!:\! A }{b}$ is a function from
  carriers in the type interpretation of $A$ to carriers in the
  type interpretation of $B$.  Such functions are themselves
  carriers of $\sarr{\ptsTypI{\Gamma}{A}}
  {\ptsTypI[\gamma,\_]{ \Gamma , \ottmv{x} \!:\! A }{B}}$.  As we saw in the
  discussion of clause~\ref{eqs:ptsTypPi}, this $\Lambda$-set may not
  be in our model.  So we use the function
  $\isodn{\ptsTypI{\Gamma}{A}}
  {\ptsTypI[\gamma,\_]{ \Gamma , \ottmv{x} \!:\! A }{B}}$, which condition 1.3
  guarantees will return a corresponding carrier in the model.
  
  The case for applications is similar.  The soundness theorem will
  show that, if the application $ \mathsf{app} _{( \ottmv{x} : A ) \to B } ( a , b ) $ is
  well-typed, the interpretation of $a$ will be a carrier of
  $\sarrdn{\ptsTypI{\Gamma}{\ottnt{A}}}{\ptsTypI[\gamma,\_]{ \Gamma , \ottmv{x} \!:\! A }{B}}$.
  We use the provided $\Lambda$-iso to convert this to a carrier of
  $\sarr{\ptsTypI{\Gamma}{\ottnt{A}}}{\ptsTypI[\gamma,\_]{ \Gamma , \ottmv{x} \!:\! A }{B}}$
  so that applying it to the interpretation of $b$ will yield a
  carrier in the type interpretation of $B$.
\end{itemize}

\subsection{Strong Normalization}
\label{secPTS:sn}

Strong normalization follows from two key theorems.  The first says
that when an expression is well-typed, its term interpretation is a
carrier of the interpretation of its type.  The second says that each
expression realizes its term interpretation.  Since condition 1.1
guarantees that all $\Lambda$-sets in the model are saturated, the
realizers are all strongly normalizing.

\begin{theorem}[Soundness]
  If $\vdash_{t}  \Gamma$, then $\ptsCtxI{\Gamma}$ is well-defined.  If $\Gamma  \vdash_{t}  a  \ottsym{:}  A$ and $\gamma \in \ptsCtxI{\Gamma}$ then:
  \begin{itemize}
  \item $\ptsTrmI{\Gamma}{a}$ and $\ptsTypI{\Gamma}{A}$ are
    well-defined, and $\ptsTrmI{\Gamma}{a} \lin
    \ptsTypI{\Gamma}{A}$.
  \item If $A = s$, then $\ptsTypI{\Gamma}{a}$ is well-defined
    and an element of $\EAup{s}$.
  \end{itemize}
\end{theorem}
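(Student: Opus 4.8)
The plan is to prove all three assertions simultaneously by a single mutual induction on the derivations of $\vdash_{t}  \Gamma$ and $\Gamma  \vdash_{t}  a  \ottsym{:}  A$, with the typing statement universally quantified over every $\gamma \in \ptsCtxI{\Gamma}$. The well-definedness claims and the membership claims must ride together in the induction hypothesis, because the type and term interpretations are defined by mutual recursion and the applicability of each clause is driven by the syntactic shape of the expression (the default clauses of the term and type interpretations fire exactly when none of the earlier, shape-directed clauses do). The context part is folded into the same induction: \textsc{TCNil} gives $\ptsCtxI{ \cdot }$ immediately, and \textsc{TCCons} uses the hypothesis that $\ptsTypI{\Gamma}{A}$ is well-defined for each $\gamma \in \ptsCtxI{\Gamma}$ to see that $\ptsCtxI{ \Gamma , \ottmv{x} \!:\! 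A }$ is well-defined. Throughout, the remark following Condition~4 lets me ignore the sort subscripts on $\lift{s}$ and $\drop{s}$.

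Before the main induction I would isolate four auxiliary lemmas, each mirroring a result from the set-of-expressions proof of Section~\ref{sec:flexible}. First, a \emph{weakening} lemma: the interpretations of an expression depend only on the part of $\gamma$ naming its free variables, so interpreting a type in a context prefix agrees with interpreting it in the full context; this reconciles the use of $\ptsTypI{\Gamma}{A}$ in the context clause for $ \Gamma , \ottmv{x} \!:\! A $ with the lookup $\gamma(x)$ in the \textsc{TTVar} case. Second, a \emph{substitution} lemma, the analogue of Lemma~\ref{lem:interpsubst}: $\ptsTypI{\Gamma}{\ottsym{[}  b  \ottsym{/}  \ottmv{x}  \ottsym{]}  B} = \ptsTypI[\gamma,(x,\ptsTrmI{\Gamma}{b})]{ \Gamma , \ottmv{x} \!:\! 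A }{B}$, and likewise for the term interpretation. Third, a \emph{conversion-invariance} lemma, the analogue of Lemma~\ref{lem:interpbeta}: if $A  \leadsto^{*}_{t}  B$ (or conversely) then $\ptsTypI{\Gamma}{A} = \ptsTypI{\Gamma}{B}$. Fourth, a \emph{coherence} lemma: writing $Y_{\alpha} := \ptsTypI[\gamma,(x,\alpha)]{ \Gamma , \ottmv{x} \!:\! A }{B}$, the family $\alpha \mapsto Y_{\alpha}$ respects the equivalence relations, i.e. $\eqsmall{i}{\EAup{s_1}}{\alpha}{\alpha'}$ implies $\eqbig{i}{\EAup{s_2}}{Y_{\alpha}}{Y_{\alpha'}}$, and the function $\alpha \mapsto \ptsTrmI[\gamma,(x,\alpha)]{ \Gamma , \ottmv{x} \!:\! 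A }{b}$ carries $\equiv$-related carriers to $\equiv$-related carriers.

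With these in hand the cases proceed as follows. For \textsc{TTSort}, $\ptsTypI{\Gamma}{s_1} = \EAdn{s_1} \in \EAup{s_2}$ by Condition~1.2 (second bullet), and the default term clause gives $\ptsTrmI{\Gamma}{s_1} = \drop{s_2}(\EAdn{s_1}) \lin \EAdn{s_2} = \ptsTypI{\Gamma}{s_2}$. For \textsc{TTVar}, the definition of $\ptsCtxI{\Gamma}$ together with weakening gives $\gamma(x) \lin \ptsTypI{\Gamma}{A}$ directly. For \textsc{TTPi}, the hypothesis (second bullet) places $X := \ptsTypI{\Gamma}{A}$ in $\EAup{s_1}$ and each $Y_{\alpha}$ in $\EAup{s_2}$; coherence verifies the hypothesis of Condition~1.3, which supplies $\sarrdn{X}{Y} \in \EAup{s_3}$ and $\isodn{X}{Y}$, and the default term clause handles $\ptsTrmI{\Gamma}{\ottsym{(}  \ottmv{x}  \ottsym{:}  A  \ottsym{)}  \to  B}$. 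For \textsc{TTLam}, the hypothesis on $b$ shows $f := \alpha \mapsto \ptsTrmI[\gamma,(x,\alpha)]{ \Gamma , \ottmv{x} \!:\! A }{b}$ sends each $\alpha \lin X$ into $Y_{\alpha}$; coherence gives $f \lin \sarr{X}{Y}$, so $\isodn{X}{Y}(f) \lin \sarrdn{X}{Y} = \ptsTypI{\Gamma}{\ottsym{(}  \ottmv{x}  \ottsym{:}  A  \ottsym{)}  \to  B}$. For \textsc{TTApp}, transporting $\ptsTrmI{\Gamma}{a}$ across $\isoup{X}{Y}$ and applying the result to $\ptsTrmI{\Gamma}{b} \lin X$ lands in $Y_{\ptsTrmI{\Gamma}{b}}$, which the substitution lemma identifies with $\ptsTypI{\Gamma}{\ottsym{[}  b  \ottsym{/}  \ottmv{x}  \ottsym{]}  B}$. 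For \textsc{TTConv}, conversion-invariance gives $\ptsTypI{\Gamma}{A} = \ptsTypI{\Gamma}{B}$, so membership transfers while $\ptsTrmI{\Gamma}{a}$ is unchanged. The second bullet, when it applies, follows from the explicit type-interpretation clause for sorts and $\Pi$-types and otherwise by feeding the established carrier through $\lift{s}$ via the default type clause; in the \textsc{TTConv} subcase it uses that sorts are $ \leadsto^{*}_{t} $-normal.

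I expect the coherence lemma to be the main obstacle. It is exactly where the equivalence relations $\eqbig{i}{\EA}{}{}$ and $\eqsmall{i}{\EA}{}{}$ must be threaded through the entire mutual recursion, and it is the hypothesis of Condition~1.3 that makes the interpretation of $\Pi$-types typecheck at all. This is also where the paper's difficulties concentrate: the flawed claim of Section~\ref{secPTS:problem} is precisely a failure to reason carefully about which set-theoretic functions respect the equivalence relation, and that same regularity is what coherence must guarantee globally. Establishing it honestly would require strengthening the induction hypothesis to carry the equivalence data for \emph{both} interpretations at once, and---given the counterexample of Section~\ref{secPTS:problem}---possibly repairing the underlying definitions so that the product construction genuinely preserves saturation. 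I regard a fully rigorous treatment of this point as the crux, with the remaining bookkeeping (weakening, substitution, conversion) amounting to routine adaptations of the lemmas of Section~\ref{sec:flexible}.
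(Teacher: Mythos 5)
The first thing to say is that the paper offers no proof of this theorem to compare against: Section~\ref{secPTS:sn} explicitly declines to give one, notes that the arguments in the source paper ``seem inadequate,'' and concludes that ``a much more careful proof of the soundness theorem is needed.'' Your proposal is therefore a reconstruction of what such a proof would have to look like, and as a map of the territory it is accurate: mutual induction over $\vdash_{t}  \Gamma$ and $\Gamma  \vdash_{t}  a  \ottsym{:}  A$ with well-definedness and membership carried together, the case analysis by typing rule is the right one, and you correctly locate the crux in the coherence property that discharges the hypothesis of Condition~1.3.

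As a proof, however, there are two genuine gaps. First, the four auxiliary lemmas cannot be isolated ``before the main induction'' as you propose. The substitution lemma's statement already presupposes that $\ptsTrmI{\Gamma}{b}$ is well-defined and is a carrier of $\ptsTypI{\Gamma}{A}$ --- which is the conclusion of the theorem being proved --- and the conversion-invariance lemma additionally needs subject reduction for $ \leadsto_{t} $ and well-definedness along the reduction path. This is exactly the circularity the paper diagnoses in the displayed beta-reduction lemma at the end of Section~\ref{secPTS:sn}: that lemma is invoked where only $\Gamma  \vdash_{t}  b  \ottsym{:}  A$ is known, and ``the soundness theorem itself would be needed to show this is enough.'' A workable argument must state each auxiliary fact with explicit definedness and carrier-membership hypotheses and prove it inside, or simultaneously with, the main induction; your plan reproduces the ordering that the paper identifies as broken. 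Second, you leave the coherence lemma unproven and concede it may require repairing the definitions in light of the counterexample of Section~\ref{secPTS:problem}; since coherence is what makes the \textsc{TTPi}, \textsc{TTLam}, and \textsc{TTApp} cases go through at all, the proposal is an honest and well-aimed plan rather than a completed proof --- which is, to be fair, precisely the state in which the surveyed paper leaves the matter.
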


\begin{theorem}[Self-realization]\label{thmPTS:selfrealize}
  Suppose $\Gamma  \vdash_{t}  a  \ottsym{:}  A$ and $\Gamma = x_1:B_1,\ldots,x_n:B_n$.
  Let $((x_1,\beta_1),\ldots,(x_n,\beta_n)) \in \ptsCtxI{\Gamma}$ and
  expressions $b_1,\ldots,b_n$ be given such that for each $1 \le i
  \le n$:
  $$
  b_i \models_{\ptsTypI[(x_1,\beta_1),\ldots,(x_{i-1},\beta_{i-1})]
              {x_1:B_1,\ldots,x_{i-1}:B_{i-1}}{B_i}} \beta_i
  $$
  Then:
  $$
  [b_1/x_1]\ldots[b_n/x_n] a 
     \models_{\ptsTypI{\Gamma}{A}} \ptsTrmI{\Gamma}{a}
  $$
\end{theorem}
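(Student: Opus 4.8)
The plan is to induct on the derivation $\D$ of $\Gamma \vdash_{t} a : A$. Write $\theta = [b_1/x_1]\cdots[b_n/x_n]$ for the substitution and $\gamma = ((x_1,\beta_1),\ldots,(x_n,\beta_n))$ for the environment, so the goal in each case is $\theta a \models_{\ptsTypI{\Gamma}{A}} \ptsTrmI{\Gamma}{a}$. Since the Soundness theorem is already available, I may assume that every interpretation appearing below is well-defined, that $\ptsTrmI{\Gamma}{a} \lin \ptsTypI{\Gamma}{A}$, and --- crucially --- that each type interpretation is a saturated $\Lambda$-set, so that condition 1.1 and closure under key-redex expansion are at hand everywhere. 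Before starting I would record three supporting lemmas, each a routine transfer of results from Section~\ref{secSAT:types} to this interpretation: (i) environment invariance, that $\ptsTypI{\Gamma}{C}$ and $\ptsTrmI{\Gamma}{C}$ depend only on the entries of $\gamma$ for variables free in $C$; (ii) a substitution lemma analogous to Lemma~\ref{lem:interpsubst}, namely $\ptsTypI{\Gamma}{\ottsym{[}  \ottnt{b}  \ottsym{/}  \ottmv{x}  \ottsym{]}  C} = \ptsTypI[\gamma,(x,\ptsTrmI{\Gamma}{b})]{\Gamma,x:A}{C}$ and likewise for the term interpretation; and (iii) conversion invariance analogous to Lemma~\ref{lem:interpbeta}, that $A \leadsto^{*}_{t} B$ forces $\ptsTypI{\Gamma}{A} = \ptsTypI{\Gamma}{B}$. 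I would also use the labeled analogues of the key-redex Lemmas~\ref{lem:keyredex} and~\ref{lem:keyredexSN}.

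The cases \textsc{TTSort}, \textsc{TTPi} and \textsc{TTVar} are comparatively routine. In the sort and $\Pi$ cases the conclusion type is a sort $s$, so $\ptsTypI{\Gamma}{s} = \EAdn{s}$, and by condition 1.1 every carrier of $\EAdn{s}$ is realized by every strongly normalizing expression; hence it suffices to show $\theta a \in \SN$. For a sort this is immediate; for $(x:A)\to B$ I apply the induction hypothesis to the premises $\Gamma \vdash_{t} A : s_1$ and $\Gamma,x:A \vdash_{t} B : s_2$ to learn that $\theta A$ and the analogous substitution of $B$ realize carriers of saturated $\Lambda$-sets, hence lie in $\SN$, and then close under the $\Pi$-congruence. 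For \textsc{TTVar} with $a = x_j$ and $A = B_j$, the term interpretation is $\gamma(x_j)=\beta_j$ and $\theta x_j = b_j$, so the goal is exactly the hypothesis $b_j \models \beta_j$ once the type interpretation is rewritten by lemma (i), since $B_j$ mentions only $x_1,\ldots,x_{j-1}$. The \textsc{TTConv} case follows from the induction hypothesis and lemma (iii), which identifies $\ptsTypI{\Gamma}{A}$ with $\ptsTypI{\Gamma}{B}$ whenever $A\leadsto^{*}_{t} B$ or $B\leadsto^{*}_{t} A$.

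The heart of the argument is \textsc{TTLam}, where $a = \lambda_{(x:A)\to B}.b$ with conclusion type $(x:A)\to B$. Writing $X = \ptsTypI{\Gamma}{A}$ and $Y_\alpha = \ptsTypI[\gamma,(x,\alpha)]{\Gamma,x:A}{B}$, the type interpretation is $\sarrdn{X}{Y}$ and the term interpretation is $\isodn{X}{Y}(f)$ with $f = \ptsTrmI[\gamma,\_]{\Gamma,x:A}{b}$. Because $\isodn{X}{Y}$ is a $\Lambda$-iso from $\sarr{X}{Y}$ to $\sarrdn{X}{Y}$ it preserves the realizing relation, so it is enough to show $\theta a \models_{\sarr{X}{Y}} f$. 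Unfolding the product-realizer definition, I must show that for every $\alpha \lin X$, every $c \models_X \alpha$, and all $A',B' \in \SN$,
$$
\mathsf{app}_{(x:A')\to B'}(\theta a, c) \models_{Y_\alpha} f(\alpha).
$$
The induction hypothesis applied to the premise $\Gamma,x:A \vdash_{t} b : B$, with environment extended by $(x,\alpha)$ and realizers extended by $c$, yields $\ottsym{[}  c  \ottsym{/}  \ottmv{x}  \ottsym{]}\theta b \models_{Y_\alpha} f(\alpha)$, a realizer that is therefore in $\SN$. When the annotations match, $\mathsf{app}_{(x:A')\to B'}(\theta a, c)$ contracts its key redex to $\ottsym{[}  c  \ottsym{/}  \ottmv{x}  \ottsym{]}\theta b$ by \textsc{TSBeta}; the labeled analogue of Lemma~\ref{lem:keyredexSN} shows the application itself is in $\SN$, and closure of the saturated $\Lambda$-set $Y_\alpha$ under key-redex expansion delivers the required membership.

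The obstacle I expect to dominate is exactly the mismatch this goal exposes: the product-realizer definition quantifies over \emph{arbitrary} strongly normalizing annotations $A',B'$, whereas tight reduction \textsc{TSBeta} fires only when the annotation on the application coincides with the one on the abstraction. When $A',B'$ differ from those carried by $\theta a$, the term $\mathsf{app}_{(x:A')\to B'}(\theta a, c)$ is neither a base expression nor a key redex, so its membership among the realizers of $f(\alpha)$ does not follow from saturation alone. Making this subcase go through --- or discovering what extra hypothesis on $\Lambda$-set products or annotations is needed to repair it --- is precisely the kind of bookkeeping that Section~\ref{secPTS:problem} has already shown the paper to treat inadequately, and I would single it out as the step to scrutinize. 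The remaining case, \textsc{TTApp} for $ \mathsf{app} _{( \ottmv{x} : A ) \to B } ( a , b ) $ of type $\ottsym{[}  \ottnt{b}  \ottsym{/}  \ottmv{x}  \ottsym{]}  B$, is dual and less delicate: I would unfold the term interpretation through $\isoup{X}{Y}$, apply the induction hypothesis to both premises to get $\theta a \models_{\sarrdn{X}{Y}} \ptsTrmI{\Gamma}{a}$ and $\theta b \models_X \ptsTrmI{\Gamma}{b}$, transport the first across the $\Lambda$-iso, and instantiate the product-realizer definition at $\alpha = \ptsTrmI{\Gamma}{b}$ with the annotations $\theta A,\theta B$ now fixed by the expression, finishing with lemma (ii) to rewrite $\ptsTypI{\Gamma}{\ottsym{[}  \ottnt{b}  \ottsym{/}  \ottmv{x}  \ottsym{]}  B}$ as $Y_{\ptsTrmI{\Gamma}{b}}$.
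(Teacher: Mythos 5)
There is no proof in the paper to compare against: Section~\ref{secPTS:sn} states Theorem~\ref{thmPTS:selfrealize} and then explicitly declines to prove it, remarking that ``the proofs given seem inadequate'' and that ``a much more careful proof of the soundness theorem is needed.'' Your sketch is therefore a reconstruction rather than a rederivation, and as a reconstruction it is structured the way one would expect: induction on the typing derivation, condition 1.1 to collapse the sort and $\Pi$ cases to strong normalization of the substituted expression, transport across the $\Lambda$-iso $\isodn{X}{Y}$ in the abstraction case, and instantiation of the product-realizer definition at $\alpha = \ptsTrmI{\Gamma}{b}$ in the application case.

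The decisive point is the one you flag yourself in the \textsc{TTLam} case, and you are right that it is a genuine gap rather than bookkeeping. The realizing relation for $\sarr{X}{Y}$ quantifies over \emph{all} strongly normalizing annotations $A',B'$, but \textsc{TSBeta} fires only when the annotation on the application coincides syntactically with the one on the abstraction; when they differ, $\mathsf{app}_{(x:A')\to B'}(\theta a, c)$ has no key redex and is not a base expression, so none of the three saturation conditions yields that it realizes $f(\alpha)$. Within the definitions as the paper states them this subcase cannot be closed, so your proposal does not constitute a proof --- but neither does anything in the paper, and your diagnosis is consistent with (indeed sharper than) the paper's own reservations in Sections~\ref{secPTS:problem} and~\ref{secPTS:sn}. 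One smaller wrinkle worth recording: in the \textsc{TTVar} case, $\theta x_j$ is $[b_1/x_1]\cdots[b_{j-1}/x_{j-1}]b_j$ rather than $b_j$ itself unless the $b_i$ avoid the earlier variables, so the hypothesis on $b_j$ must be invoked with slightly more care; this is harmless for the intended application to strong normalization, where each $b_i$ is the variable $x_i$, but it should not be elided in a complete argument.
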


Strong normalization then follows just as it did in the previous
proof.  In Theorem~\ref{thmPTS:selfrealize}, the type interpretation
of each $B_i$ has a carrier which is realized by any strongly
normalizing base expression, and thus any variable.  Pick that
realizer for $\beta_i$ and pick $x_i$ for $b_i$.  Then the
substitutions in the conclusion of the theorem do nothing and we find
that $a$ realizes its own term interpretation.  Since term
interpretations are carriers in saturated $\Lambda$-sets, $a$ is
strongly normalizing.

We do not go into the details of the proofs of these theorems.
Indeed, the proofs given seem inadequate.  For example, consider the
first lemma the paper gives after the interpretation:
\vspace{-0.8em}
\begin{quote}
  $$
  \ptsTrmI{\Gamma}{ \mathsf{app} _{( \ottmv{x} : A ) \to B } ( \ottsym{(}   \lambda _{( \ottmv{x} : A ) \to B }. a   \ottsym{)} , b ) } =
  \ptsTrmI[\gamma,(x,b)]{\Gamma}{a}
  $$
\end{quote}
The authors do not identify the hypotheses of this lemma.  However,
observe that the lambda term is interpreted as a function whose domain
is the carriers of $\ptsTypI{\Gamma}{A}$.  Thus, if $\ptsTrmI{\Gamma}{b}$ is not
such a carrier, the left-hand side will not be defined while the
right-hand side may be (for example, if $x$ doesn't occur in $a$).
The lemma is subsequently used in situations where we only know $\Gamma  \vdash_{t}  b  \ottsym{:}  A$, and the soundness theorem itself would be needed to
show this is enough.  A much more careful proof of the soundness
theorem is needed.

All that remains is to relate strong normalization for a labeled PTS
to strong normalization for an ordinary PTS.  To do this, define an
operation $|a|$ on labeled expressions which simply erases the extra
annotations to obtain an ordinary PTS expression.  Extend this
operation to contexts $|\Gamma|$ by applying it to each type.  We would
like to know that if $\Gamma  \vdash  \ottnt{a}  \ottsym{:}  \ottnt{b}$ in an ordinary PTS, then there
is a labeled context $\Gamma'$ and labeled expressions $a',b'$ such
that $|\Gamma'| = \Gamma$, $|a'| = a$, $|b'| = b$ and $\Gamma'  \vdash_{t}  a'  \ottsym{:}  b'$.
Since $a'$ is well-typed in the labeled PTS, all the type annotations
on beta redexes must match, and thus that its normalization behavior
corresponds to that of $a$.

The proof of this theorem is mostly straightforward by induction.  The
only problem comes in the case where the derivation used the
conversion rule.  Here we must show a relationship between labeled,
tight conversion and ordinary PTS conversion.  The paper shows that
the two agree in the case of well-typed expressions:
\begin{lemma}
  Suppose $\Gamma  \vdash_{t}  a  \ottsym{:}  A$ and $\Gamma  \vdash_{t}  b  \ottsym{:}  B$.  If $|a|
   =_{\beta}  |b|$ then $a  =_{\beta}^{t}  b$.
\end{lemma}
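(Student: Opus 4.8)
The plan is to factor the stated (completeness) direction through confluence of ordinary reduction together with two lemmas about the labeled system. The underlying building block is that erasure preserves reduction: if $a \leadsto_{t} b$ then $|a| \leadsto^{*} |b|$, in fact by at most one ordinary step — a single step when the contracted redex survives erasure, and zero steps when it sits inside a discarded annotation (the codomain of an abstraction, or the annotation of an application). This is proved by induction on the derivation of $a \leadsto_{t} b$, using that erasure commutes with substitution, $|[a/x]b| = [|a|/x]|b|$, in the \textsc{TSBeta} case. It follows that $a =_{\beta}^{t} b$ implies $|a| =_{\beta} |b|$, the sound converse of what we must show.

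For the stated direction, first apply confluence of ordinary reduction (Theorem~\ref{thm:confluence}) to $|a| =_{\beta} |b|$ to obtain a common reduct $c$ with $|a| \leadsto^{*} c$ and $|b| \leadsto^{*} c$. The first real lemma is a \emph{lifting} lemma: if $\Gamma \vdash_{t} a : A$ and $|a| \leadsto^{*} c$, then there is a labeled $a'$ with $a \leadsto^{*}_{t} a'$, $|a'| = c$, and $\Gamma \vdash_{t} a' : A'$ for some $A'$. It suffices to lift a single ordinary step and iterate, appealing to subject reduction for $\leadsto_{t}$ to keep $a'$ well-typed. The contracted ordinary redex is the erasure of a labeled subterm $\mathsf{app}_{(x:A_1)\to B_1}(\lambda_{(x:A_2)\to B_2}.d,\, e)$. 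To fire \textsc{TSBeta} I need the two annotations to agree syntactically, and this is exactly where typing helps: a generation lemma for the labeled system shows the abstraction is checked at type $(x:A_2)\to B_2$ yet used at $(x:A_1)\to B_1$, and the \emph{directed} conversion rule \textsc{TTConv} — which demands that one type reduce to the other — forces $(x:A_1)\to B_1 \leadsto^{*}_{t} (x:A_2)\to B_2$ or the reverse. After bringing the annotations into agreement by these tight reduction steps and then contracting via \textsc{TSBeta}, the net effect on the erasure is precisely the single ordinary step, so $|a'| = c$. Applying the lemma to both $a$ and $b$ yields $a',b'$ with $a \leadsto^{*}_{t} a'$, $b \leadsto^{*}_{t} b'$, and $|a'| = |b'| = c$.

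It remains to prove the \emph{annotation coherence} lemma: any two well-typed labeled expressions with identical erasure are tight-convertible. Since $=_{\beta}^{t}$ is the equivalence closure of $\leadsto_{t}$ and $\leadsto_{t}$ contains all the congruence rules (including reduction under annotations), $=_{\beta}^{t}$ is a congruence, so I argue by induction on the common erasure $c$, recursing into subterms that erase equally and are well-typed by generation. The benign cases (sorts, variables, $\Pi$-types, and the \emph{retained} domain annotation of an abstraction) follow directly from the induction hypothesis and congruence. The difficulty concentrates in the two fully erased positions: the codomain annotation $B$ of an abstraction $\lambda_{(x:A)\to B}.b$ and the entire annotation $(x:A)\to B$ of an application, on which erasure equality imposes no constraint at all. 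Here convertibility must be recovered from typing: by generation these annotations are the type of (respectively, the $\Pi$-type of the function component of) subterms that the induction hypothesis already shows tight-convertible, so I need that \emph{tight-convertible well-typed terms have tight-convertible types}, together with injectivity of $\Pi$ for $=_{\beta}^{t}$ to split a resulting $(x:A_1)\to B_1 =_{\beta}^{t} (x:A_2)\to B_2$ into convertibility of the components.

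The main obstacle is exactly this coherence step, because it forces a modest but genuine development of the labeled system's metatheory that the excerpt leaves implicit: generation/inversion lemmas, subject reduction for $\leadsto_{t}$, confluence of $\leadsto_{t}$ (needed both for $\Pi$-injectivity and for uniqueness of types modulo $=_{\beta}^{t}$), and the type-uniqueness statement itself. Confluence of tight reduction is the delicate ingredient, since the annotation-matching side condition on \textsc{TSBeta} means the usual Tait--Martin-L\"of argument must be re-examined to confirm it survives the restriction. Once these are in hand the three pieces assemble routinely: from $a \leadsto^{*}_{t} a'$, $b \leadsto^{*}_{t} b'$ and $a' =_{\beta}^{t} b'$ we conclude $a =_{\beta}^{t} b$ by transitivity and symmetry of $=_{\beta}^{t}$.
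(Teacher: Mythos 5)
The paper does not actually prove this lemma: it records the one-sentence intuition that ``in well-typed expressions, labeled or not, beta reductions only occur when the function's domain type agrees with the type of its argument'' and explicitly defers the detailed proof to \citet{DBLP:conf/types/MelliesW96}. So there is no in-paper argument to compare yours against; what you have written is a proof plan that goes well beyond the survey's treatment. Your skeleton is the right one and matches that intuition: erase, join by Theorem~\ref{thm:confluence}, lift the two ordinary reduction sequences back to $\leadsto^{*}_{t}$ (using generation plus the directed \textsc{TTConv} to align the two copies of the annotation before firing \textsc{TSBeta}), and then close the remaining gap --- two well-typed labeled terms with the same erasure --- by a coherence argument on the fully erased annotation positions.

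The genuine risk is exactly where you park it, and it is worth being blunter about why it is a risk rather than a routine obligation. \textsc{TSBeta} is non-left-linear (the annotation $(x:A)\to B$ must occur identically on the application and the abstraction), and non-left-linear rewrite rules are the classic source of confluence failure on raw terms; so ``confluence of $\leadsto_{t}$'' cannot simply be asserted to survive the Tait--Martin-L\"of argument --- on pseudo-terms it may well be false, and one is forced to prove it only for well-typed terms, which entangles it with subject reduction and with the very conversion properties (uniqueness of types, $\Pi$-injectivity, context conversion for the $A$ vs.\ $A'$ mismatch in your coherence induction) that you are trying to establish. This circularity is the actual content of the lemma, and the directed form of \textsc{TTConv} exists precisely to give the induction enough traction to break it. Two smaller points: your coherence lemma needs uniqueness of types up to $=_{\beta}^{t}$, which holds for functional PTSs such as CC but not for arbitrary ones, so the scope of the lemma should be restricted accordingly; and in the lifting step, collapsing a chain of \textsc{TTConv} uses into ``one annotation tight-reduces to a common reduct of the other'' already consumes the confluence you have not yet proved. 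The decomposition is sound, but as written the proposal defers the hard part rather than discharging it.
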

This is unsurprising, since in well-typed expressions, labeled or
not, beta reductions only occur when the function's domain type agrees
with the type of its argument.  The authors give a detailed proof.


\section{Discussion and Conclusion}
\label{sec:conclusion}

The original goal of this project was to survey several very different
strong normalization proofs for the calculus of constructions.  To
that end, we picked three attempts that target different structures
(sets of expressions, realizability semantics, and F$_{\omega}$).
Each paper's proof was targeted toward different additional goals.
The first considered extensions with various datatypes and recursion.
The second tried to model a large class of pure type systems.  The
last gave a relatively straightforward translation to simpler system,
demonstrating that the proof-theoretic complexity of CC's strong
normalization argument is no greater than that of F$_{\omega}$.

Despite this, the three proofs are remarkably similar.  Each gives a
type interpretation $ \fomTypI[ ]{ \ottnt{A} } $ followed by a term interpretation
$ \fomTrmI[ ]{ \ottnt{a} } $.  Then, when $\Gamma  \vdash  \ottnt{a}  \ottsym{:}  \ottnt{A}$, a simple relationship
between the two translations is demonstrated (for example, $ \fomTrmI[ ]{ \ottnt{a} } 
\in  \fomTypI[ ]{ \ottnt{A} } $).  Finally, this relationship is shown to imply that
the expression $a$ itself is strongly normalizing.  Though the models
targeted by these functions are different in each proof, they share a
considerable amount of structure.  For example, saturated sets of
expressions are very useful in both the first and second proofs.  They
are often used in proofs of strong normalization for F$_{\omega}$ as
well.

There are many commonalities even in the specifics of the
interpretations.  Compare the kind interpretation $V$ from
Section~\ref{secSAT:kinds} with $V$ in Section~\ref{secFOM:types}.
Though one targets collections saturated sets and the other
F$_{\omega}$, both cope with the kind structure of CC in the same way.
Moreover, examining this similarity yields a cleaner understanding of
both proofs.  In the context of saturated sets, it is tricky to
motivate the definition of the kind interpretation $V(\ottsym{(}  \ottmv{x}  \ottsym{:}  \ottnt{A}  \ottsym{)}  \to  \ottnt{B})$
when $A$ is a type.  For example, we might plausibly have picked
functions from values or expressions to $V(B)$ instead of just $V(B)$
itself.  But in the translation to F$_{\omega}$ we find a snappy
explanation: this definition simply erases dependency.  By unifying
the presentations and providing a common narrative through the three
proofs, we found additional clarity in each.

The papers have differences, too.  These typically result from the
motivations of the authors.  For example, the translation to
F$_{\omega}$ is certainly the simplest of the developments and
provides the most confidence in the result.  It achieves this by
relying on the existing strong normalization result for the simpler
system.  On the downside, the authors do not consider extensions.

The most complicated proof uses realizability semantics.  The
structure of the $\EE$-sets is somewhat intimidating, and in the end
the model of CC uses relatively little of its expressiveness.  The
authors aim to provide a technique which extends to ECC, but perhaps
in this they overreach: the reasoning about the structures involved is
sometimes mistaken, and it is not clear how simple it would be to
repair.  

The traditional approach using saturated sets falls somewhere between
these two.  Though this technique must be extended to cope with CC,
the proof is somewhat familiar and does not require any new
structures.  The authors succeed in extending the approach to various
common datatypes with recursion, but do not consider more complicated
additions like a predicative hierarchy of universes or large
eliminations.

The popularity of dependently-typed programming languages continues to
grow.  So, too, do their lists of features.  Our understanding of
their metatheory and of strong normalization in particular has not
quite kept pace.  Thus, while the strong normalization of CC has been
considered a settled issue for more than two decades, understanding
its fundamentals is more important now than ever.

\bibliographystyle{abbrvnat}
\bibliography{refs}

\appendix
\section{Details for Section~\ref{sec:flexible}}

\begin{theorem*}[Soundness of the interpretation]
  Suppose $\Gamma  \vdash  \ottnt{a}  \ottsym{:}  \ottnt{A}$ and $\sigma \models \rho : \Gamma$.  Then
  $\subst{a} \in \interp{A}$.
\end{theorem*}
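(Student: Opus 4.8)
The plan is to proceed by induction on the derivation $\mathcal{D}$ of $\Gamma \vdash a : A$, with one case per typing rule. Three cases are immediate. For \textsc{TVar} the term interpretation satisfies $\subst{x} = \rho(x)$, which lies in $\interp{A}$ directly from the hypothesis $\sigma \models \rho : \Gamma$. For \textsc{TSort} the only CC axiom is $(\ast,\Box)$, so the obligation is $\subst{\ast} = \ast \in \interp{\Box} = \SN$, which holds because sorts are irreducible. For \textsc{TConv} the induction hypothesis gives $\subst{a} \in \interp{A}$, and since $A =_\beta B$ with both classifying $a$, Lemma~\ref{lem:interpbeta} shows $\interp{A} = \interp{B}$, as required.

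For \textsc{TPi} the goal is $\subst{(x:A)\to B} \in \SN$, and since this expression is the product of $\subst{A}$ and $\subst{B}$, and a product type is strongly normalizing exactly when both components are, it suffices to put both in $\SN$ by the induction hypotheses. The one subtlety is extending the environments past $x$: I would map $x$ to itself in $\rho$ (a variable is a member of $\SN \cap \BASE$, hence of every saturated set) and, when $A$ is a kind, map $x$ to some $S \in V(A)$ in $\sigma$ using Lemma~\ref{lem:Vinhabited}, so that the hypothesis on the codomain applies and $\interp[\sigma]{s_2} = \SN$.

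The \textsc{TApp} case is where the substitution lemma does the work. Since $\subst{a\,b} = \subst{a}\,\subst{b}$, the induction hypotheses give $\subst{a} \in \interp{(x:A)\to B}$ and $\subst{b} \in \interp{A}$. Unfolding $\interp{(x:A)\to B}$ via the $\sarr{\cdot}{\cdot}$ construction yields $\subst{a}\,\subst{b} \in \interp{B}$ when $A$ is a $\Gamma$-type, and $\subst{a}\,\subst{b} \in \bigcap_{S \in V(A)} \interp[\ext{\sigma}{x}{S}]{B}$ when $A$ is a kind. In the first case Lemma~\ref{lem:interpsubst} gives $\interp{[b/x]B} = \interp{B}$; in the second, instantiating $S := \interp{b}$ (which lies in $V(A)$ by Lemma~\ref{lem:interptype}, as $b$ is then a $\Gamma$-constructor of kind $A$) and applying Lemma~\ref{lem:interpsubst} again gives $\interp{[b/x]B} = \interp[\ext{\sigma}{x}{\interp{b}}]{B}$, matching the required type $[b/x]B$.

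The main obstacle is the \textsc{TLam} case, which carries the only delicate reasoning about reduction. The goal is membership of $\lambda x{:}\subst{A}.\,\subst{b}$ in a $\sarr{\cdot}{\cdot}$-set, so for each argument $a \in \interp{A}$ I must place $(\lambda x{:}\subst{A}.\,\subst{b})\,a$ in the codomain interpretation (intersected over $V(A)$ when $A$ is a kind). This application key-reduces to $[a/x]\subst{b} = \subst[\ext{\rho}{x}{a}]{b}$, which the induction hypothesis on the body puts in the codomain; since a \emph{key} redex is contracted, closure under key-redex expansion reduces the obligation to showing the application is in $\SN$. That is exactly Lemma~\ref{lem:keyredexSN}, whose hypotheses $\subst{A},\subst{b},a \in \SN$ follow from the classification theorem together with Lemma~\ref{lem:interptype}. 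Threading the two choices of $s$ and the two shapes of codomain through this argument is the book-keeping-heavy heart of the proof.
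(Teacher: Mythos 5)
Your proposal is correct and follows essentially the same route as the paper's own proof: induction on the typing derivation, with the \textsc{TLam} case resolved by contracting the key redex and appealing to closure under key-redex expansion plus Lemma~\ref{lem:keyredexSN}, the \textsc{TApp} case resolved by Lemmas~\ref{lem:interptype} and~\ref{lem:interpsubst}, and the \textsc{TPi} case handled by extending $\sigma$ with a canonical inhabitant of $V(A)$ (Lemma~\ref{lem:Vinhabited}) and $\rho$ with $x \mapsto x$. The remaining cases match the paper's treatment as well.
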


\begin{proof}
  We go by induction on the typing derivation $\D$
  \begin{itemize}
  \item[{\bf Case:}] $\D = \lowered{
      \infer[TSort]
        {\vdash  \Gamma\\  ( \ottnt{s_{{\mathrm{1}}}} , \ottnt{s_{{\mathrm{2}}}} ) \in \mathcal{A} }
        {\Gamma  \vdash   \ast   \ottsym{:}   \Box }
    }$

    The only axiom is $\ottsym{(}   \ast   \ottsym{,}   \Box   \ottsym{)}$.  This case is immediate, as
    $\subst{ \ast } =  \ast  \in \SN = \interp{ \Box }$

  \item[{\bf Case:}] $\D = \lowered{
      \infer[TVar]
      {\vdash  \Gamma \\  ( \ottmv{x} : \ottnt{A} ) \in  \Gamma }
      {\Gamma  \vdash  \ottmv{x}  \ottsym{:}  \ottnt{A}}
    }$

    By the assumption $\sigma \models \rho : \Gamma$, the rule's
    second premise implies that $\subst{x} \in \interp{A}$.

  \item[{\bf Case:}] $\D = \lowered{
      \infer[TPi]
        {\infer{}{\D_1 \\\\ \Gamma  \vdash  \ottnt{A}  \ottsym{:}  \ottnt{s_{{\mathrm{1}}}}} \\ 
         \infer{}{\D_2 \\\\  \Gamma , \ottmv{x} \!:\! \ottnt{A}   \vdash  \ottnt{B}  \ottsym{:}  \ottnt{s_{{\mathrm{2}}}}} \\
          ( \ottnt{s_{{\mathrm{1}}}} , \ottnt{s_{{\mathrm{2}}}} , \ottnt{s_{{\mathrm{3}}}} ) \in \mathcal{R} }
        {\Gamma  \vdash  \ottsym{(}  \ottmv{x}  \ottsym{:}  \ottnt{A}  \ottsym{)}  \to  \ottnt{B}  \ottsym{:}  \ottnt{s_{{\mathrm{3}}}}}
    }$

    We must show that $(x : \subst{A}) \to \subst{B} \in \SN$.
    By IH for $\D_1$, $\subst{A} \in \SN$, so it only remains to show
    that $\subst{B} \in \SN$.

    The IH for $\D_2$ says that given any $\sigma'$ and $\rho'$ such
    that $\sigma' \models \rho' :  \Gamma , \ottmv{x} \!:\! \ottnt{A} $, we have $\rho' B \in
    SN$.  For $\sigma'$, pick $\sigma$ if $s_1$ is $\ast$, or extend
    it with a canonical inhabitant of $V(A)$ if $s_1$ is $ \Box $.
    For $\rho'$, pick $\rho[x \mapsto x]$.  Then $\rho' x = x$, which
    is in $\interp{A}$ since the latter must be a saturated set by
    lemma~\ref{lem:interptype}.  We conclude that $\subst{B} =
    \subst[\rho']{B} \in \SN$ as desired.

  \item[{\bf Case:}] $\D = \lowered{
      \infer[TLam]
        {\infer{}{\D_1 \\\\ \Gamma  \vdash  \ottnt{A}  \ottsym{:}  \ottnt{s}} \\
         \infer{}{\D_2 \\\\  \Gamma , \ottmv{x} \!:\! \ottnt{A}   \vdash  \ottnt{b}  \ottsym{:}  \ottnt{B}}}
       {\Gamma  \vdash  \lambda  \ottmv{x}  \ottsym{:}  \ottnt{A}  \ottsym{.}  \ottnt{b}  \ottsym{:}  \ottsym{(}  \ottmv{x}  \ottsym{:}  \ottnt{A}  \ottsym{)}  \to  \ottnt{B}}
    }$

    There are two subcases: $s$ is either $\ast$ or $\Box$.  In either
    case, the IH for $\D_1$ gives us that $\subst{A} \in \SN$.  Since
    $x$ is a bound variable, we may pick it to be fresh for the domain
    and range of $\rho$.
    \begin{itemize}
    \item Suppose $s$ is $\ast$.  Then we must show $\lambda
      x:\subst{A} . \subst{b} \in \sarr{\interp{A}}{\interp{B}}$.  So
      let $a \in \interp{A}$ be given, and observe it is enough to
      show $(\lambda x:\subst{A} . \subst{b})\;a \in \interp{B}$.

      We have $\sigma \models \rho[x \mapsto a] : G,x:A$.  Thus, the
      IH for $\D_2$ gives us $\subst[\rho[x \mapsto a\rbracket ]{b} \in
      \interp{B}$.  But we also know:
      $$
        (\lambda x:\subst{A} . \subst{b})\;a  \leadsto 
        [a/x]\subst{b} =
        \subst[ \rho[x \mapsto a\rbracket ]{b}
      $$
      And this step contracts a key redex.  So, it suffices to show
      that $(\lambda x:\subst{A} . \subst{b})\;a \in \SN$.
      
      Lemma~\ref{lem:interptype} gives us that $\interp{A}$ is a
      saturated set, so $a \in \SN$.  We already observed that
      $\subst{A} \in \SN$.  By the second IH, $\subst{b} = \subst[
      \rho[x \mapsto x\rbracket ]{b} \in \interp{B}$.  But the
      classification lemma and lemma~\ref{lem:interptype} imply that
      $\interp{B}$ is a saturated set, so $\subst{b} \in \SN$.  Thus,
      by lemma~\ref{lem:keyredexSN}, $(\lambda x:\subst{A}
      . \subst{b})\;a \in \SN$ as desired.

    \item Suppose instead that $s$ is $ \Box $.  We must show
      $\lambda x : \subst{A} . \subst{b} \in
      \sarr{\interp{A}}{\displaystyle\bigcap_{S \in V(A)}
        \interp[\ext{\sigma}{x}{S}]{B}}$.  Let an expression $a \in
      \interp{A}$ and a saturated set $S \in V(A)$ be given.  It is
      enough to show $(\lambda x:\subst{A} . \subst{b})\;a \in
      \interp[\ext{\sigma}{x}{S}]{B}$.

      Because $\ext{\sigma}{x}{S} \models \ext{\rho}{x}{a} :
       \Gamma , \ottmv{x} \!:\! \ottnt{A} $, the IH for $\D_2$ gives us that
      $\subst[\ext{\rho}{x}{a}]{b} \in
      \interp[\ext{\sigma}{x}{S}]{B}$.  As in the previous subcase, we
      can observe that:
      $$
      (\lambda x:\subst{A} . \subst{b})\;a  \leadsto 
      [a/x] \subst{b} = \subst[\ext{\rho}{x}{a}]{b}
      $$
      This step contracts a key redex, and by reasoning as before we
      find $(\lambda x:\subst{A} . \subst{b})\;a \in
      \interp[\ext{\sigma}{x}{S}]{B}$ as desired.
    \end{itemize}

  \item[{\bf Case:}] $\D = \lowered {
     \infer[App]
       {\infer{}{\D_1 \\\\ \Gamma  \vdash  \ottnt{a}  \ottsym{:}  \ottsym{(}  \ottmv{x}  \ottsym{:}  \ottnt{A}  \ottsym{)}  \to  \ottnt{B}}\\
        \infer{}{\D_2 \\\\ \Gamma  \vdash  \ottnt{b}  \ottsym{:}  \ottnt{A}}}
       {\Gamma  \vdash  \ottnt{a} \, \ottnt{b}  \ottsym{:}  \ottsym{[}  \ottnt{b}  \ottsym{/}  \ottmv{x}  \ottsym{]}  \ottnt{B}}
    }$

    We must show $\subst{a}\;\subst{b} \in \interp{[b/x]B}$, and the
    IH for $\D_2$ is that $\subst{b} \in \interp{A}$.  We consider two
    cases: by the classification lemma and inversion, $A$ is either a
    kind or a $\Gamma$-type.
 
    \begin{itemize}
    \item Suppose first that $A$ is a kind.  Then the IH for $\D_1$
      gives us that
      $$
      \subst{a} \in 
        \sarr{\interp{A}}{\displaystyle\bigcap_{S \in V(A)}
                             \interp[\ext{\sigma}{x}{S}]{B}}
      $$
      In particular, expanding the definition of $\sarr{\cdot}{\cdot}$
      and applying lemma~\ref{lem:interptype}, we have:
      $$
      \subst{a}\;\subst{b} \in \interp[\ext{\sigma}{x}{\interp{b}}]{B}
      $$
      By lemma~\ref{lem:interpsubst}, this is just what we wanted to
      show.
 
    \item The case where $A$ is a $\Gamma$-type is similar to but
      slightly simpler than the last case.
    \end{itemize}
   
  \item[{\bf Case:}] $\D = \lowered {
    \infer[Conv]
      {\infer{}{\D_1 \\\\ \Gamma  \vdash  \ottnt{a}  \ottsym{:}  \ottnt{A}} \\
       \infer{}{\D_2 \\\\ \Gamma  \vdash  \ottnt{B}  \ottsym{:}  \ottnt{s}} \\
       \ottnt{A}  =_{\beta}  \ottnt{B}}
     {\Gamma  \vdash  \ottnt{a}  \ottsym{:}  \ottnt{B}}
    }$

    The IH for $\D_1$ gives us that $\subst{a} \in \interp{A}$.  By
    the classification lemma and $\D_2$, both $A$ and $B$ are kinds or
    $\Gamma$-constructors.  Thus, by lemma~\ref{lem:interpbeta},
    $\subst{a} \in \interp{B}$ as desired.  \qedhere
  \end{itemize}
\end{proof}

\end{document}